\documentclass[final,onecolumn,12pt]{IEEEtran}
\usepackage[colorlinks, linkcolor=red, citecolor=blue]{hyperref}
\usepackage{amssymb,amsthm,amsmath,latexsym,pslatex,cite}
\usepackage{lineno,hyperref,mathtools}
\usepackage{pdflscape}
\usepackage{microtype}
\usepackage{setspace}
\usepackage{amsmath}
\usepackage{amssymb}
\usepackage{bm}
\usepackage{amssymb}
\usepackage{amsthm}
\usepackage{multirow,booktabs}
\usepackage{cases}
\usepackage{tabularx}
\usepackage{adjustbox}
\usepackage[figuresright]{rotating}
\usepackage{longtable}
\usepackage{threeparttable}
\usepackage{booktabs}
\usepackage{multirow}
\usepackage{color}
\usepackage{cite}
\usepackage{multirow,booktabs}
\usepackage{cases}
\usepackage{tabularx}
\usepackage{adjustbox}
\usepackage[figuresright]{rotating}
\usepackage{longtable}
\usepackage{booktabs}
\usepackage{multirow}
\usepackage{color}
\usepackage{lineno,hyperref,mathtools}
\usepackage{soul}
\usepackage[ruled,linesnumbered]{algorithm2e}
\usepackage{algpseudocode}
\usepackage{pifont}
\usepackage{booktabs}

\allowdisplaybreaks

\newtheorem{theorem}{Theorem}

\newtheorem{question}{Question}
\newtheorem{lemma}[theorem]{Lemma}
\newtheorem{corollary}[theorem]{Corollary}

\theoremstyle{definition} 
\newtheorem{remark}{Remark}
\newtheorem{example}{Example}
\newtheorem{definition}{Definition}

\theoremstyle{remark}

\newcommand{\F}{\mathbb{F}}

\newcommand{\C}{{\mathcal{C}}}

\newcommand{\SSS}{{\mathcal{S}}}

\newcommand{\GRS}{{\mathrm{GRS}}}
\newcommand{\EGRS}{{\mathrm{EGRS}}}

\newcommand{\RL}{{\mathrm{RL}}}

\newcommand{\vvv}{{{\mathbf{v}}}}

\newcommand{\wt}{{{\rm{wt}}}}

\newcommand{\ignore}[1]{}

\makeatletter
\newcommand{\rmnum}[1]{\romannumeral #1}
\newcommand{\Rmnum}[1]{\expandafter\@slowromancap\romannumeral #1@}
\makeatother

\begin{document}

\title{A framework for constructing non-GRS MDS-NMDS codes from deep holes and its application}
\author{Yang Li, Zhenliang Lu, San Ling, Kwok-Yan Lam
\thanks{Yang Li is with the School of Physical and Mathematical Sciences, 
Nanyang Technological University, 21 Nanyang Link, Singapore 637371, Singapore
(email: yanglimath@163.com).} 
\thanks{Zhenliang Lu is with the Department of Computer Science, City University of Hong Kong, 
Hong Kong, China (email: zhenliang.lu@cityu.edu.hk).}
\thanks{San Ling is with the School of Physical and Mathematical Sciences, 
Nanyang Technological University, 21 Nanyang Link, Singapore 637371, Singapore
(email: lingsan@ntu.edu.sg). 
He is also with VinUniversity, Vinhomes Ocean Park, Gia Lam, Hanoi 100000, Vietnam (email: ling.s@vinuni.edu.vn).} 
\thanks{Kwok-Yan Lam is with the Digital Trust Centre,
Nanyang Technological University, 50 Nanyang Drive, Singapore 639798, Singapore 
(email: kwokyan.lam@ntu.edu.sg).}
\thanks{
    This research is supported by the Nanyang Technological University Research Grant under No. 04INS000047C230GRT01  
    and the National Research Foundation, Singapore and Infocomm Media Development Authority 
    under its Trust Tech Funding Initiative. 
    Any opinions, findings and conclusions or recommendations expressed in this material are those of the authors 
    and do not reflect the views of National Research Foundation, Singapore and Infocomm Media Development Authority. 
    }
}



\maketitle
\begin{abstract}
    Maximum distance separable (MDS) codes and near MDS (NMDS) codes are of particular interest in coding theory 
    due to their optimal error-correcting capabilities and wide applications in communication, cryptography, 
    and storage systems. A family of linear codes is called a family of non-GRS MDS-NMDS codes 
    if for each $[n,k]_q$ code in the family, it is either an $[n,k,n-k+1]_q$ MDS code that is not 
    monomially equivalent to any GRS code or extended GRS code, or an $[n,k,n-k]_q$ NMDS code. 
    This paper develops a unified framework for constructing new families of non-GRS MDS-NMDS codes 
    via deep holes. We show that, starting from a family of $[n,k]_q$ non-GRS MDS-NMDS codes with covering radius $n-k$, 
    one can systematically obtain more $[n+1,k+1]_q$ non-GRS MDS-NMDS codes. 
    The proposed framework is further reformulated in terms of the second kind of extended codes. 
    {This reformulation recovers a main result of Wu, Ding, and Chen (IEEE Trans. Inf. Theory, 71(1): 263–272, 2025), 
    provides a provable reduction in the computational complexity compared with the approach of Ma, Kai, and Zhu 
    (Finite Fields Appl., 114, 102844, 2026), 
    and reveals additional structural properties of the resulting codes.} 
    As an application, we determine the covering radius and characterize two classes of deep holes of 
    extended subcodes of GRS codes. 
    By applying our framework, we obtain three new families of non-GRS MDS–NMDS codes 
    and investigate the monomial equivalence between the resulting codes and Roth–Lempel codes.
\end{abstract}

\begin{IEEEkeywords}
    Non-GRS MDS code, NMDS code, covering radius, deep hole, second kind of extended code
\end{IEEEkeywords}



\section{Introduction}

Throughout this paper, let $\F_q$ be the finite field of size $q$ 
and let $\F_q^n$ be the $n$-dimensional vector space over $\F_q$, 
where $q=p^m$ is a prime power and $n$ is a positive integer.  
An $[n,k,d]_q$ {\em linear code} $\C$ is a $k$-dimensional linear subspace of $\F_q^n$ 
with minimum distance $d=d(\C)$, and it satisfies the {\em Singleton bound}: $d\leq n-k+1$ \cite{HP2003}.    
The code $\C$ is called a {\em maximum distance separable (MDS)} code if $d = n - k + 1$, 
and is called a {\em near MDS (NMDS)} code if $d(\C)=n-k$ and $d(\C^{\perp})=k$, 
where $\C^{\perp}$ is the dual of $\C$ defined by
\[
\C^{\perp}
=\left\{{\bf x}=(x_1,x_2,\ldots,x_n)\in \F_q^n:~ 
\sum_{i=1}^{n} x_i y_i = 0,~ \forall~{\bf y}=(y_1,y_2,\ldots,y_n)\in \C\right\}.
\]
It is also well known that $\C^{\perp}$ is an $[n,n-k]_q$ MDS (resp. NMDS) code 
if and only if $\C$ is an $[n,k]_q$ MDS (resp. NMDS) code.

MDS and NMDS codes are two special types of codes in coding theory, 
which have garnered significant attention due to their optimal error-correcting capabilities, 
elegant algebraic structures and wide-ranging applications in distributed storage systems \cite{CHL2011}, 
random error channels \cite{ST2013}, informed source and index coding problems \cite{TR2018}, 
and secret sharing schemes \cite{SV2018,ZWXLQY2009}.
With these facts, it is of particular interest to construct and study families of linear codes 
in which each code is either MDS or NMDS. 
For brevity, we refer to this family of codes as {\em MDS-NMDS codes} with the specific definition as follows. 

\begin{definition}\label{def.MDS-NMDS}
    Given a family of $[n,k]_q$ linear codes $\C_k$ with $1\leq k\leq n$, 
    if both the sets 
    $$\{1\leq k\leq n:~\C_k~{\rm is~an~} [n,k,n-k+1]_q~ {\rm MDS~code}\}~{\rm and}~
    \{1\leq k\leq n:~\C_k~{\rm is~an~} [n,k,n-k]_q~ {\rm NMDS~code}\}$$ are non-empty,
    then we call $\C_k$ a family of {\em MDS-NMDS codes}. 
\end{definition}

\subsection{Non-GRS MDS-NMDS codes}

Let $\SSS=\{a_1,a_2,\ldots,a_n\}\subseteq \F_q$ be a set of $n$ distinct elements 
and let $\mathbf{v}=(v_1,v_2,\ldots,v_n)\in (\F_q^*)^n$ be a generic representation of its element. 
Let $\F_{q}[x]_k=\{f(x)=\sum_{i=0}^{k-1}f_ix^i:~f_i\in \F_q,~\forall~ i=0,1,\ldots,k-1\}.$
For any two positive integers $k$ and $n$ satisfying $1\leq k\leq n\leq q$,   
there is an $[n,k,n-k+1]_{q}$ {\em generalized Reed-Solomon (GRS) code} $\GRS_k(\SSS,\mathbf{v})$ 
such that 
$
\GRS_k(\SSS,\mathbf{v}) =  \{(v_1f(a_1),v_2f(a_2),\ldots,v_nf(a_n)):~ f(x)\in {\mathbb{F}_{q}[x]_{k}}\}.
$
By adding a coordinate to each codeword of a GRS code, 
we immediately get an $[n+1,k,n-k+2]_q$ MDS code, 
namely, an {\em extended GRS (EGRS) code} 
$
\GRS_k(\SSS,\mathbf{v},\infty) =  \{(v_1f(a_1),v_2f(a_2),\ldots,v_nf(a_n),f_{k-1}):~ f(x)\in {\mathbb{F}_{q}[x]_{k}}\},
$
where $f_{k-1}$ is the coefficient of $x^{k-1}$ in $f(x)$.     
From \cite{HP2003}, we write generator matrices of $\GRS_k(\SSS,\mathbf{v})$ 
and $\GRS_k(\SSS,\mathbf{v},\infty)$ as 
\begin{align}\label{eq.GRS.generator matrix}
    G_{\GRS_k(\SSS,\mathbf{v})}=\begin{pmatrix}    
        v_1 & v_2 & \ldots & v_{n} \\ 
        v_1a_1 & v_2a_2 &  \ldots & v_{n}a_{n} \\
        \vdots & \vdots &  \ddots & \vdots  \\
        v_1a_1^{k-1} & v_2a_2^{k-1} &  \ldots & v_{n}a_{n}^{k-1}  \\
    \end{pmatrix}~{\rm and}~
    G_{\GRS_k(\SSS,\mathbf{v},\infty)}=\begin{pmatrix}    
        v_1 & v_2 &   \ldots & v_{n} & 0 \\ 
        v_1a_1 & v_2a_2 &   \ldots & v_{n}a_{n} & 0 \\
        \vdots & \vdots &  \ddots & \vdots & \vdots \\
        v_1a_1^{k-2} & v_2a_2^{k-2} &  \ldots & v_{n}a_{n}^{k-2} & 0 \\
        v_1a_1^{k-1} & v_2a_2^{k-1} &  \ldots & v_{n}a_{n}^{k-1} & 1 \\
    \end{pmatrix},
\end{align}
respectively. 
GRS and EGRS codes are two special classes of MDS codes that 
have garnered a lot of attention due to their highly efficient encoding and decoding algorithms. 
Moreover, $[n,k,n-k+1]_q$ GRS and EGRS codes satisfy $n \leq q+1$, and together with the MDS conjecture $n \leq q+2$ 
in general except for two special cases \cite{HP2003}, they realize nearly all possible parameter sets of MDS codes, 
except for the case $n = q+2$.

However, it is still very interesting and vital to construct 
MDS codes that are not monomially equivalent to GRS and EGRS codes, referred to as {\em non-GRS MDS codes}. 
These non-GRS MDS codes are important in classifying MDS codes \cite{HP2003} 
and constructing quantum stabilizer codes with larger distances \cite{G2022,LCEL2022}. 
Additionally, although some non-GRS MDS codes with particular structures have been 
shown to be vulnerable in McEliece-type cryptanalytic schemes \cite{LR2020,CPTZ2025}, 
some efforts have confirmed that non-GRS MDS codes can exhibit significant potential 
in resisting cryptographic attacks such as the Sidelnikov–Shestakov and Wieschebrink attacks, unlike GRS codes \cite{BBPR2018,ZJ2025}.
{\em With Definition \ref{def.MDS-NMDS} in mind, it is therefore of particular interest to 
construct families of non-GRS MDS-NMDS codes.}

\subsection{Our motivations for constructing non-GRS MDS-NMDS codes from deep holes}


Given an $[n,k]_q$ linear code $\C$ and any vector $\mathbf{u}\in \F_q^n$, 
the {\em error distance} of $\mathbf{u}$ to $\C$ is defined by  
$d(\mathbf{u},\C)=\min\{d(\mathbf{u},\mathbf{c}):~\mathbf{c}\in \C\}$. 
We call $$\rho(\C)=\max\{d(\mathbf{u}, \C):~\mathbf{u}\in \F_q^n\}$$ the {\em covering radius} of $\C$.  
Then the {\em deep holes} of $\C$ are just the vectors in $\F_q^n$ whose error distances to $\C$ achieve $\rho(\C)$. 
Define the set of deep holes of $\C$ by  
\begin{align}\label{eq.deep holes}
    D(\C)=\{\mathbf{u}\in \F_q^n:~ d(\mathbf{u},\C)=\rho(\C)\}.
\end{align}

It is well known that deep holes of linear codes are closely related to syndrome decoding, 
as they characterize extremal cosets attaining the covering radius and thus, correspond to 
the most difficult instances of the decoding problem. 
However, McLoughlin \cite{M1984} has proven that the computational difficulty of determining the covering radius 
of random linear codes strictly exceeds NP-completeness, 
which further implies that the problem of determining deep holes of general linear codes is also computationally intractable.
By analyzing the structure of deep hole cosets, parameters of extended codes, 
connections with problems and results in finite geometry, syndromes, subset sum problems, 
full automorphism groups, and character sums, a large body of work has focused on deep holes of  
first-order Reed-Muller codes, GRS codes, elliptic curve codes, dual codes of Roth-Lempel codes, 
(+)-twisted GRS codes and their extended codes in 
\cite{O2012,ZW2023,D1994,K2017,KW2016,LW2008,ZCL2016,FXZ2025,LZS2025,WDC2025} and the references therein. 
Given the inherent nature of these difficult problems, these results are also of independent interest in cryptanalysis.

Returning to coding theory itself, a recently emerging direction is to construct new linear codes with good parameters 
and desirable structural properties from linear codes possessing explicit deep holes. 
{A major advance was achieved by Wu \emph{et al.} \cite{WDC2025}, 
who established a clear structural connection between the MDS property of extended codes 
and the deep holes of the dual codes. 
Building on this result, Li \emph{et al.} \cite{LZS2025} further identified precise conditions 
under which the resulting codes are not monomially equivalent to GRS codes, 
while Yang \cite{Y2025} applied a similar approach to $3$-error-correcting binary primitive BCH codes. 
These ideas have been further developed and applied in a number of recent works 
\cite{ADV2025,LJL2025,LELL2025,ZYZ2026,WHLD2024}, 
showing that characterizations in terms of covering radii and deep holes are both conceptually clear 
and effective for systematic constructions of longer MDS codes.}

{In contrast, the study of extended codes in the NMDS setting has not yet reached 
a comparable level of structural understanding. 
A recent work by Ma \emph{et al.} \cite{MKZ2026} provided a necessary and sufficient condition 
(see Theorem~1 in \cite{MKZ2026} or Lemma~\ref{lem.MKZ} in this paper) for an extended code preserving the NMDS property. 
However, unlike the MDS case \cite{WDC2025}, this characterization does not use deep holes of NMDS codes. 
Instead, it requires verifying two conditions simultaneously, including checking the linear independence 
of certain column combinations and determining the intersection sizes between hyperplanes and specific point sets. 
This verification procedure involves an exhaustive search for the extending vector, 
which leads to exponential computational complexity 
(see Theorem~\ref{th.complexity_MKZ} and Remark~\ref{rem.complexity} for details). 
Moreover, it does not provide a theoretical guarantee for the existence of a suitable extending vector. 
This difference further motivates us to develop a characterization for the NMDS case in terms of deep holes.}

{Motivated by the success of previous results for MDS codes using deep holes \cite{WDC2025}, 
the high computational complexity of existing methods for NMDS codes that do not use deep holes \cite{MKZ2026}, 
and the important role of non-GRS MDS and NMDS codes, 
we are led to the following question.} 

\begin{question}\label{ques.deep hole}
{Can we develop a unified framework to explicitly construct more non-GRS MDS-NMDS codes from deep holes, 
thereby bypassing the complex exhaustive verification and providing a structural guarantee for the non-GRS MDS and NMDS properties?}
\end{question}

\subsection{Our contributions}

This paper gives an affirmative answer to Question~\ref{ques.deep hole} for codes 
with covering radius $n-k$ and explicit deep holes. 
Our main result establishes a unified framework for constructing new families of non-GRS MDS codes 
and NMDS codes with larger lengths and dimensions from known ones with explicit deep holes, 
thereby providing a solution to Question~\ref{ques.deep hole}.

\begin{theorem}\label{th.main1}
Let $\C$ be an $[n,k]_q$ linear code with generator matrix $G$ and covering radius $\rho(\C)=n-k$. 
Suppose that $\mathbf{u}\in \F_q^{n}$ is a deep hole of $\C$. 
Let $\C'$ be the linear code generated by
\[
\begin{pmatrix}
G & {\bf 0}\\ 
\mathbf{u} & 1
\end{pmatrix}.
\]
Then $\C'$ is an $[n+1,k+1,n-k+1]_q$ MDS (resp. an $[n+1,k+1,n-k]_q$ NMDS) code if and only if 
$\C$ is an $[n,k,n-k+1]_q$ MDS (resp. an $[n,k,n-k]_q$ NMDS) code. 
Moreover, if $\C$ is non-GRS, then so is $\C'$.
\end{theorem}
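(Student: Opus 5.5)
The plan is to describe the codewords of $\C'$ and of its dual directly, and read off both distances from the deep-hole property of $\mathbf{u}$.

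\textbf{Step 1: codewords and minimum distance of $\C'$.} Every codeword of $\C'$ has the form $(\mathbf{c}+a\mathbf{u},\,a)$ with $\mathbf{c}\in\C$ and $a\in\F_q$.
\begin{itemize}
\item If $a\neq 0$, its weight is $\wt(\mathbf{c}/a+\mathbf{u})+1 \ge d(\mathbf{u},\C)+1=n-k+1$, since $\mathbf{u}$ is a deep hole and $\rho(\C)=n-k$.
\item If $a=0$, its weight is $\wt(\mathbf{c})$.
\end{itemize}
In particular $\mathbf{u}\notin\C$ when $n>k$, so $\dim\C'=k+1$. (The case $n=k$ is trivial and I would treat it separately.) It follows that
\[
d(\C')=\min\{d(\C),\,n-k+1\}.
\]
Since $\C'$ has length $n+1$ and dimension $k+1$, the Singleton bound for $\C'$ is $n-k+1$. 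Hence $\C'$ is MDS iff $d(\C)= n-k+1$, i.e. iff $\C$ is MDS. Likewise $d(\C')=n-k$ iff $d(\C)=n-k$.

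\textbf{Step 2: the dual of $\C'$.} A vector $(\mathbf{x},y)$ lies in $\C'^{\perp}$ iff $G\mathbf{x}^T=\mathbf{0}$ and $\mathbf{u}\cdot\mathbf{x}+y=0$. Therefore
\[
\C'^{\perp}=\{(\mathbf{x},-\mathbf{u}\cdot\mathbf{x}):~\mathbf{x}\in\C^{\perp}\},
\]
and $d(\C^{\perp})\le d(\C'^{\perp})\le d(\C^{\perp})+1$.

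For the converse direction of NMDS: suppose $\C'$ is NMDS. Then $d(\C)=n-k$ by Step 1. Also $d(\C^\perp)\ge d(\C'^\perp)-1=k$. Since $\C$ is not MDS, $\C^\perp$ is not MDS, so $d(\C^\perp)\le k$. Hence $d(\C^\perp)=k$ and $\C$ is NMDS.

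For the forward direction: assume $\C$ is NMDS. It remains to show that no weight-$k$ codeword $\mathbf{x}\in\C^\perp$ satisfies $\mathbf{u}\cdot\mathbf{x}=0$; this gives $d(\C'^\perp)=k+1$.

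\textbf{Step 3 (main obstacle): using the deep hole to exclude such $\mathbf{x}$.} Suppose $\mathbf{x}\in\C^\perp$ has weight $k$, support $S$, and $\mathbf{u}\cdot\mathbf{x}=0$.
\begin{enumerate}
\item Since $\mathbf{x}\in\C^\perp$, the $k$ columns of $G$ indexed by $S$ are dependent. Because $d(\C^\perp)=k$, any $k-1$ columns of $G$ are independent, so the restriction $G_S$ has rank exactly $k-1$.
\item Thus $\C|_S$ is a $(k-1)$-dimensional subspace of $\mathbf{x}_S^{\perp}$, which also has dimension $k-1$. So $\C|_S=\mathbf{x}_S^\perp$.
\item From $\mathbf{u}\cdot\mathbf{x}=0$ we get $\mathbf{u}_S\in\C|_S$. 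Hence some $\mathbf{c}\in\C$ agrees with $\mathbf{u}$ on all $k$ positions of $S$.
\item The kernel of the restriction map $\C\to\C|_S$ has dimension $1$. Let $\mathbf{c}_0\ne\mathbf{0}$ span it; then $\mathbf{c}_0$ vanishes on $S$ and is nonzero at some position $j\notin S$.
\item Choose $t\in\F_q$ so that $\mathbf{c}+t\mathbf{c}_0$ agrees with $\mathbf{u}$ at $j$ as well. This codeword agrees with $\mathbf{u}$ on $k+1$ positions, so $d(\mathbf{u},\C)\le n-k-1<\rho(\C)$.
\end{enumerate}
This contradicts $\mathbf{u}$ being a deep hole, completing the NMDS equivalence.

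\textbf{Step 4: non-GRS.} Shortening $\C'$ at its last coordinate, i.e. taking the codewords with $a=0$ and deleting that coordinate, gives exactly $\C$. Shortening a GRS or EGRS code at any coordinate yields a GRS or EGRS code, and monomial equivalence is compatible with shortening at the corresponding coordinate. So if $\C'$ were monomially equivalent to a GRS or EGRS code, $\C$ would be too. Hence, if $\C$ is non-GRS, so is $\C'$.
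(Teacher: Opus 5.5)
Your proof is correct and has the same skeleton as the paper's. Both compute $d(\C')=\min\{d(\C),n-k+1\}$ from the deep-hole property, bound $d(\C'^{\perp})\ge k+1$ by splitting on whether the last coordinate is involved, and use the same shortening argument for the non-GRS claim.

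The difference is in the key case: dual words vanishing in the last coordinate, i.e.\ $k$ columns of $G'$ taken among the first $n$.
\begin{enumerate}
\item The paper goes through Lemma~\ref{lem.mds_nmds_deep_hole}. Since $d(\C)\ge\rho(\C)$, the code $\C_{\mathbf{u}}$ generated by $G$ and $\mathbf{u}$ is $[n,k+1,n-k]_q$ MDS. By Lemma~\ref{lem.MDS}, any $k+1$ (hence any $k$) of its columns are independent. Columns involving the last coordinate are handled by the independence of any $k-1$ columns of $G$.
\item You instead write $\C'^{\perp}=\{(\mathbf{x},-\mathbf{u}\cdot\mathbf{x}):\mathbf{x}\in\C^{\perp}\}$, which settles the last-coordinate case at once. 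For the key case you use a direct restriction-to-support argument: a weight-$k$ word of $\C^{\perp}$ orthogonal to $\mathbf{u}$ would give a codeword agreeing with $\mathbf{u}$ in $k+1$ positions.
\end{enumerate}
Your route is more elementary. At that step it needs only $d(\C^{\perp})\ge k$ and the deep-hole property, not $d(\C)\ge n-k$. The paper's route yields the stronger fact that $\C_{\mathbf{u}}$ is MDS, which it reuses for ESGRS codes in Theorem~\ref{th.deep hole}.

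Your sandwich $d(\C^{\perp})\le d(\C'^{\perp})\le d(\C^{\perp})+1$ also makes the implication ``$\C'$ NMDS $\Rightarrow$ $\C$ NMDS'' explicit. The paper derives the dual-distance bound only assuming $\C$ is MDS or NMDS, and leaves this direction implicit.

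One minor correction: $\dim\C'=k+1$ holds because of the appended column $(0,\ldots,0,1)^{T}$, whether or not $\mathbf{u}\in\C$. The fact $\mathbf{u}\notin\C$ is what gives $\dim\C_{\mathbf{u}}=k+1$.
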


As stated in Corollary~\ref{coro.family}, by taking $\C$ to be a family of $[n,k]_q$ non-GRS MDS-NMDS codes with covering radius $n-k$, 
Theorem~\ref{th.main1} directly yields the existence of another family of $[n+1,k+1]_q$ non-GRS MDS-NMDS codes. 
Furthermore, our main contributions can be summarized as follows.

\begin{itemize}
\item[\rm 1)] We present an equivalent formulation of Theorem~\ref{th.main1} 
in terms of the second kind of extended codes \cite{SDC2024FFA,SDC2024DM} in Theorem~\ref{th.equivalent_framework}. 
As a special case, our result recovers a main construction of Wu \emph{et al.}~\cite{WDC2025} 
for generating additional MDS codes. 
In contrast to~\cite{WDC2025}, our framework further extends to the NMDS setting 
and accommodates the non-GRS property. 
Necessary conditions for the applicability of our main result are provided in 
Remark~\ref{rem.111}, Table~\ref{table.covering radius}, and Example~\ref{exam.counterexample}. 
Moreover, as discussed in Theorem \ref{th.complexity_MKZ} and Remark~\ref{rem.complexity}, 
Theorems~\ref{th.main1} and~\ref{th.equivalent_framework} 
lead to a provable reduction in computational complexity 
compared with the approach in~\cite{MKZ2026}. 

\item[\rm 2)] As an application of Theorem~\ref{th.main1}, we consider the specific family of $[n+1,k]_q$ 
non-GRS MDS-NMDS extended subcodes of GRS (ESGRS) codes $\C_k(\SSS,{\bf v},\infty)$ (see Definition~\ref{def.codes}) 
and derive new families of $[n+2,k+1]_q$ non-GRS MDS-NMDS codes. 
We show that the covering radius of $\C_k(\SSS,{\bf v},\infty)$ equals $n-k+1$ in Lemma~\ref{lem.covering radius},  
and characterize criteria for identifying deep holes of ESGRS codes in Theorems~\ref{th.deep hole} and~\ref{th.deep hole g(x)}.

\item[\rm 3)] We further characterize two classes of received vectors that form deep holes of $\C_k(\SSS,{\bf v},\infty)$ in 
Theorems~\ref{th.specific deep hole1} and~\ref{th.specific deep hole2}, respectively, 
by establishing connections with Roth-Lempel codes and elementary symmetric functions. 
Specifically, the first class corresponds to polynomials of the form
\[
g(x)=g_{k-1}x^{k-1}+f(x),\quad g_{k-1}\in \F_q^*,\quad f(x)\in \mathcal{V}_k,
\]
while the second class is associated with polynomials of the form
\[
g(x)=g_{k+1}x^{k+1}+g_{k-1}x^{k-1}+f(x),\quad 
g_{k+1}\in \F_q^*,\quad g_{k-1}\in \F_q,\quad f(x)\in \mathcal{V}_k, 
\]
where $\mathcal{V}_k=\{f(x)=\sum_{i=0}^{k-2}f_ix^i+f_kx^k:~f_i\in \F_q,~\forall~ i=0,1,\ldots,k-2,k\}.$

\item[\rm 4)] By applying the deep holes obtained in~3) together with the improved non-GRS MDS properties of ESGRS codes established in Theorem~\ref{th.non-GRS}, 
we construct three additional families of $[n+2,k+1]_q$ non-GRS MDS-NMDS codes in Theorems~\ref{th.more non-GRS MDS codes111} and~\ref{th.more non-GRS MDS codes222}. 
We further analyze their monomial equivalence with Roth-Lempel codes in Theorem~\ref{th.more non-GRS MDS codes111} and Remark~\ref{rem.answer_to_Problem1}.2). 
In particular, our results reveal the existence of a special subclass of Roth-Lempel codes that are non-GRS MDS-NMDS codes, as stated in Remark~\ref{rem.answer_to_Problem1}.1). 
For practical implementation, we provide an explicit algorithm for generating such non-GRS MDS-NMDS codes with larger lengths and dimensions in Algorithm~\ref{alg.1}. 
Several concrete examples are also included in Examples~\ref{exam.4} and~\ref{exam.5} to illustrate our constructions.
\end{itemize}

This paper is organized as follows. 
After the introduction, Section~\ref{sec2.preliminaries} reviews fundamental notation and preliminary results. 
Section~\ref{sec.3} addresses Question~\ref{ques.deep hole} by establishing a unified framework for constructing new families of non-GRS MDS-NMDS codes with larger lengths and dimensions from known ones with explicit deep holes. 
Section~\ref{sec.4} applies the framework developed in Section~\ref{sec.3} to derive further families of non-GRS MDS-NMDS codes with larger lengths and dimensions, and investigates the monomial equivalence between these newly constructed non-GRS MDS codes and Roth-Lempel codes. 
Finally, Section~\ref{sec.concluding remarks} concludes the paper.

\section{Preliminaries}\label{sec2.preliminaries}

Before proceeding, we first introduce some notation that will be used throughout this paper. 
For distinct elements $a_{1},a_2,\ldots,a_{n}\in \F_q$ and nonzero elements $v_1,v_2,\ldots,v_n\in \F_q$, 
we let $\SSS=\{a_{1},a_2,\ldots,a_{n}\}$ and ${\bf v}=(v_1,v_2,\ldots,v_n)$. 
For the subset $\SSS\subseteq \F_q$, we call $\SSS$ an {\em $(n,k,\delta)$-set} if 
$$|\SSS|=n~{\rm and}~a_{i_1}+a_{i_2}+ \cdots +a_{i_k}\neq \delta~
{\rm for~ any}~\{a_{i_1},a_{i_2},\ldots,a_{i_k} \} \subseteq \SSS.$$ 
Moreover, $\SSS$ is an $(n,k,0)$-set if and only if $\SSS$ is {\em $k$-zero-sum free}. 
Conversely, we say that {\em $\SSS$ contains a $k$-zero-sum subset} if $\SSS$ is not $k$-zero-sum free. 
For any two vectors ${\bf a}=(a_1,a_2,\ldots,a_n)$ and ${\bf b}=(b_1,b_2,\ldots,b_n)$ in $\F_q^n$, 
we define their {\em star product} by ${\bf a}\star {\bf b}=(a_1b_1,a_2b_2,\ldots,a_nb_n)$. 
Let ${\bf 0}$ (resp. ${\bf 1}$) be an appropriate {\em row} or {\em column vector} of all zeros (resp. ones). 
Let $\F_q[x]$ be the polynomial ring over $\F_q$. 
Denote by  
$$\F_{q}[x]_k=\left\{f(x)=\sum_{i=0}^{k-1}f_ix^i:~f_i\in \F_q,~\forall~ i=0,1,\ldots,k-1\right\}$$
and 
$$\mathcal{V}_k=\left\{f(x)=\sum_{i=0}^{k-2}f_ix^i+f_kx^k:~f_i\in \F_q,~\forall~ i=0,1,\ldots,k-2,k\right\}$$ 
two $k$-dimensional $\F_q$-linear spaces of polynomials. 

With the above notation, we can define Roth-Lempel codes \cite{RL1989} and extended subcodes of GRS codes \cite{ADV2025,LSZ2024}. 

\begin{definition}{\rm (\!\! \cite{RL1989})}\label{def.Roth-Lempel code}
    Let $k$ and $n$ be two positive integers satisfying $3\leq k\leq n\leq q$. 
    For any $\delta\in \F_q$, a {\em Roth-Lempel code}, 
    denoted by $\RL_{k,\delta}(\SSS)$, is an $[n+2,k]_q$ linear code defined by  
    \begin{align*}
        \begin{split}
            \RL_{k,\delta}(\SSS) =  \{(f(a_1),f(a_2),\ldots,f(a_n),f_{k-1}, 
            f_{k-2}+\delta f_{k-1}):~ f(x)\in \F_q[x]_k \},   
        \end{split}
    \end{align*}
    where $f_{k-2}$ and $f_{k-1}$ are the coefficients of $x^{k-2}$ and $x^{k-1}$ in $f(x)$, respectively. 
    Moreover, it has a generator matrix of the form  
    \begin{align}\label{eq.ESGRS_generator_matrix of Roth-Lempel code}
        G_{\RL_{k,\delta}(\SSS)}=\left(\begin{array}{cccccc}
            1 & 1 &  \ldots & 1 & 0 & 0 \\
            a_1 & a_2 & \ldots & a_n & 0 & 0 \\
            \vdots  & \vdots  & \ddots & \vdots & \vdots & \vdots \\
            a^{k-3}_1  & a^{k-3}_2  & \ldots & a^{k-3}_n & 0 & 0 \\
            a^{k-2}_1  & a^{k-2}_2 & \ldots & a^{k-2}_n & 0 & 1 \\
            a^{k-1}_1  & a^{k-1}_2 & \ldots & a^{k-1}_n & 1 & \delta \\
        \end{array}\right).
    \end{align}
\end{definition}

In \cite{RL1989}, Roth and Lempel derived a necessary and sufficient condition 
for a Roth-Lempel code to be non-GRS MDS.

\begin{lemma}{\rm (\!\! \cite[Page 655]{RL1989})}\label{lem.RL_code_MDS}
    For any $3\leq k\leq n\leq q$ and $\delta\in \F_q$,  
    $\RL_{k,\delta}(\SSS)$ is an $[n+2,k,n-k+3]_q$ non-GRS MDS code if and only if 
    $\SSS$ is an $(n,k-1,\delta)$-set.
\end{lemma}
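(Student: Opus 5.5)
We prove the two directions of the MDS criterion by a direct minor computation, and then treat the non-GRS claim separately. Write $N=n+2$ and $G=G_{\RL_{k,\delta}(\SSS)}$ as in \eqref{eq.ESGRS_generator_matrix of Roth-Lempel code}. Since a $[N,k]_q$ code is MDS if and only if every $k\times k$ minor of a generator matrix is nonzero, we split the $k$-subsets of columns according to how many of the last two columns they contain.

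\emph{Only columns from $\SSS$.} The minor is a Vandermonde determinant $V(a_{i_1},\ldots,a_{i_k})\neq 0$.

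\emph{Column $n+1$ together with $k-1$ columns from $\SSS$.} Column $n+1$ is $(0,\ldots,0,1)^T$. Expanding along it leaves the Vandermonde determinant on the $k-1$ chosen points, with rows of exponents $0,\ldots,k-2$. Hence the minor is $\pm V\neq 0$.

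\emph{Column $n+2$ together with $k-1$ columns $a_{i_1},\ldots,a_{i_{k-1}}$.} Column $n+2$ is $(0,\ldots,0,1,\delta)^T$, and we expand along it.
\begin{itemize}
\item Deleting the last row leaves the ordinary Vandermonde $V=V(a_{i_1},\ldots,a_{i_{k-1}})$.
\item Deleting the row of exponent $k-2$ leaves the rows of exponents $0,\ldots,k-3,k-1$. This is the generalized Vandermonde determinant, which by the classical identity (Schur function $s_{(1)}=e_1$) equals $V\cdot(a_{i_1}+\cdots+a_{i_{k-1}})$.
\end{itemize}
Keeping track of the signs, the minor equals $\pm V\,(\delta-(a_{i_1}+\cdots+a_{i_{k-1}}))$. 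It vanishes exactly when the subset sum equals $\delta$.

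\emph{Both columns $n+1,n+2$ together with $k-2$ columns from $\SSS$.} The matrix is block triangular: the top $(k-2)\times(k-2)$ Vandermonde block sits in rows $0,\ldots,k-3$, and below it is the block $\begin{pmatrix}0&1\\1&\delta\end{pmatrix}$ of determinant $-1$. So the minor is $\pm V\neq0$.

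Combining the four cases, $\RL_{k,\delta}(\SSS)$ is MDS if and only if no $(k-1)$-subset of $\SSS$ sums to $\delta$, i.e.\ if and only if $\SSS$ is an $(n,k-1,\delta)$-set.

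It remains to show that the code, when MDS, is not monomially equivalent to a GRS or EGRS code; this is the step we expect to be the main obstacle. The plan is to use the Schur (star) square. A $[N,k]$ GRS or EGRS code $\C$ satisfies $\dim \C\star\C=\min\{N,2k-1\}$, so it suffices to exhibit $\dim(\RL\star\RL)\ge 2k$ whenever $N\ge 2k$.

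The star square contains the vectors obtained from products $f g$ with $\deg(fg)\le 2k-2$, with last two coordinates $f_{k-1}g_{k-1}$ and $(f_{k-2}+\delta f_{k-1})(g_{k-2}+\delta g_{k-1})$. The candidate extra vector comes from comparing two factorizations with the same product polynomial. Take $f=x^{k-1}$, $g=x^{k-3}$ on the one hand, and $f=g=x^{k-2}$ on the other. Both give the polynomial $x^{2k-4}$, so they agree on the $\SSS$-coordinates.

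\begin{itemize}
\item For $x^{k-1}\cdot x^{k-3}$, the last coordinate is $\delta\cdot 0=0$, since $g=x^{k-3}$ has $g_{k-2}=g_{k-1}=0$.
\item For $x^{k-2}\cdot x^{k-2}$, the last coordinate is $1\cdot 1=1$, since $f_{k-2}=1$ and $f_{k-1}=0$.
\end{itemize}
Their difference is therefore the unit vector $(0,\ldots,0,1)$. This is independent of the $(2k-1)$-dimensional space of evaluations of polynomials of degree at most $2k-2$ (which is all of that dimension once $n\ge 2k-1$), giving dimension $\ge 2k$.

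This dimension bound needs enough length. When $n$ is small relative to $k$, we would pass to the dual code: it is again MDS, it is GRS exactly when the original is, and it has dimension $N-k$, so one of $\RL$ or $\RL^\perp$ has dimension at most $N/2$. For the dual we would either verify a similar star-square excess directly from a parity-check matrix, or shorten and puncture on coordinates in $\SSS$, which preserves GRS-ness, to reduce to a configuration where $N\ge 2k$. Handling this small-length regime cleanly, including the boundary case $N=2k-1$, is where we expect the most care to be needed.
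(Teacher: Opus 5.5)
The paper does not prove this lemma; it is quoted from Roth--Lempel, so your argument has to stand on its own. Your minor computation for the MDS equivalence is correct and complete. The gap is in the non-GRS half, and it is more than a matter of care in the small-length regime: under the stated hypothesis $k\le n$, the claim is false at $k=n$. Put $s=\sum_{a\in\SSS}a$. The $(n-1)$-subsets of $\SSS$ have sums $s-a_i$. So if $n\le q-1$ and $\delta=s-\beta$ with $\beta\in\F_q\setminus\SSS$, then $\SSS$ is an $(n,n-1,\delta)$-set and $\RL_{n,\delta}(\SSS)$ is an $[n+2,n,3]_q$ MDS code. Its dual is an $[n+2,2,n+1]_q$ MDS code, and every $2$-dimensional MDS code is monomially equivalent to a GRS or EGRS code: scale its pairwise independent columns to $(1,b_i)^T$ or $(0,1)^T$. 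Hence $\RL_{n,\delta}(\SSS)$ is itself monomially equivalent to a GRS or EGRS code; for instance, take $q=5$, $\SSS=\{0,1,2\}$, $\delta=0$. Your ``pass to the dual'' step therefore cannot succeed there, since your kernel vector uses $x^{k'-3}$ with $k'=N-k$, which needs $k'\ge 3$. The non-GRS assertion must be restricted to $3\le k\le n-1$, as in Roth--Lempel. The paper itself only ever uses the MDS part, with $k+1\le n-1$.

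For $3\le k\le n-1$ your Schur-square strategy does work, but as written it only covers $N\ge 2k+1$, and two pieces are missing. First, $N=2k$ is not reached by duality, because the code and its dual both have dimension $N/2$. It is covered directly if you project onto the first $n+1$ coordinates instead of the first $n$. For $\sum_j c_j\mathbf{c}_{f_j}\star\mathbf{c}_{g_j}$, the $(n+1)$th coordinate is the $x^{2k-2}$-coefficient of $P=\sum_j c_jf_jg_j$. So this projection of $\RL\star\RL$ is $\{(P(a_1),\ldots,P(a_n),P_{2k-2}):\deg P\le 2k-2\}$, of dimension $\min\{n+1,2k-1\}$. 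Your unit vector $e_{n+2}$ lies in the kernel, which gives $\dim(\RL\star\RL)\ge 2k$ for every $N\ge 2k$. Second, for $N\le 2k-1$ you must actually identify the dual. Set $u_i=\prod_{j\ne i}(a_i-a_j)^{-1}$ and use $\sum_i u_iP(a_i)=P_{n-1}+sP_n$ for $\deg P\le n$. One then checks that $\RL_{k,\delta}(\SSS)^{\perp}=\bigl\{\bigl((u_ih(a_i))_{i=1}^{n},\,-h_{n-k}-(s-\delta)h_{n-k+1},\,-h_{n-k+1}\bigr):\deg h\le n-k+1\bigr\}$. After swapping and negating the last two coordinates, this is monomially equivalent to $\RL_{N-k,\,s-\delta}(\SSS)$. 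Since $N>2(N-k)$ and $N-k\ge 3$, the first step applies to it. So the boundary case you flagged, $N=2k-1$, is harmless; the real one is $N=2k$.
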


\begin{definition}{\rm (\!\! \cite{ADV2025,LSZ2024})}\label{def.codes}
Let $n$ and $k$ be two positive integers satisfying $3\leq k\leq n-2\leq q-2$. 
An {\em extended subcode of GRS (ESGRS) code} associated with $\SSS$ and ${\bf v}$, denoted by $\C_k(\SSS,{\bf v},\infty)$
        is an $[n+1,k]_q$ linear code 
        \begin{align*}
            \C_k(\SSS,{\bf v},\infty)=
            \{(v_1f(a_1),v_2f(a_2),\ldots,v_nf(a_n),f_{k}): ~f(x)\in \mathcal{V}_k\},  
        \end{align*}
    where 
    $f_k$ is the coefficient of $x^k$ in $f(x)$. 
    It has a generator matrix of the form 
        \begin{align}\label{eq.ESGRS_generator_matrix}
            G_{\C_k(\SSS,{\bf v},\infty)}=\begin{pmatrix}    
        v_1 & v_2 &  \ldots & v_{n} & 0 \\ 
        v_1a_1 &  v_2a_2 &   \ldots & v_{n}a_{n} & 0 \\
        \vdots & \vdots &   \ddots & \vdots & \vdots \\
        v_1a_1^{k-2} &  v_2a_2^{k-2} &   \ldots & v_{n}a_{n}^{k-2} & 0 \\
        v_1a_1^k &  v_2a_2^k &  \ldots & v_{n}a_{n}^k & 1 \\
            \end{pmatrix} 
    \end{align}
\end{definition}

ESGRS codes were first introduced by Li $et~al.$ \cite{LSZ2024} and further developed by Abdukhalikov $et~al.$ \cite{ADV2025}. 
In \cite{LSZ2024}, it was shown that ESGRS codes form a family of MDS-NMDS codes, 
and necessary and sufficient conditions were established for an ESGRS code to be an MDS code or an NMDS code.
We rephrase this result in the following lemma. 

\begin{lemma}{\rm (\!\! \cite[Theorem 4]{LSZ2024})}\label{lem.must be MDS or NMDS} 
    ESGRS codes are a family of MDS-NMDS codes.
    Furthermore, the following statements hold. 
    \begin{enumerate}
        \item [\rm 1)] $\C_k(\SSS,{\bf v},\infty)$ is an $[n+1,k,n-k+2]_q$ MDS code if and only if $\SSS$ is $k$-zero-sum free. 
        \item [\rm 2)] $\C_k(\SSS,{\bf v},\infty)$ is an $[n+1,k,n-k+1]_q$ NMDS code if and only if $\SSS$ contains a $k$-zero-sum subset. 
    \end{enumerate}
\end{lemma}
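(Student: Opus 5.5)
The approach is to read off both the MDS and NMDS properties directly from the generator matrix $G_{\C_k(\SSS,{\bf v},\infty)}$ in \eqref{eq.ESGRS_generator_matrix}. An $[n+1,k]_q$ code is MDS if and only if all $k\times k$ minors of a generator matrix are nonzero. Its dual distance is at least $k$ if and only if every $k-1$ columns are linearly independent.

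First I will classify the $k\times k$ minors. For a minor using only columns among the first $n$, indexed by $\{i_1,\ldots,i_k\}$, I can pull out $v_{i_1}\cdots v_{i_k}$. What remains is the determinant of the matrix with rows $a^0,a^1,\ldots,a^{k-2},a^k$. By the classical formula for a Vandermonde matrix with one missing exponent (a Schur-function identity), this equals
\[
(a_{i_1}+\cdots+a_{i_k})\prod_{s<t}(a_{i_t}-a_{i_s}).
\]
I expect establishing or citing this identity to be the main technical step. I would prove it by viewing $\det$ with last column variable $x$ as $\prod(x-a_{i_s})\cdot(\text{Vandermonde})$ for the full $k+1$ rows $a^0,\ldots,a^k$, and extracting the coefficient of $x^{k-1}$.

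For a minor containing the last column, I expand along that column; it has a single $1$ in the last row. This leaves a $(k-1)\times(k-1)$ scaled Vandermonde minor, which is nonzero. Hence $\C_k(\SSS,{\bf v},\infty)$ is MDS if and only if no $k$-subset of $\SSS$ sums to $0$, that is, if and only if $\SSS$ is $k$-zero-sum free. This gives 1).

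Next I show that, regardless of $\SSS$, we always have $d\ge n-k+1$ and $d(\C^\perp)\ge k$. For the minimum distance, take a nonzero $f\in\mathcal V_k$. If $f_k\neq0$, then $f$ has at most $k$ roots in $\SSS$ and the last coordinate is nonzero, so the weight is at least $n-k+1$. If $f_k=0$, then $\deg f\le k-2$, so the weight is at least $n-k+2$.

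For the dual distance, any $k-1$ columns among the first $n$ are independent, because their top $k-1$ rows form a scaled Vandermonde matrix. Now take the last column together with $k-2$ other columns. The last column is the unit vector $e_k$, and the other $k-2$ columns restricted to the top $k-1$ rows have rank $k-2$ (again Vandermonde). So these $k-1$ columns are independent as well.

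If $\C$ is not MDS, then $d=n-k+1$. Moreover $\C^\perp$ is not MDS either, since MDS-ness is preserved under duality. Singleton applied to the $[n+1,n+1-k]$ dual then gives $d(\C^\perp)\le k$, hence $d(\C^\perp)=k$. So $\C$ is NMDS exactly when it is not MDS, which by the above happens precisely when $\SSS$ contains a $k$-zero-sum subset. This proves 2) and shows every ESGRS code is MDS or NMDS.

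For the family statement, it remains to exhibit, for a fixed admissible $\SSS$, values of $k$ of both kinds. I would do this by choosing $\SSS$ containing a $k$-zero-sum subset for some $k$ and being $k'$-zero-sum free for another $k'$. Alternatively, I would simply cite the examples in \cite{LSZ2024}.
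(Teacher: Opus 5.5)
Your proof is correct. The paper does not actually prove this lemma; it imports it from \cite[Theorem 4]{LSZ2024}. So your argument is a self-contained replacement for that citation rather than a variant of anything in the text.

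It does line up with the tools the paper uses elsewhere. Your identity $\det(a^0,\ldots,a^{k-2},a^k)=\sigma_1(\SSS_k)\prod_{s<t}(a_{i_t}-a_{i_s})$ is the special case of Lemma~\ref{lem.det of generalized Vandermone matrix} with the single missing exponent $k-1$. Your derivation of it, via the coefficient of $x^{k-1}$ in the $(k+1)\times(k+1)$ Vandermonde with a variable column, is correct including the sign. Your NMDS argument is the distance-language version of Lemma~\ref{lem.NMDS}:
\begin{itemize}
\item the root count giving $d\ge n-k+1$ is exactly condition 3), that any $k+1$ columns have rank $k$;
\item the Vandermonde argument for any $k-1$ columns is condition 1);
\item failure of the MDS property is condition 2).
\end{itemize}
Your observation that the code is NMDS precisely when it is not MDS is what makes 2) the complement of 1).

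The one loose end is the ``family'' clause. Under Definition~\ref{def.MDS-NMDS} it needs a single fixed $\SSS$ (fixed length) for which both types occur as $k$ varies. That cannot hold for every $\SSS$: for instance, Lemma~\ref{lem.q+1} shows that $\SSS=\F_q$ with $q$ odd gives only NMDS codes. So you need an explicit witness rather than ``I would choose''. One that works is $\SSS=\{3,4,5,6,7,8\}\subseteq\F_{11}$ with $n=6$, so $k\in\{3,4\}$ is admissible.
\begin{itemize}
\item Every $3$-subset has integer sum in $[12,21]$, which contains no multiple of $11$. So $\SSS$ is $3$-zero-sum free and $\C_3(\SSS,{\bf v},\infty)$ is MDS.
\item $3+5+6+8=22\equiv 0 \pmod{11}$. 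So $\SSS$ contains a $4$-zero-sum subset and $\C_4(\SSS,{\bf v},\infty)$ is NMDS.
\end{itemize}
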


For general linear codes, their MDS and NMDS properties can be characterized by their generator matrices.

\begin{lemma}{\rm (\!\!\cite[Lemma 7.3]{B2015})}\label{lem.MDS} 
    Let $\C$ be an $[n,k]_q$ linear code with generator matrix $G$. 
    Then $\C$ is MDS if and only if any $k$ columns of $G$ are linearly independent. 
\end{lemma}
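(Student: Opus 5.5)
The plan is to translate the minimum distance of $\C$ into a statement about how many columns of $G$ can lie in a common hyperplane of $\F_q^k$. Write $G=(\mathbf{g}_1,\mathbf{g}_2,\ldots,\mathbf{g}_n)$ with columns $\mathbf{g}_j\in\F_q^k$. Every codeword has the form $\mathbf{c}=\mathbf{m}G$ for a unique $\mathbf{m}\in\F_q^k$, since $G$ has full row rank $k$. Its $j$-th coordinate is $\mathbf{m}\cdot\mathbf{g}_j$. Hence, for $\mathbf{m}\neq \mathbf{0}$,
\[
\wt(\mathbf{m}G)=n-\bigl|\{j:~\mathbf{g}_j\in \mathbf{m}^{\perp}\}\bigr|,
\]
where $\mathbf{m}^{\perp}\subseteq\F_q^k$ is a hyperplane of dimension $k-1$. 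So $d(\C)=n-k+1$ holds exactly when no hyperplane of $\F_q^k$ contains $k$ of the columns, counted with their positions.

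For sufficiency, assume any $k$ columns of $G$ are linearly independent. Fix a nonzero $\mathbf{m}$. If $k$ distinct indices had $\mathbf{g}_j\in\mathbf{m}^{\perp}$, those $k$ columns would be $k$ independent vectors in a $(k-1)$-dimensional space, which is impossible. Thus at most $k-1$ coordinates of $\mathbf{m}G$ vanish, and $\wt(\mathbf{m}G)\geq n-k+1$. Combined with the Singleton bound $d\leq n-k+1$, this gives $d(\C)=n-k+1$, so $\C$ is MDS.

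For necessity, I argue by contraposition. Suppose some $k$ columns $\mathbf{g}_{j_1},\ldots,\mathbf{g}_{j_k}$ are linearly dependent. Their span $W$ then has dimension at most $k-1$, so there is a nonzero $\mathbf{m}$ with $\mathbf{m}\perp W$. The codeword $\mathbf{m}G$ is nonzero because $G$ has rank $k$. It vanishes at the $k$ positions $j_1,\ldots,j_k$, so its weight is at most $n-k<n-k+1$, and $\C$ is not MDS.

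There is no real obstacle in this argument. The only points that need care are the two uses of the rank-$k$ assumption on $G$: it makes $\mathbf{m}\mapsto\mathbf{m}G$ injective, so that nonzero $\mathbf{m}$ give nonzero codewords, and it justifies the counting identity for the weight. The degenerate case $k=n$ needs no separate treatment, since then $G$ is invertible and both sides of the equivalence hold trivially.
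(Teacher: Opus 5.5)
The paper gives no proof of this lemma; it cites it from \cite{B2015}. Your proof is correct. It is the standard argument for this fact: $\mathbf{m}G$ vanishes at position $j$ exactly when $\mathbf{g}_j\in\mathbf{m}^{\perp}$, so $\C$ is MDS exactly when no hyperplane of $\F_q^k$ contains $k$ of the columns. That in turn is equivalent to any $k$ columns of $G$ being linearly independent.

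One small correction to your closing commentary. The identity $\wt(\mathbf{m}G)=n-|\{j:~\mathbf{g}_j\in\mathbf{m}^{\perp}\}|$ is just the definition of weight and holds for every $\mathbf{m}$, with no rank hypothesis. The injectivity of $\mathbf{m}\mapsto\mathbf{m}G$ is used only in the necessity direction, to guarantee that the nonzero $\mathbf{m}\perp W$ you chose yields a nonzero codeword.
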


\begin{lemma}{\rm (\!\!\cite[Lemma 3.1]{DL1995})}\label{lem.NMDS} 
    Let $\C$ be an $[n,k]_q$ linear code with generator matrix $G$. 
    Then $\C$ is NMDS if and only if $G$ satisfies the following three conditions:  
    \begin{enumerate}
        \item [\rm 1)] any $k-1$ columns of $G$ are linearly independent;
        \item [\rm 2)] there exist $k$ linearly dependent columns in $G$; 
        \item [\rm 3)] any $k+1$ columns of $G$ have full rank $k$.
    \end{enumerate} 
\end{lemma}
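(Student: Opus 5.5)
The plan is to translate both defining conditions of an NMDS code into statements about the columns $\mathbf{g}_1,\ldots,\mathbf{g}_n\in\F_q^k$ of $G$. Two standard facts do the work.

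\textbf{Fact A (dual distance).} A vector $\mathbf{x}\in\F_q^n$ lies in $\C^{\perp}$ if and only if $\sum_i x_i\mathbf{g}_i=\mathbf{0}$. Hence codewords of $\C^{\perp}$ of weight $w$ correspond exactly to linear dependencies among $w$ columns of $G$, and $d(\C^{\perp})$ is the smallest number of linearly dependent columns of $G$.

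\textbf{Fact B (primal weights).} Every codeword has the form $\mathbf{c}=\mathbf{y}G$ with $\mathbf{y}\in\F_q^k$. Its $i$-th coordinate is $\mathbf{y}\cdot\mathbf{g}_i$. Since $G$ has rank $k$, $\mathbf{c}\neq\mathbf{0}$ if and only if $\mathbf{y}\neq\mathbf{0}$. For such $\mathbf{y}$, the zero coordinates of $\mathbf{c}$ are exactly the indices $i$ with $\mathbf{g}_i$ in the hyperplane $\mathbf{y}^{\perp}$. So $\wt(\mathbf{c})=n-|\{i:\mathbf{g}_i\in\mathbf{y}^{\perp}\}|$.

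\textbf{Step 1: the dual condition.} By Fact A, $d(\C^{\perp})=k$ holds if and only if every $k-1$ columns are independent and some $k$ columns are dependent. These are exactly conditions 1) and 2).

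\textbf{Step 2: the lower bound $d(\C)\ge n-k$.} By Fact B, $d(\C)\ge n-k$ holds if and only if no hyperplane of $\F_q^k$ contains $k+1$ columns of $G$. Now $k+1$ columns lie in some hyperplane if and only if their span has dimension at most $k-1$. So $d(\C)\ge n-k$ is equivalent to condition 3): any $k+1$ columns have rank $k$.

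\textbf{Step 3: the upper bound $d(\C)\le n-k$.} Assuming condition 2), take $k$ dependent columns. Their span has dimension at most $k-1$, so it lies in some hyperplane $\mathbf{y}^{\perp}$ with $\mathbf{y}\neq\mathbf{0}$. By Fact B, $\mathbf{y}G$ is a nonzero codeword of weight at most $n-k$.

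\textbf{Assembling the equivalence.}
\begin{itemize}
\item[$(\Leftarrow)$] Conditions 1) and 2) give $d(\C^{\perp})=k$ by Step 1. Condition 3) gives $d(\C)\ge n-k$ by Step 2, and condition 2) gives $d(\C)\le n-k$ by Step 3. So $\C$ is NMDS.
\item[$(\Rightarrow)$] $d(\C^{\perp})=k$ yields conditions 1) and 2) by Step 1. $d(\C)=n-k$ yields condition 3) by Step 2.
\end{itemize}

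\textbf{Main obstacle.} The only delicate point is Step 2, namely the equivalence between a codeword having at least $k+1$ zeros and $k+1$ columns lying in a common hyperplane. This rests on the rank-$k$ hypothesis on $G$, which ensures that nonzero $\mathbf{y}$ give nonzero codewords. One must also note the implicit assumption $n\ge k+1$, without which condition 3) is vacuous and the code cannot be NMDS.
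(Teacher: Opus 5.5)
Your proof is correct. The paper does not prove this lemma; it quotes it from Dodunekov and Landjev. It only uses the implication ``NMDS $\Rightarrow$ any $k-1$ columns of $G$ are linearly independent'', in the proof of Theorem~\ref{th.connection deep holes and extened codes}.

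Your argument is the standard self-contained proof of the cited result:
\begin{itemize}
\item Fact A turns $d(\C^{\perp})=k$ into conditions 1) and 2).
\item Fact B turns $d(\C)\ge n-k$ into condition 3): no hyperplane $\mathbf{y}^{\perp}$ contains $k+1$ columns, i.e., any $k+1$ columns span $\F_q^k$.
\item Condition 2) alone forces $d(\C)\le n-k$.
\end{itemize}
The rank-$k$ hypothesis on $G$ is exactly what makes $\mathbf{y}\mapsto\mathbf{y}G$ injective, as you note.

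One small refinement: the assumption $n\ge k+1$ is not needed. When $n=k$, the only $k$ columns form an invertible matrix, so condition 2) fails. Also $d(\C)=1\neq n-k$, so $\C$ is not NMDS. Both sides are false, and the equivalence holds trivially.
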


Finally, we recall the notions of punctured and shortened codes \cite{HP2003}. 
Let $\C$ be an $[n,k]_q$ linear code and let $1\leq i\leq n$. 
The \emph{punctured code} of $\C$ at the $i$th coordinate is defined as
\begin{align}\label{eq.puncture}
    \mathrm{Punc}_i(\C)=
\{(c_1,c_2,\ldots,c_{i-1},c_{i+1},\ldots,c_n):
(c_1,c_2,\ldots,c_n)\in \C\}.
\end{align}
The \emph{shortened code} of $\C$ at the $i$th coordinate is defined as
\begin{align}\label{eq.shorten}
    \mathrm{Short}_i(\C)=
\{(c_1,c_2,\ldots,c_{i-1},c_{i+1},\ldots,c_n):
(c_1,c_2,\ldots,c_n)\in \C,\ c_i=0\}.
\end{align}
With \eqref{eq.puncture} and \eqref{eq.shorten}, it is easy to check that $\mathrm{Punc}_i(\C^{\perp})=(\mathrm{Short}_i(\C))^{\perp}$. 
By \cite[Lemma 2]{LSEL2023}, punctured and shortened codes of an MDS code are still MDS. 
Moreover, by \cite[Theorem 10]{TCY2014}, for an NMDS code there exists a coordinate 
such that the corresponding punctured and shortened codes remain NMDS. 
However, puncturing and shortening operations may destroy the non-GRS property. 

\section{A framework for constructing non-GRS MDS-NMDS codes from deep holes}\label{sec.3} 

In this section, we develop a general framework to construct families of non-GRS MDS-NMDS codes using deep holes. 
We also provide an equivalent expression of this framework in terms of the so-called second kind of extended codes 
introduced by Sun $et~al.$ in \cite{SDC2024FFA,SDC2024DM} and recover a recent result presented by Wu $et~al.$ in \cite{WDC2025} 
as a special case.

\subsection{A unified framework}

Note that determining deep holes of a linear code requires prior knowledge of its covering radius. 
Throughout this section, we assume that the covering radius of the $[n,k]_q$ non-GRS MDS or NMDS code under consideration equals $n-k$. 
This assumption is reasonable since many known non-GRS MDS and NMDS codes indeed 
satisfy this condition \cite{FXZ2025,LJL2025,LZS2025,WDC2025}. 
We also provide a counterexample in Example \ref{exam.counterexample} to show that this condition 
is necessary for our framework.
Note also that Zhuang $et~al.$~\cite[Proposition~2]{ZCL2016} established an equivalent condition for a vector to be a deep hole of a Reed-Solomon code, 
based on the covering radius and a generator matrix of the code. 
More recently, Wu $et~al.$~\cite[Lemma~5]{WDC2025} removed the GRS restriction and extended this characterization to general MDS codes. 
However, since the codes studied in this paper may be NMDS, the results of~\cite[Lemma~5]{WDC2025} and~\cite[Proposition~2]{ZCL2016} are not directly applicable. 
This motivates the following lemma.

\begin{lemma}\label{lem.mds_nmds_deep_hole}
    Let $\C$ be an $[n,k]_q$ linear code with generator matrix $G$ and covering radius $\rho(\C)=n-k$. 
    Suppose that $\mathbf{u}\in \F_q^{n}$ and $\left(\begin{array}{c}
        G \\ 
        \mathbf{u}
    \end{array}\right)$ generates a linear code $\C_{\bf u}$. 
    If $d(\C)\geq \rho(\C)$, then $\mathbf{u}$ is a deep hole of $\C$ 
    if and only if $\C_{\bf u}$ is an $[n,k+1,n-k]_q$ MDS code. 
\end{lemma}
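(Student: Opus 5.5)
The plan is to compute the minimum distance of $\C_{\bf u}$ directly in terms of $d(\C)$ and the error distance $d(\mathbf{u},\C)$. Then I compare the result with the Singleton bound. Throughout I assume $k<n$, so that $\rho(\C)=n-k\geq 1$; the degenerate case $k=n$ is trivial and excluded.

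\textbf{Step 1 (structure of $\C_{\bf u}$).} Every codeword of $\C_{\bf u}$ has the form $\mathbf{c}+a\mathbf{u}$ with $\mathbf{c}\in\C$ and $a\in\F_q$. Moreover, $\dim\C_{\bf u}=k+1$ if and only if $\mathbf{u}\notin\C$, which holds if and only if $d(\mathbf{u},\C)\geq 1$.

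\textbf{Step 2 (weights of nonzero codewords).} Take $a=0$ and $\mathbf{c}\neq\mathbf{0}$. Then $\wt(\mathbf{c})\geq d(\C)$. Now take $a\neq 0$. Then
\[
\wt(\mathbf{c}+a\mathbf{u})=\wt\bigl(\mathbf{u}-(-a^{-1}\mathbf{c})\bigr)=d\bigl(\mathbf{u},-a^{-1}\mathbf{c}\bigr)\geq d(\mathbf{u},\C),
\]
and equality is attained for a suitable $\mathbf{c}$. Hence, when $\mathbf{u}\notin\C$,
\[
d(\C_{\bf u})=\min\{d(\C),\,d(\mathbf{u},\C)\}.
\]

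\textbf{Step 3 ($\Rightarrow$).} Suppose $\mathbf{u}$ is a deep hole. Then $d(\mathbf{u},\C)=\rho(\C)=n-k\geq 1$, so $\mathbf{u}\notin\C$ and $\C_{\bf u}$ has dimension $k+1$. By Step 2 and the hypothesis $d(\C)\geq\rho(\C)=n-k$, we get $d(\C_{\bf u})\geq n-k$. The Singleton bound for an $[n,k+1]_q$ code gives $d(\C_{\bf u})\leq n-k$. So $\C_{\bf u}$ is an $[n,k+1,n-k]_q$ MDS code.

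\textbf{Step 4 ($\Leftarrow$).} Suppose $\C_{\bf u}$ is $[n,k+1,n-k]_q$ MDS. Its dimension is $k+1$, so $\mathbf{u}\notin\C$ and Step 2 applies. This gives
\[
n-k=d(\C_{\bf u})\leq d(\mathbf{u},\C)\leq\rho(\C)=n-k.
\]
Hence $d(\mathbf{u},\C)=\rho(\C)$, i.e.\ $\mathbf{u}$ is a deep hole.

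There is no real obstacle here. The only point needing care is the dimension bookkeeping in Step 1, namely ensuring $\mathbf{u}\notin\C$ so that $\C_{\bf u}$ genuinely has dimension $k+1$. Note also that the hypothesis $d(\C)\geq\rho(\C)$ is used only in Step 3, to keep the $a=0$ codewords from lowering the minimum distance below $n-k$.
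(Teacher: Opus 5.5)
Your proof is correct and follows essentially the same route as the paper. Both arguments rest on the identity $d(\C_{\bf u})=\min\{d(\C),d(\mathbf{u},\C)\}$ for $\mathbf{u}\notin\C$, combined with $\rho(\C)=n-k$ and the Singleton bound. You are more careful than the paper in three places: you justify this identity, you extract $\mathbf{u}\notin\C$ from $\dim \C_{\bf u}=k+1$ in the converse direction, and you note that $d(\C)\geq\rho(\C)$ is needed only for the forward implication.

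One correction of wording: the case $k=n$ is not "trivial" but genuinely outside the lemma. There every vector is a deep hole, while $\C_{\bf u}=\F_q^n$ cannot have dimension $n+1$. So the assumption $k<n$, which the paper also makes implicitly, is necessary rather than a convenience.
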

\begin{IEEEproof}
    Note that if $\mathbf{u}$ is a deep hole of $\C$, 
    then $\mathbf{u}\notin \C$, which implies that $\C_{\bf u}$ is an $[n,k+1]_q$ linear code. 
    By the assumption $d(\C)\ge \rho(\C)$ and the definition of the covering radius, 
we have
$
d(\C)\ge \rho(\C)=n-k\ge d({\bf u},\C).
$ 
Then 
\begin{align*}
        d(\C_{\bf u})=\min\{d(\C), d(\mathbf{u}, \C)\}=  d(\mathbf{u}, \C).
\end{align*}
    It further implies that $\mathbf{u}$ is a deep hole of $\C$ 
    if and only if $d(\C_{\bf u})=d(\mathbf{u}, \C)=\rho(\C)=n-k$, 
    if and only if $\C_{\bf u}$ is an $[n,k+1,n-k]_q$ MDS code. This completes the proof. 
\end{IEEEproof}

We are now ready to present the main result of this paper, 
which provides a general framework to get more non-GRS MDS (resp. NMDS) codes from 
known non-GRS MDS (resp. NMDS) codes with explicit deep holes.

\begin{theorem}\label{th.connection deep holes and extened codes}
    Let $\C$ be an $[n,k]_q$ linear code with generator matrix $G$ and covering radius $\rho(\C)=n-k$. 
    Suppose that $\mathbf{u}\in \F_q^{n}$ is a deep hole of $\C$. 
    Let $\C'$ be the linear code generated by the matrix $\left(\begin{array}{cc}
        G & {\bf 0}\\ 
        \mathbf{u} & 1
    \end{array}
     \right)$. 
    Then $\C'$ is an $[n+1,k+1,n-k+1]_q$ MDS (resp. an $[n+1,k+1,n-k]_q$ NMDS) code 
    if and only if $\C$ is an $[n,k,n-k+1]_q$ MDS (resp. an $[n,k,n-k]_q$ NMDS) code. 
     Moreover, if $\C$ is non-GRS, so is $\C'$. 
\end{theorem}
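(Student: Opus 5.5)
The plan is a direct weight count on the codewords of $\C'$, combined with a puncturing/shortening argument at the last coordinate for the dual distance and for the non-GRS claim. Every codeword of $\C'$ has the form $(\mathbf{c}+a\mathbf{u},a)$ with $\mathbf{c}\in\C$ and $a\in\F_q$. If $a=0$ its weight is $\wt(\mathbf{c})$. If $a\neq 0$ its weight is $1+d(\mathbf{c}+a\mathbf{u},\mathbf{0})\ge 1+d(a\mathbf{u},\C)=1+d(\mathbf{u},\C)=n-k+1$, since scaling by $a^{-1}$ preserves distances to the linear code $\C$ and $\mathbf{u}$ is a deep hole with $\rho(\C)=n-k$. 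Equality is attained by choosing $\mathbf{c}$ closest to $-a\mathbf{u}$. Because $\mathbf{u}\notin\C$, $\C'$ has dimension $k+1$, and we get $d(\C')=\min\{d(\C),n-k+1\}$. Together with the Singleton bound $d(\C)\le n-k+1$, this shows: $\C'$ has distance $n-k+1$ (i.e. is MDS) iff $\C$ is MDS, and $\C'$ has distance $n-k$ iff $d(\C)=n-k$.

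For the NMDS statement I also need the dual distances. I will use the two coordinate operations at position $n+1$. First, $\mathrm{Short}_{n+1}(\C')=\C$, because the codewords with $a=0$ are exactly $(\mathbf{c},0)$. Second, $\mathrm{Punc}_{n+1}(\C')=\C_{\mathbf{u}}$.

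When $d(\C)\ge n-k=\rho(\C)$, Lemma~\ref{lem.mds_nmds_deep_hole} applies, so $\C_{\mathbf{u}}$ is an $[n,k+1,n-k]_q$ MDS code. Hence $\C_{\mathbf{u}}^{\perp}$ is MDS with distance $k+2$.

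Now take a nonzero $\mathbf{y}\in\C'^{\perp}$ and split on its last coordinate.
\begin{itemize}
\item If its last coordinate is $0$, then $\mathbf{y}$ lies in $\mathrm{Short}_{n+1}(\C'^{\perp})=(\mathrm{Punc}_{n+1}(\C'))^{\perp}=\C_{\mathbf{u}}^{\perp}$, so $\wt(\mathbf{y})\ge k+2$.
\item If its last coordinate $t$ is nonzero, its puncture $\mathbf{y}'$ lies in $\C^{\perp}$. Moreover $\mathbf{y}'\neq\mathbf{0}$, since otherwise orthogonality to $(\mathbf{u},1)$ forces $t=0$. Hence $\wt(\mathbf{y})\ge 1+d(\C^\perp)$.
\end{itemize}

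Suppose $\C$ is NMDS, so $d(\C^\perp)=k$. Then $d(\C'^\perp)\ge k+1$. It cannot equal $k+2$, since then $\C'^\perp$, and hence $\C'$, would be MDS, contradicting $d(\C')=n-k$. So $\C'$ is NMDS.

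Conversely, suppose $\C'$ is NMDS. Then $d(\C)=n-k$ by the first paragraph. Also $C^\perp=\mathrm{Punc}_{n+1}(\C'^\perp)$, and this is consistent in dimension since $\mathrm{Punc}_{n+1}(\C'^\perp)=(\mathrm{Short}_{n+1}\C')^\perp=\C^\perp$. Puncturing lowers the distance by at most one, so $d(\C^\perp)\ge d(\C'^\perp)-1=k$. On the other hand $d(\C^\perp)\le k$, because $\C^\perp$ is not MDS. Thus $\C$ is NMDS.

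For the non-GRS claim, the NMDS case is trivial, because an NMDS $\C'$ is not MDS and hence is not equivalent to any GRS or EGRS code. In the MDS case, suppose $\C'$ were monomially equivalent to a GRS or EGRS code. Shortening commutes with monomial equivalence up to relabeling coordinates, and shortening a GRS or EGRS code at any coordinate yields a GRS code (for EGRS, shortening at the infinity coordinate leaves a GRS code of dimension one less). Then $\C=\mathrm{Short}_{n+1}(\C')$ would be equivalent to a GRS or EGRS code, a contradiction.

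The step I expect to need the most care is this standard fact that shortened GRS/EGRS codes remain GRS/EGRS under monomial equivalence. I would verify it directly from the polynomial description, writing the codewords vanishing at $a_i$ as $(x-a_i)h(x)$, so that the shortened code is $\GRS_{k}$ with multipliers $v_j(a_j-a_i)$.
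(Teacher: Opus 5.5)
Your proof is correct and follows essentially the paper's route. You use the same weight split on the last coordinate to get $d(\C')=\min\{d(\C),n-k+1\}$; your split of nonzero words of $\C'^{\perp}$ by their last coordinate is the dual phrasing of the paper's two-case column-rank argument on $G'$; and you use the same shortening argument for the non-GRS claim. Yours is somewhat tidier than the paper's: bounding $\wt(\mathbf{c}+a\mathbf{u})$ directly by $d(\mathbf{u},\C)$, and using $d(\C^{\perp})\ge d(\C'^{\perp})-1$, handle the converse directions explicitly, which the paper leaves implicit. One small correction: shortening an EGRS code at a finite coordinate gives an EGRS code, not literally a GRS code, but that still suffices for your conclusion.
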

\begin{IEEEproof}
    Given the notation, we can express the code $\C'$ as 
    $$\C'=\{{\bf c}'=({\bf c}+a{\bf u}, a):~{\bf c}\in \C,~ a\in \F_q\}.$$
    If $a=0$, we have ${\bf c}'=({\bf c},0)$ for any ${\bf c}\in \C$ and $\wt({\bf c}')=\wt({\bf c})$.  
    Otherwise, we have ${\bf c}'=({\bf c}+a{\bf u}, a)$ for any ${\bf c}\in \C$ and $\wt({\bf c}')=\wt({\bf c}+a{\bf u})+1$. 
    From Lemma \ref{lem.mds_nmds_deep_hole}, we know that ${\bf c}+a{\bf u}$ is a codeword of 
    the $[n,k+1,n-k]_q$ MDS code $\C_{\bf u}$ 
    generated by $G_{\bf u}=\left(\begin{array}{c}
        G \\ 
        \mathbf{u}
    \end{array} \right)$. 
    Hence, we arrive at 
    \begin{align}\label{eq.mds_nmds}   
    d(\C')=\min\{d(\C), n-k+1\}=\left\{
        \begin{array}{ll}
            n-k+1, & \text{if } \C \text{ is MDS}, \\
            n-k, & \text{if } \C \text{ is NMDS}.
        \end{array}
    \right.
    \end{align}

    Note that $G'=\left(\begin{array}{cc}
        G & {\bf 0}\\ 
        \mathbf{u} & 1
    \end{array}
     \right)$ is a parity-check matrix of $\C'^{\perp}$. 
     Let $M$ be a $(k+1)\times k$ submatrix of $G'$. 
     In the following, we show that $M$ has full rank $k$ and hence, $d(\C'^{\perp})\geq k+1$.
     We have two cases to consider. 

    {\bf Case 1:} {\em All columns of $M$ are from the former $n$ columns of $G'$.} 
    Hence, $M$ is also a $(k+1)\times k$ submatrix of $G_{\bf u}$.  
    Since $\C_{\bf u}$ is an $[n,k+1,n-k]_q$ MDS code, it follows from Lemma \ref{lem.MDS} 
    that any $k+1$ columns of $G_{\bf u}$ are linearly independent. 
    It implies that $M$ has full rank $k$. 

    {\bf Case 2:} {\em $M$ contains the last column of $G'$.}  
    Suppose that all columns of $M$ are the $i_1$th, $i_2$th, $\ldots$, $i_{k-1}$th, 
    and $(n+1)$th columns of $G'$. 
    Denote by ${\bf g}_{i_1}, {\bf g}_{i_2}, \ldots, {\bf g}_{i_{k-1}}$, and $ {\bf g}_{n+1}$ 
    the corresponding column vectors of $G'$, where ${\bf g}_{i_j}=(g_{1,i_j},g_{2,i_j},\ldots,g_{k,i_j},u_{i_j})^T$ 
    for $1\leq j\leq k-1$ and ${\bf g}_{n+1}=(0,0,\ldots,0,1)^T$. 
    Assume that there exist $c_1,c_2,\ldots,c_{k}\in \F_q$ such that 
    \begin{align}\label{eq.rank}
       c_1{\bf g}_{i_1}+c_2{\bf g}_{i_2}+ \cdots +c_{k-1}{\bf g}_{i_{k-1}}+c_k{\bf g}_{n+1}={\bf 0} \Longleftrightarrow 
       \left\{
            \begin{array}{l}
                \sum_{j=1}^{k-1} c_j g_{m,i_j} = 0, \quad 1 \leq m \leq k, \\
                \sum_{j=1}^{k-1} c_j u_{i_j} +c_k = 0.
            \end{array}
        \right.
    \end{align}
    With Lemmas \ref{lem.MDS} and \ref{lem.NMDS}.1), we know that any $k-1$ columns of $G$ are linearly independent 
    whatever $\C$ is MDS or NMDS. 
    Therefore, the matrix 
    $$
    \left(
        \begin{array}{cccc}
            g_{1,i_1} & g_{1,i_2} & \ldots & g_{1,i_{k-1}} \\
            g_{2,i_1} & g_{2,i_2} & \ldots & g_{2,i_{k-1}} \\
            \vdots & \vdots & \ddots & \vdots \\
            g_{k,i_1} & g_{k,i_2} & \ldots & g_{k,i_{k-1}} \\
        \end{array}
    \right) 
    $$
    has full rank $k-1$, which implies that $c_1=c_2=\cdots=c_{k-1}=0$. 
    Then the last equation in the system of equations \eqref{eq.rank} gives that $c_k=0$. 
    As a result, $M$ also has full rank $k$. 

    Combining with the well-known Singleton bound, 
    we immediately conclude that $d(\C')\in \{n-k,~ n-k+1\}$ and 
    $d(\C'^{\perp})\in \{k+1,~ k+2\}$. 
    It then turns out from the fact that duals of MDS codes are also MDS and \eqref{eq.mds_nmds} 
    that $\C'$ is an $[n+1,k+1,n-k+1]_q$ MDS (resp. an $[n+1,k+1,n-k]_q$ NMDS) code 
    if and only if $\C$ is an $[n,k,n-k+1]_q$ MDS (resp. an $[n,k,n-k]_q$ NMDS) code. 

It then remains to prove the non-GRS property. 
Recall that
$
\C'=\{({\bf c}+a{\bf u},a):~{\bf c}\in\C,\ a\in\F_q\}.
$ 
By the definition of shortened codes in \eqref{eq.shorten}, 
shortening $\C'$ at the last coordinate gives
\[
\mathrm{Short}_{n+1}(\C')=
\{ {\bf c}+a{\bf u}:~({\bf c}+a{\bf u},a)\in \C',\ a=0\}=
\{ {\bf c}:~{\bf c}\in\C\}=\C.
\]
Suppose, to the contrary, that $\C'$ is monomially equivalent to a GRS or an EGRS code. 
Since the shortening operation is compatible with the monomial equivalence, 
$\mathrm{Short}_{n+1}(\C')$ is monomially equivalent to the shortened code of a GRS code or an EGRS code 
at the corresponding coordinate. 
Note also that shortened codes of GRS and EGRS codes are again GRS or EGRS codes up to the monomial equivalence. 
Therefore, $\C$ is monomially equivalent to a GRS or an EGRS code, contradicting the assumption that $\C$ is non-GRS. 
This completes the proof. 
\end{IEEEproof}

\begin{corollary}\label{coro.family}
    Given a family of $[n,k]_q$ non-GRS MDS-NMDS codes $\C$ with covering radius $\rho(\C)=n-k$,  
    there exists a family of $[n+1,k+1]_q$ non-GRS MDS-NMDS codes $\C'$. 
\end{corollary}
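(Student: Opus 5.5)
The corollary follows by applying Theorem~\ref{th.connection deep holes and extened codes} member by member to the given family. Since the family is non-GRS MDS-NMDS in the sense of Definition~\ref{def.MDS-NMDS}, each member is an $[n,k]_q$ code $\C_k$ with $\rho(\C_k)=n-k$. Each $\C_k$ is either a non-GRS $[n,k,n-k+1]_q$ MDS code or an $[n,k,n-k]_q$ NMDS code, and both types occur at least once. For each such $k$ I will fix a generator matrix $G_k$ of $\C_k$ and a deep hole $\mathbf{u}_k\in D(\C_k)$. A deep hole always exists, because $\F_q^n$ is finite and the maximum defining $\rho(\C_k)$ is attained. I then define $\C'_{k+1}$ to be the code generated by $\begin{pmatrix} G_k & \mathbf{0}\\ \mathbf{u}_k & 1\end{pmatrix}$. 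This is a family of $[n+1,k+1]_q$ codes indexed by the same values of $k$.

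Next I check the MDS-NMDS requirement. By Theorem~\ref{th.connection deep holes and extened codes}, $\C'_{k+1}$ is an $[n+1,k+1,n-k+1]_q$ MDS code exactly when $\C_k$ is MDS. It is an $[n+1,k+1,n-k]_q$ NMDS code exactly when $\C_k$ is NMDS. Thus the MDS index set of the new family is the shift by one of the MDS index set of the old family, and likewise for the NMDS index set. Since both old sets are nonempty, both new sets are nonempty, and so $\{\C'_{k+1}\}$ is a family of MDS-NMDS codes. For the non-GRS property, the last assertion of Theorem~\ref{th.connection deep holes and extened codes} says that every member $\C'_{k+1}$ built from a non-GRS MDS $\C_k$ is again non-GRS. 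This gives a family of $[n+1,k+1]_q$ non-GRS MDS-NMDS codes.

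The only step needing care is bookkeeping rather than mathematics. The hypothesis $\rho=n-k$ must hold for every member, and the shifted index range $2\le k+1\le n+1$ must fit the convention of Definition~\ref{def.MDS-NMDS} for length $n+1$. That convention allows $k+1$ to range up to $n+1$, so restricting to a subset of dimensions causes no problem, since the definition only requires the two index sets to be nonempty.
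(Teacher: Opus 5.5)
Your proposal is correct and follows the paper's proof. Both arguments use the finiteness of $\F_q^n$ to guarantee a deep hole for each member, then apply Theorem~\ref{th.connection deep holes and extened codes} member by member; your explicit check that the MDS and NMDS index sets shift by one, and so both remain nonempty, is a detail the paper leaves implicit.
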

\begin{IEEEproof}
    Given the parameters $[n,k]_q$, both the vectors in $\F_q^{n}$ and their error distances to $\C$ are finite.
    With \eqref{eq.deep holes}, $D(\C)\neq \emptyset$ and thus, $\C$ must have deep holes. 
    For each non-GRS MDS (resp. NMDS) code in $\C$, we can arbitrarily select a deep hole and apply Theorem \ref{th.connection deep holes and extened codes} 
    to construct a new $[n+1,k+1]_q$ non-GRS MDS (resp. NMDS) code. 
    This completes the proof. 
\end{IEEEproof}

\begin{remark}\label{rem.111}
Although it is well known that shortening an $[n+1,k+1]_q$ MDS (resp.\ NMDS) code at a suitable coordinate 
can yield an $[n,k]_q$ MDS (resp. NMDS) code, the converse of Corollary~\ref{coro.family} does not necessarily hold, 
since the shortening operation may fail to preserve the non-GRS property.
\end{remark}

Note that throughout this paper we assume that the covering radius of the 
$[n,k]_q$ non-GRS MDS or NMDS code under consideration equals $n-k$. 
On the one hand, many families of codes have been shown to satisfy this condition 
\cite{FXZ2025,LJL2025,LZS2025,WDC2025}, which makes our framework broadly applicable. 
We list some known results in Table \ref{table.covering radius}. 
Note that deep holes of these codes have been partially determined in the corresponding references. 
Therefore, applying Theorem \ref{th.connection deep holes and extened codes} to these codes with explicit deep holes 
can directly yield more non-GRS MDS codes and NMDS codes with larger lengths and dimensions. 
We omit the details here for brevity.

\begin{table*}[ht!]
\centering
\caption{Some known families of $[N,K]_q$ non-GRS MDS and NMDS codes with covering radius $N-K$}\label{table.covering radius} 
\setlength{\tabcolsep}{8pt} 
\renewcommand{\arraystretch}{1.3} 
\newcolumntype{H}{>{\setbox0=\hbox\bgroup}c<{\egroup}}
\resizebox{\textwidth}{!}{
\begin{threeparttable}
\begin{tabular}{l|l|lH|l}
\toprule
Code family & Type & Condition & Deep hole  & Reference\\ 
\midrule
Twisted GRS $[n,k]_q$ \cite{BBPR2018} & MDS-NMDS\tnote{1}  & $2\leq k\leq n-1\leq q-1$ & xxx &  \cite[Theorem 4]{FXZ2025} \\

Dual of GRL $[n+2,k]_q$ \cite{LJL2025} & Non-GRS MDS & Two specific conditions\tnote{2} & \cite[Theorem 4.6]{LJL2025} & \cite[Theorem 4.6]{LJL2025} \\

Extended twisted GRS $[n+1,k]_q$ \cite{LZS2025} & Non-GRS MDS & $2\leq k\leq n-1\leq q-1$ & - & \cite[Theorem 5]{LZS2025} \\

Cyclic $[q+1,q-2u+2]_q$ \cite{SD2024} & Non-GRS MDS &  $u\in \{2, \frac{q+1}{2}\}$ & {\bf 1} &  \cite[Theorems 3 and 4]{WDC2025} \\
\bottomrule
\end{tabular}
\begin{tablenotes}
\footnotesize
\item[1] This family of twisted GRS codes could be monomially equivalent to GRS codes in some special cases (see \cite[Example 1]{LELL2025}).
\item[2] See Conditions (1) and (2) in \cite[Theorem 3.3]{LJL2025} for details.
\end{tablenotes}
\end{threeparttable}}
\end{table*}

On the other hand, the following example demonstrates that the assumption 
$\rho(\mathcal{C})=n-k$ is essential for our framework. 
Without this assumption, the resulting codes may fail to preserve the non-GRS MDS or NMDS properties.

\begin{example}\label{exam.counterexample}
    Let $\C$ be the binary extended Hamming code with generator matrix 
    $$G=\begin{pmatrix}
        1 & 0 & 0 & 0 & 0 & 1 & 1 & 1 \\
        0 & 1 & 0 & 0 & 1 & 0 & 1 & 1 \\
        0 & 0 & 1 & 0 & 1 & 1 & 0 & 1 \\
        0 & 0 & 0 & 1 & 1 & 1 & 1 & 0 \\
    \end{pmatrix}.$$
    It is well known that $\C$ is an $[8,4,4]_2$ NMDS code with covering radius $\rho(\C)=2$ (see also \cite[Example 1.12.4]{HP2003}), 
    without satisfying $\rho(\C)=n-k=4$. 
    Verified by Magma \cite{Magma}, the vector ${\bf u}=(0,1,1,1,0,1,0,0)$ is a deep hole of $\C$, 
    but the codes $\C_{\bf u}$ and $\C'$ with generator matrices  
    $$\left(\begin{array}{c}
        G \\ 
        \mathbf{u}
    \end{array}\right)=\begin{pmatrix}
        1 & 0 & 0 & 0 & 0 & 1 & 1 & 1 \\
        0 & 1 & 0 & 0 & 1 & 0 & 1 & 1 \\
        0 & 0 & 1 & 0 & 1 & 1 & 0 & 1 \\
        0 & 0 & 0 & 1 & 1 & 1 & 1 & 0 \\
        0 & 1 & 1 & 1 & 0 & 1 & 0 & 0
\end{pmatrix}~{\rm and}~\left(\begin{array}{cc}
        G & {\bf 0}\\ 
        \mathbf{u} & 1
    \end{array}
     \right)=\begin{pmatrix}
        1 & 0 & 0 & 0 & 0 & 1 & 1 & 1 & 0 \\
        0 & 1 & 0 & 0 & 1 & 0 & 1 & 1 & 0\\
        0 & 0 & 1 & 0 & 1 & 1 & 0 & 1 & 0 \\
        0 & 0 & 0 & 1 & 1 & 1 & 1 & 0 & 0 \\
        0 & 1 & 1 & 1 & 0 & 1 & 0 & 0 & 1
\end{pmatrix},$$
    as defined in Lemma \ref{lem.mds_nmds_deep_hole} and Theorem \ref{th.connection deep holes and extened codes},  
    have parameters $[8,5,2]_2$ and $[9,5,3]_2$, respectively. 
    Therefore, neither $\C_{\bf u}$ nor $\C'$ is NMDS.
\end{example}

\subsection{An equivalent framework in terms of the second kind of extended codes} 

In this subsection, we rephrase Theorem \ref{th.connection deep holes and extened codes} in terms of the so-called second kind of extended codes introduced 
by Sun $et~al.$ in \cite{SDC2024FFA,SDC2024DM}. 
Let ${\bf u}=(u_1,u_2,\ldots,u_n)\in \F_q^n$ be a vector. 
Sun $et~al.$ \cite{SDC2024FFA,SDC2024DM} introduced the so-called 
{\em second kind of extended code} $\overline{\C}({\bf u})$ of $\C$ defined by 
\begin{align}\label{eq.extended code}
    \overline{\C}({\bf u})=\left\{(c_1,c_2,\ldots,c_n, c_{n+1}):~ (c_1,c_2,\ldots,c_n)\in \C,~c_{n+1}=\sum_{i=1}^{n} c_i u_i\right\}. 
\end{align}

\begin{lemma}(\!\!\cite{SDC2024FFA})\label{lem.extended code}
Let $\C$ be an $[n,k,d]_q$ linear code and ${\bf u}\in \F_q^n$.
If $\C$ has generator matrix $G$ and parity-check matrix $H$, then the generator and parity-check
matrices for the extended code $\overline{\C}({\bf u})$ in~\eqref{eq.extended code} are 
\begin{align*}
    \overline{G}=\begin{pmatrix}
    G & G{\bf u}^{T}
    \end{pmatrix}~{\rm and}~
    \overline{H}=
    \begin{pmatrix}
    H & {\bf 0}\\
    {\bf u} & -1
    \end{pmatrix}. 
\end{align*}
\end{lemma}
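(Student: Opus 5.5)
The statement to prove is Lemma~\ref{lem.extended code}: the generator and parity-check matrices of $\overline{\C}({\bf u})$. I would verify the two matrices directly from the definition \eqref{eq.extended code}, checking row space, rank and orthogonality separately.

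\textbf{Generator matrix.} I would first observe that the map $\C\to\F_q^{n+1}$, ${\bf c}\mapsto({\bf c},{\bf c}{\bf u}^T)$, is $\F_q$-linear and injective. Its image is exactly $\overline{\C}({\bf u})$, since $c_{n+1}=\sum_i c_iu_i={\bf c}{\bf u}^T$. Hence $\overline{\C}({\bf u})$ is an $[n+1,k]_q$ code. Every codeword of $\C$ has the form ${\bf c}={\bf m}G$ with ${\bf m}\in\F_q^k$, so
\[
({\bf c},{\bf c}{\bf u}^T)=({\bf m}G,{\bf m}G{\bf u}^T)={\bf m}\begin{pmatrix} G & G{\bf u}^T\end{pmatrix}.
\]
Thus the rows of $\overline{G}$ span $\overline{\C}({\bf u})$. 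The matrix $\overline{G}$ has rank $k$ because its first $n$ columns form $G$, which has rank $k$. So $\overline{G}$ is a generator matrix.

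\textbf{Parity-check matrix.} It suffices to show two things:
\begin{itemize}
\item $\overline{G}\,\overline{H}^T=0$;
\item $\overline{H}$ has rank $(n+1)-k=n-k+1$.
\end{itemize}
For orthogonality, I compute the blocks of $\overline{G}\,\overline{H}^T$. The block against the rows $(H\ {\bf 0})$ is $GH^T+G{\bf u}^T\cdot{\bf 0}=0$, since $H$ is a parity-check matrix of $\C$. The block against the row $({\bf u}\ {-1})$ is $G{\bf u}^T-G{\bf u}^T=0$.

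For the rank, suppose a combination of the rows of $\overline{H}$ with coefficients $({\bf y},t)$ vanishes. The last coordinate of that combination is $-t$, so $t=0$. The remaining condition is then ${\bf y}H=0$, which forces ${\bf y}=0$ because $H$ has full row rank $n-k$. Hence the rank of $\overline{H}$ is $n-k+1$.

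It follows that the row space of $\overline{H}$ is contained in $\overline{\C}({\bf u})^\perp$ and has the same dimension, so the two are equal. Nothing here is a real obstacle; the only point needing care is the $-1$ entry, which is what makes the orthogonality cancel.
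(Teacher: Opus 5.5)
Your proof is correct: the injective linear map ${\bf c}\mapsto({\bf c},{\bf c}{\bf u}^T)$ yields $\overline{G}$, and the orthogonality check $\overline{G}\,\overline{H}^T=0$ together with the rank count $\operatorname{rank}\overline{H}=n-k+1=\dim\overline{\C}({\bf u})^{\perp}$ (forced by the $-1$ entry in the last column) shows that $\overline{H}$ is a parity-check matrix. The paper gives no proof of its own here; it imports the lemma from \cite{SDC2024FFA}, and your direct verification is the standard argument one would give.
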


Using the second kind of extended codes, we get the following equivalent expression of 
Theorem \ref{th.connection deep holes and extened codes}.

\begin{theorem}\label{th.equivalent_framework}
    Let $\C$ be an $[n,k]_q$ linear code with covering radius $\rho(\C)=n-k$. 
    Suppose that $\mathbf{u}\in \F_q^{n}$ is a deep hole of $\C$. 
    Then the extended code $\overline{\C^{\perp}}({\bf u})$ is an $[n+1,n-k,k+2]_q$ MDS (resp. an $[n+1,n-k,k+1]_q$ NMDS) code 
    if and only if $\C$ is an $[n,k,n-k+1]_q$ MDS (resp. an $[n,k,n-k]_q$ NMDS) code. 
     Moreover, if $\C$ is non-GRS, so is $\overline{\C^{\perp}}({\bf u})$. 
\end{theorem}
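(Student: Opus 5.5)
The plan is to reduce Theorem~\ref{th.equivalent_framework} to Theorem~\ref{th.connection deep holes and extened codes} by identifying $\overline{\C^{\perp}}({\bf u})$, up to a harmless scaling, with the dual of the code $\C'$ constructed there. Let $G$ be a generator matrix of $\C$. Then $G$ is a parity-check matrix of $\C^{\perp}$. By Lemma~\ref{lem.extended code} applied to $\C^{\perp}$ (with $H=G$), a parity-check matrix of $\overline{\C^{\perp}}({\bf u})$ is
\[
\overline{H}=\begin{pmatrix} G & {\bf 0}\\ {\bf u} & -1\end{pmatrix}.
\]
Hence $\overline{\C^{\perp}}({\bf u})^{\perp}$ is the code generated by $\overline{H}$. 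This matrix differs from $G'=\begin{pmatrix} G & {\bf 0}\\ {\bf u} & 1\end{pmatrix}$ only by multiplying the last column by $-1$. Therefore $\overline{\C^{\perp}}({\bf u})^{\perp}$ is monomially equivalent to $\C'$ via a diagonal scaling of the last coordinate. Dualizing, $\overline{\C^{\perp}}({\bf u})$ is monomially equivalent to $\C'^{\perp}$, since the dual of the code scaled by $D$ is the dual code scaled by $D^{-1}$.

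First I check dimensions. Since ${\bf u}$ is a deep hole, ${\bf u}\notin\C$, so $\C'$ has dimension $k+1$ and $\C'^{\perp}$ is an $[n+1,n-k]_q$ code, which matches the claimed parameters. Next, Theorem~\ref{th.connection deep holes and extened codes} says that $\C'$ is $[n+1,k+1,n-k+1]_q$ MDS (resp. $[n+1,k+1,n-k]_q$ NMDS) if and only if $\C$ is $[n,k,n-k+1]_q$ MDS (resp. $[n,k,n-k]_q$ NMDS). Since duals of MDS codes are MDS and duals of NMDS codes are NMDS, $\C'^{\perp}$ is then $[n+1,n-k,k+2]_q$ MDS (resp. $[n+1,n-k,k+1]_q$ NMDS). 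Monomial equivalence preserves weights and hence these properties, which gives the stated equivalence for $\overline{\C^{\perp}}({\bf u})$ in both directions.

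For the non-GRS claim, Theorem~\ref{th.connection deep holes and extened codes} shows that $\C'$ is non-GRS whenever $\C$ is. I then use the standard fact that the dual of a GRS (resp. EGRS) code is again GRS (resp. EGRS) up to monomial equivalence. So if $\C'^{\perp}$ were monomially equivalent to a GRS or EGRS code, dualizing would force $\C'$ to be one too, a contradiction. Transferring through the last-column scaling shows that $\overline{\C^{\perp}}({\bf u})$ is non-GRS.

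The only step needing care is the bookkeeping. Lemma~\ref{lem.extended code} must be applied to $\C^{\perp}$ with $G$ playing the role of its parity-check matrix, and the sign $-1$ must be absorbed by a monomial map while keeping track of how duality interacts with that map. Everything else is inherited directly from Theorem~\ref{th.connection deep holes and extened codes}.
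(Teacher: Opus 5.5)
Your proposal is correct and is essentially the paper's own proof: apply Lemma~\ref{lem.extended code} to $\C^{\perp}$ with $G$ as its parity-check matrix, observe that the resulting generator matrix of $(\overline{\C^{\perp}}({\bf u}))^{\perp}$ differs from that of $\C'$ only in the sign of the last column, and then transfer Theorem~\ref{th.connection deep holes and extened codes} through duality, using that duals of MDS/NMDS codes and of GRS/EGRS codes stay in the same class. One minor aside: $\C'$ has dimension $k+1$ whether or not ${\bf u}\in\C$, because of the last column $(0,\ldots,0,1)^T$, so the deep-hole hypothesis is not what gives your dimension count.
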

\begin{IEEEproof}
Suppose that $G$ is a generator matrix of $\C$. 
Since a generator matrix of a linear code serves as a parity-check matrix of its dual code, 
Lemma~\ref{lem.extended code} implies that $(\overline{\C^{\perp}}({\bf u}))^{\perp}$ admits the generator matrix
\[
\begin{pmatrix}
G & {\bf 0}\\
{\bf u} & -1
\end{pmatrix}.
\]
It follows that $(\overline{\C^{\perp}}({\bf u}))^{\perp}$ is monomially equivalent to the linear code $\C'$ 
defined in Theorem~\ref{th.connection deep holes and extened codes}. 
Therefore, the desired result follows directly from Theorem~\ref{th.connection deep holes and extened codes}, 
together with the facts that monomial equivalence preserves duality and that dual codes of GRS and EGRS codes 
are again monomially equivalent to GRS and EGRS codes, respectively.
\end{IEEEproof}

\begin{remark}
Let $\C$ be an $[n,k,n-k+1]_q$ MDS code. 
In \cite[Theorem~1]{WDC2025}, Wu $et~al.$ proved that 
for any ${\bf u}\in (\F_q)^n$, the extended code $\overline{\C}({\bf u})$ defined in \eqref{eq.extended code} 
is MDS if and only if $\rho(\C^{\perp})=k$ and ${\bf u}$ is a deep hole of the dual code $\C^{\perp}$. 
By taking duals in Theorem~\ref{th.connection deep holes and extened codes} and observing that the dual 
of an MDS code is again MDS, one can likewise obtain $[n+1,n-k,k+2]_q$ MDS codes from their result. 
Therefore, Theorem~\ref{th.equivalent_framework} recovers \cite[Theorem~1]{WDC2025} as a special case of our equivalent framework 
in the sense that it also yields additional MDS codes. 
Moreover, our approach also extends the discussion to the NMDS case 
and incorporates the non-GRS property.
\end{remark}

At the end of this section, we compare Theorems~\ref{th.connection deep holes and extened codes} and~\ref{th.equivalent_framework} 
with \cite[Theorem 1]{MKZ2026} in order to construct more NMDS codes. 
For completeness, we recall \cite[Theorem 1]{MKZ2026} below.

\begin{lemma}{\rm (\!\! \cite[Theorem 1]{MKZ2026})}\label{lem.MKZ}
    Let $\C$ be an $[n,k]_q$ NMDS code with generator matrix $G$ and $2\leq k\leq n-2$. 
    Let $S$ be the set of column vectors of $G$ and 
    let $\mathcal{H}_{\bf v}=\{{\bf x}\in \F_q^{k}:~\langle {\bf x}, {\bf v} \rangle_{\rm E}=0\}$ be 
    the hyperplane over $\F_{q}$ associated with the vector ${\bf v}\in \F_{q}^{k}$.  
    Then for any nonzero vector ${\bf u}\in \F_q^n$, the extended code 
    $\overline{\C}({\bf u})$ remains NMDS if and only if the following conditions are satisfied:
    \begin{enumerate}
        \item Any $k-2$ columns of $G$ together with $G{\bf u}^T$ are linearly independent; 
        \item For any nonzero vector ${\bf v}\in \F_q^k$, if ${\bf v}G{\bf u}^T=0$, then $|\mathcal{H}_{\bf v}\cap S|\leq k-1$.  
    \end{enumerate}
\end{lemma}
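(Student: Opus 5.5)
The approach is to apply the generator-matrix characterization of NMDS codes in Lemma~\ref{lem.NMDS} directly to the generator matrix of the extended code. By Lemma~\ref{lem.extended code}, $\overline{\C}({\bf u})$ has generator matrix $\overline{G}=\begin{pmatrix} G & {\bf w}\end{pmatrix}$ with ${\bf w}=G{\bf u}^T\in\F_q^k$. In particular $\overline{\C}({\bf u})$ is an $[n+1,k]_q$ code, and its column set is $S$ together with the single extra column ${\bf w}$. Since $\C$ is NMDS, Lemma~\ref{lem.NMDS} gives three facts about the columns of $G$:
\begin{itemize}
\item any $k-1$ of them are linearly independent;
\item some $k$ of them are dependent;
\item any $k+1$ of them have rank $k$.
\end{itemize}
I will check the three conditions of Lemma~\ref{lem.NMDS} for $\overline{G}$, separating column choices that avoid ${\bf w}$ from those that contain it.

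\textbf{Conditions 1) and 2).} For 1), a set of $k-1$ columns of $\overline{G}$ avoiding ${\bf w}$ is independent by the NMDS property of $\C$. A set containing ${\bf w}$ consists of $k-2$ columns of $G$ together with $G{\bf u}^T$. Hence 1) holds for $\overline{G}$ exactly when condition~1 of the statement holds. Note that the case $k-2=0$ is vacuous by the hypothesis $k\ge 2$, and condition~1 then forces ${\bf w}\neq{\bf 0}$. For 2), $k$ dependent columns already exist inside $G$, so this condition is automatic for $\overline{G}$.

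\textbf{Condition 3).} This is the main step. A set of $k+1$ columns of $\overline{G}$ avoiding ${\bf w}$ has rank $k$ by the NMDS property of $\C$. So only sets consisting of $k$ columns ${\bf g}_{i_1},\ldots,{\bf g}_{i_k}$ of $G$ together with ${\bf w}$ need attention. Such a set fails to have rank $k$ if and only if all $k+1$ vectors lie in a common hyperplane $\mathcal{H}_{\bf v}$ with ${\bf v}\neq{\bf 0}$. This means ${\bf v}G{\bf u}^T=\langle {\bf w},{\bf v}\rangle_{\rm E}=0$ and $\mathcal{H}_{\bf v}$ contains at least $k$ columns of $G$, i.e.\ $|\mathcal{H}_{\bf v}\cap S|\ge k$. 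Taking the contrapositive, condition 3) for $\overline{G}$ holds exactly when condition~2 of the statement holds.

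\textbf{Conclusion and the point to watch.} Combining the three parts, $\overline{G}$ satisfies 1)--3) of Lemma~\ref{lem.NMDS} if and only if conditions~1 and~2 of the statement hold, which gives the claimed equivalence. The delicate point is that $|\mathcal{H}_{\bf v}\cap S|$ must count columns, not distinct vectors. For $k\ge3$ this is harmless: any two columns of $G$ are independent because $k-1\ge 2$, so the columns are distinct and not proportional. For $k=2$, repeated columns could occur, so I would interpret $S$ as a multiset, or note that proportional columns are still counted separately by the rank argument.
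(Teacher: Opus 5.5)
The paper gives no proof of this lemma. It is quoted from \cite{MKZ2026} only to set up the complexity comparison in Theorem~\ref{th.complexity_MKZ}, so there is no route of the paper's to compare against. Judged on its own, your argument is correct and complete. Applying Lemma~\ref{lem.NMDS} to $\overline{G}=(G\ \ {\bf w})$ with ${\bf w}=G{\bf u}^T$, and splitting column choices according to whether they contain ${\bf w}$, is the right reduction:
\begin{enumerate}
\item condition~1) for $\overline{G}$ is exactly condition~1 of the statement;
\item condition~2) for $\overline{G}$ is inherited from $G$;
\item condition~3) for $\overline{G}$ fails precisely when some hyperplane $\mathcal{H}_{\bf v}$ with ${\bf v}G{\bf u}^T=0$ contains $k$ columns of $G$.
\end{enumerate}

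Your caveat about multiplicities is not cosmetic; it is needed for the ``if'' direction. If you count columns by index, the equivalence is exact. If you read $S$ literally as a set of vectors, the ``only if'' direction survives, since a set count is at most the index count. The ``if'' direction, however, fails when $k=2$. For example, over $\F_3$ take $G=\begin{pmatrix}1&1&0&1\\0&0&1&1\end{pmatrix}$, which generates a $[4,2,2]_3$ NMDS code, and take ${\bf u}=(1,0,0,0)$, so ${\bf w}=(1,0)^T\neq{\bf 0}$. The only hyperplane containing ${\bf w}$ is the line $x_2=0$. It meets the \emph{set} $S=\{(1,0)^T,(0,1)^T,(1,1)^T\}$ in one vector, so both conditions hold. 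Yet $\overline{\C}({\bf u})$ contains $(0,0,1,1,0)$, of weight $2<3$, so it is not NMDS. You should therefore fix the convention explicitly: $|\mathcal{H}_{\bf v}\cap S|$ means $|\{i:~{\bf g}_i\in\mathcal{H}_{\bf v}\}|$.

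Two wording points need fixing:
\begin{enumerate}
\item Your alternative remark, that ``proportional columns are still counted separately,'' misidentifies the issue. Distinct proportional columns are already separate elements of a set; only identical columns collapse.
\item The case $k-2=0$ is not vacuous. Condition~1 then says ${\bf w}$ by itself is independent, i.e.\ ${\bf w}\neq{\bf 0}$, which is what you correctly go on to use.
\end{enumerate}
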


\begin{theorem}\label{th.complexity_MKZ}
Let $\C$ be an $[n,k]_q$ NMDS code with generator matrix $G$ and $2\leq k\leq n-2$. 
For a fixed nonzero vector ${\bf u}\in \F_q^n$, verifying whether the extended code 
$\overline{\C}({\bf u})$ is NMDS by using the criterion in Lemma~\ref{lem.MKZ} 
requires at most
\[
O\left(\binom{n}{k-2}k(k-1)^2+q^{k-1}nk\right)
\]
field operations. Moreover, a direct exhaustive search over all nonzero 
${\bf u}\in\F_q^n$ has complexity at most
\begin{align}\label{eq.upper-bound}
O\left(q^n\left(\binom{n}{k-2}k(k-1)^2+q^{k-1}nk\right)\right).
\end{align}
\end{theorem}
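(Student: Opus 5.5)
The count splits along the two conditions of Lemma~\ref{lem.MKZ}, once $G{\bf u}^T$ is in hand. I first compute the column vector ${\bf w}=G{\bf u}^T\in\F_q^k$, which takes $O(nk)$ field operations.

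\emph{Condition 1).} For every $(k-2)$-subset of the $n$ columns of $G$, I form the $k\times(k-1)$ matrix whose columns are these $k-2$ columns together with ${\bf w}$. I then test whether it has full column rank $k-1$ by Gaussian elimination. Elimination on a $k\times(k-1)$ matrix costs $O(k(k-1)^2)$ operations. There are $\binom{n}{k-2}$ subsets, so this step costs $O\big(\binom{n}{k-2}k(k-1)^2\big)$.

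\emph{Condition 2).} Only ${\bf v}$ up to nonzero scalar multiples matter, since $\mathcal{H}_{\bf v}$ and the condition ${\bf v}{\bf w}=0$ are unchanged by scaling ${\bf v}$. So it suffices to run over the nonzero ${\bf v}$ in the hyperplane ${\bf w}^{\perp}=\{{\bf v}:{\bf v}{\bf w}=0\}$ of $\F_q^k$. This space has dimension $k-1$ when ${\bf w}\neq{\bf 0}$; if ${\bf w}={\bf 0}$, condition 1) already fails and we stop. Its elements can be listed explicitly from a basis of ${\bf w}^{\perp}$ in $O(k)$ operations each, which gives at most $q^{k-1}$ vectors. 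For each such ${\bf v}$, I compute the $n$ inner products $\langle {\bf g}_j,{\bf v}\rangle_{\rm E}$ with the columns ${\bf g}_j$ of $G$. This computes ${\bf v}G$ in $O(nk)$ operations, and I count its zero entries to obtain $|\mathcal{H}_{\bf v}\cap S|$, comparing the count with $k-1$. This step costs $O(q^{k-1}nk)$. Summing the two steps, and absorbing the $O(nk)$ cost of computing ${\bf w}$, gives the first bound.

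For the exhaustive search, there are $q^n-1$ nonzero candidates ${\bf u}$, and each one is checked at the cost above. Multiplying gives \eqref{eq.upper-bound}.

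Since only an upper bound is claimed, no step is a real obstacle. The main care is in the enumeration for condition 2): I restrict to ${\bf w}^{\perp}$ to get $q^{k-1}$ rather than $q^k$, or use projective representatives. Either choice stays within the stated bound.
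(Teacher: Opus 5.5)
Your proposal is correct and matches the paper's proof step for step. It uses the same $\binom{n}{k-2}$ rank checks of $k\times(k-1)$ matrices at $O(k(k-1)^2)$ each for condition 1), and the same enumeration of the $q^{k-1}$ vectors orthogonal to ${\bf w}=G{\bf u}^T$ at $O(nk)$ each for condition 2), with ${\bf w}={\bf 0}$ dismissed because condition 1) fails; the multiplication by $q^n-1$ for the exhaustive search is also the same, and your extra cost of computing ${\bf w}$ and listing the hyperplane is correctly absorbed into the stated bound.
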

\begin{IEEEproof}
Fix a nonzero vector ${\bf u}\in \F_q^n$ and put 
${\bf w}=G{\bf u}^T\in \F_q^k.$
We verify the two conditions in Lemma~\ref{lem.MKZ}.

For Condition~1), one needs to check whether any $k-2$ columns of $G$ together with 
${\bf w}$ are linearly independent. There are $\binom{n}{k-2}$ choices of such 
$k-2$ columns. For each choice, this amounts to checking the rank of a 
$k\times (k-1)$ matrix. By Gaussian elimination, this requires at most 
$O(k(k-1)^2)$ field operations. Hence, the cost of verifying Condition~1) is at most
\[
O\left(\binom{n}{k-2}k(k-1)^2\right).
\]

For Condition~2), we distinguish two cases.
\begin{itemize}
    \item If ${\bf w}={\bf 0}$, then Condition~1) already fails, since the set consisting of 
${\bf w}$ together with any $k-2$ columns of $G$ is linearly dependent. Hence, the 
verification terminates after Condition~1).

    \item If ${\bf w}\neq {\bf 0}$, then the vectors ${\bf v}\in \F_q^k$ satisfying 
$\langle {\bf v}, {\bf w} \rangle_{\rm E}=0$ form a hyperplane of dimension $k-1$, 
which contains $q^{k-1}$ vectors, among which $q^{k-1}-1$ are nonzero. 
By the definitions of $\mathcal H_{\bf v}$ and $S$, 
for each such nonzero ${\bf v}$, computing $|\mathcal H_{\bf v}\cap S|$ requires testing all $n$ 
column vectors of $G$. Each test is an inner product of two vectors in $\F_q^k$, 
thereby costing $O(k)$ field operations. It follows that the cost of verifying Condition~2) 
is at most
\[
O((q^{k-1}-1)nk)=O(q^{k-1}nk).
\]
\end{itemize}

Combining the two estimates, the verification complexity for one fixed nonzero 
${\bf u}$ is at most
\[
O\left(\binom{n}{k-2}k(k-1)^2+q^{k-1}nk\right).
\]
Finally, a direct exhaustive search over all nonzero vectors in $\F_q^n$ contains 
$q^n-1$ candidates. Multiplying the above verification cost by $q^n-1$ and absorbing 
lower-order terms yields the asserted upper bound in \eqref{eq.upper-bound}. 
This completes the proof. 
\end{IEEEproof}

\begin{remark}\label{rem.complexity}
We compare the direct exhaustive implementation of the criterion in \cite[Theorem 1]{MKZ2026} with our framework in 
Theorems~\ref{th.connection deep holes and extened codes} and~\ref{th.equivalent_framework} for constructing longer NMDS codes.

\begin{enumerate}
\item Theorem~\ref{th.complexity_MKZ} shows that the criterion in \cite[Theorem 1]{MKZ2026} 
leads to an exhaustive verification procedure whose complexity is exponential in the code length $n$, 
since it involves a search over $q^n-1$ candidate vectors in $\F_q^n$. 
Passing to the dual code replaces the inner verification cost by 
\begin{align}\label{eq.upper-bound222}
O\left(q^n\left(\binom{n}{k+2}(n-k)(n-k-1)^2 + q^{n-k-1}n(n-k)\right)\right),
\end{align}
which may reduce the dependence on $k$ when $k$ is close to $n$, 
but does not eliminate the outer exhaustive search over $\F_q^n$. 
Thus, the overall procedure remains exponential in $n$.

\item For many known classes of NMDS codes, the admissible length $n$ is bounded in terms 
of the field size $q$. It has been conjectured that $n \le q + 2\sqrt{q}$ for $q$-ary NMDS codes in \cite{DL1995,LR2015}. 
Hence, constructing longer codes typically requires working over larger finite fields. 
In such regimes, the factor $q^n$ in \eqref{eq.upper-bound} and \eqref{eq.upper-bound222} 
becomes increasingly prohibitive. 
When $q$ grows linearly with $n$, we have $q^n = n^{\Theta(n)}$, which grows faster than any exponential function of the form $c^n$. 
This further highlights the computational infeasibility of \cite[Theorem 1]{MKZ2026}.

\item Since the criterion in \cite[Theorem 1]{MKZ2026} depends on ${\bf u}$ only through 
${\bf w}=G{\bf u}^T$, one may also reduce the outer enumeration from $\F_q^n$ to 
$\F_q^k$ in \eqref{eq.upper-bound}, or to $\F_q^{n-k}$ in \eqref{eq.upper-bound222} 
after passing to the dual code. Nevertheless, the resulting method remains 
exponential and verification-based. 
In contrast, our framework in Theorems~\ref{th.connection deep holes and extened codes} 
and~\ref{th.equivalent_framework} uses a deep hole of the base code as an explicit input. 
Once such a deep hole is available, the longer code is obtained simply by adjoining the 
corresponding row to a generator matrix, and no outer exhaustive search over arbitrary 
extending vectors is required. This is particularly useful for structured families of 
non-GRS MDS and NMDS codes for which explicit classes of deep holes are already known or 
can be described by polynomial evaluations, as illustrated by the families listed in 
Table~\ref{table.covering radius} (see the corresponding references for their deep holes) 
and by the family of ESGRS codes studied in Section~\ref{sec.4}. 
Therefore, our approach does not attempt to solve the general hard problem of determining 
deep holes of arbitrary linear codes \cite{M1984}. Instead, it separates this problem from 
the extension step and turns concrete deep holes into explicit constructions of longer 
non-GRS MDS and NMDS codes. 
\end{enumerate}
\end{remark}

\section{Application to ESGRS codes}\label{sec.4}

We have listed several known families of $[n,k]_q$ non-GRS MDS or NMDS codes with covering radius $n-k$ in Table \ref{table.covering radius}. 
Although explicit deep holes are known for these families listed in Table~\ref{table.covering radius}, 
they do not always provide the specific non-GRS MDS-NMDS family 
needed for the applications pursued in Question~\ref{ques.deep hole}. 
In this section, we apply the framework established in Theorem \ref{th.connection deep holes and extened codes} 
to the specific family of non-GRS MDS-NMDS ESGRS codes (see Definition \ref{def.codes}) 
introduced by Li $et~al.$ in \cite{LSZ2024} and further developed by Abdukhalikov $et~al.$ in \cite{ADV2025}. 
We determine the covering radius and deep holes of ESGRS codes, 
and further derive more non-GRS MDS-NMDS codes with larger lengths and dimensions from ESGRS codes with explicit deep holes.

\subsection{The covering radius and some criteria on deep holes of ESGRS codes}

We first need to determine the covering radius of ESGRS codes.

\begin{lemma}\label{lem.covering radius}
    For any $[n+1,k]_q$ ESGRS code $\C_k(\SSS,{\bf v},\infty)$, we have   
    $$\rho(\C_k(\SSS,{\bf v},\infty))=n-k+1.$$
\end{lemma}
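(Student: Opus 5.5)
The covering radius will be pinned between two bounds. Write $N=n+1$ and $K=k$, so the goal is $\rho(\C_k(\SSS,{\bf v},\infty))=N-K$.

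\textbf{Upper bound.} We use the redundancy bound $\rho(\C)\le N-K$, which holds for any $[N,K]_q$ code. To verify it, recall that Lemma~\ref{lem.must be MDS or NMDS} says $\C_k(\SSS,{\bf v},\infty)$ is MDS or NMDS. By Lemmas~\ref{lem.MDS} and~\ref{lem.NMDS}, some $k$ columns of the generator matrix in~\eqref{eq.ESGRS_generator_matrix} are linearly independent; concretely, any $k-1$ of the first $n$ columns together with the last column will do. So there is an information set $I$ of size $k$. Given any ${\bf u}\in\F_q^{n+1}$, choose the codeword ${\bf c}$ that agrees with ${\bf u}$ on $I$. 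Then $d({\bf u},{\bf c})\le n+1-k$, which gives $\rho\le n-k+1$.

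\textbf{Lower bound.} We exhibit a vector at distance exactly $n-k+1$ from the code. The natural candidate is the evaluation vector of a polynomial lying outside $\mathcal V_k$. Take $g(x)=x^{k-1}$ and ${\bf u}=(v_1a_1^{k-1},\ldots,v_na_n^{k-1},0)$. For any $f\in\mathcal V_k$ with $x^k$-coefficient $f_k$, the difference ${\bf u}-{\bf c}_f$ has
\begin{itemize}
\item[] coordinates $v_ih(a_i)$ for $i\le n$, where $h=x^{k-1}-f$,
\item[] last coordinate $-f_k$.
\end{itemize}
We split into two cases.
\begin{itemize}
\item[] If $f_k=0$, then $h$ is a nonzero polynomial of degree exactly $k-1$, so it has at most $k-1$ roots among the $a_i$. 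The weight is then at least $n-k+1$.
\item[] If $f_k\neq0$, then $h$ has degree $k$ and so at most $k$ roots. The weight is at least $(n-k)+1$, where the extra $1$ comes from the last coordinate.
\end{itemize}
In either case $d({\bf u},\C)\ge n-k+1$, which matches the upper bound. This proves the lemma.

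\textbf{Main obstacle.} The only delicate point is choosing ${\bf u}$ so that both cases give the bound. Using $x^{k-1}$ works because it is exactly the monomial missing from $\mathcal V_k$ in degrees below $k$. The last coordinate then compensates for the single extra possible root when $f_k\neq0$. If this choice somehow failed, the fallback is $x^{k+1}$ with last coordinate $0$, which needs a more careful root count.
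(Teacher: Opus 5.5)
Your proof is correct and is essentially the paper's argument. The upper bound is the redundancy bound in both. For the lower bound, your vector ${\bf u}$ is a codeword of the EGRS supercode $\GRS_{k+1}(\SSS,{\bf v},\infty)$ lying outside $\C_k(\SSS,{\bf v},\infty)$, and your root count is the MDS property of that supercode, so your argument is an explicit instance of the supercode lemma the paper invokes.
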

\begin{IEEEproof}
    With Definition \ref{def.codes}, it follows from the redundancy bound (\!\! \cite[Corollary 11.1.3]{HP2003}) that 
    \begin{align}\label{eq.coveringradius11}
        \rho(\C_k(\SSS,{\bf v},\infty))\leq n-k+1.
    \end{align}
    On the other hand, by \eqref{eq.GRS.generator matrix} and \eqref{eq.ESGRS_generator_matrix}, 
    we have that $\C_k(\SSS,{\bf v},\infty)\subseteq \GRS_{k+1}(\SSS,{\bf v},\infty)$, 
    where $\GRS_{k+1}(\SSS,{\bf v},\infty)$ is an $[n+1,k+1,n-k+1]_q$ MDS code.  
    Combining with the supercode lemma (\!\! \cite[Corollary 11.1.5]{HP2003}), 
    it then implies that 
    \begin{align}\label{eq.coveringradius22}
        \begin{split}
         \rho(\C_k(\SSS,{\bf v},\infty))  
        \geq \min \{\wt(\mathbf{c}):\mathbf{c}\in \GRS_{k+1}(\SSS,{\bf v},\infty)\setminus \C_k(\SSS,{\bf v},\infty)\} 
        \geq n-k+1.
    \end{split} 
    \end{align}
    We complete the proof by considering \eqref{eq.coveringradius11} and \eqref{eq.coveringradius22} together. 
\end{IEEEproof}

Recall that the star product of two vectors $\mathbf{a}=(a_1,a_2,\ldots,a_n)$ and $\mathbf{b}=(b_1,b_2,\ldots,b_n)$ 
in $\F_q^n$ yields a new vector $\mathbf{a}\star \mathbf{b}=(a_1b_1,a_2b_2,\ldots,a_nb_n)$. 
With Lemmas \ref{lem.must be MDS or NMDS} and \ref{lem.covering radius}, 
we obtain a criterion for determining deep holes of ESGRS codes.

\begin{theorem}\label{th.deep hole}
    Suppose that $\mathbf{u}\in \F_q^{n+1}$ and $\left(\begin{array}{c}
        G_{\C_k(\SSS,{\bf v},\infty)} \\
        \mathbf{u}
    \end{array}\right)$ generates a linear code $\C_{\bf u}$. 
    Then $\mathbf{u}$ is a deep hole of $\C_k(\SSS,{\bf v},\infty)$ if and only if $\C_{\bf u}$ is an $[n+1,k+1,n-k+1]_q$ MDS code. 
\end{theorem}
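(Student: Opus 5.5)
The plan is to reduce Theorem~\ref{th.deep hole} to Lemma~\ref{lem.mds_nmds_deep_hole}, applied with $\C=\C_k(\SSS,{\bf v},\infty)$. This code is an $[N,k]_q$ code with $N=n+1$. By Lemma~\ref{lem.covering radius}, its covering radius is $\rho(\C)=n-k+1=N-k$, so the hypothesis $\rho(\C)=N-k$ of Lemma~\ref{lem.mds_nmds_deep_hole} holds.

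The remaining hypothesis is $d(\C)\geq\rho(\C)$. By Lemma~\ref{lem.must be MDS or NMDS}, every ESGRS code is either MDS, with $d=n-k+2$, or NMDS, with $d=n-k+1$. In both cases $d(\C)\geq n-k+1=\rho(\C)$. Lemma~\ref{lem.mds_nmds_deep_hole} then says that ${\bf u}$ is a deep hole if and only if $\C_{\bf u}$ is an $[N,k+1,N-k]_q=[n+1,k+1,n-k+1]_q$ MDS code, which is exactly the claim.

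The only point needing care is the implication from the MDS property back to ${\bf u}$ being a deep hole. If $\C_{\bf u}$ has dimension $k+1$, then ${\bf u}\notin\C$, and
$$n-k+1=d(\C_{\bf u})=\min\{d(\C),d({\bf u},\C)\}\le d({\bf u},\C)\le\rho(\C)=n-k+1,$$
so $d({\bf u},\C)=\rho(\C)$ and ${\bf u}$ is a deep hole. Here the equality $d(\C_{\bf u})=\min\{d(\C),d({\bf u},\C)\}$ holds because $d({\bf u},\C)$ is invariant under scaling ${\bf u}$ by nonzero scalars. This is the same reasoning as in the proof of Lemma~\ref{lem.mds_nmds_deep_hole}, so no real obstacle is expected.
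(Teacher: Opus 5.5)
Your proposal is correct and matches the paper's proof: the paper also combines Lemma~\ref{lem.must be MDS or NMDS} and Lemma~\ref{lem.covering radius} to get $d(\C_k(\SSS,{\bf v},\infty))\geq n-k+1=\rho(\C_k(\SSS,{\bf v},\infty))$, and then applies Lemma~\ref{lem.mds_nmds_deep_hole} with code length $n+1$. Your extra check of the converse direction is sound and only spells out what that lemma already contains: $\dim \C_{\bf u}=k+1$ forces ${\bf u}\notin\C$, and then $d(\C_{\bf u})=\min\{d(\C),d({\bf u},\C)\}$.
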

\begin{IEEEproof}
    With Lemmas \ref{lem.must be MDS or NMDS} and \ref{lem.covering radius}, we have 
    $$d(\C_k(\SSS,{\bf v},\infty))\geq n-k+1=\rho(\C_k(\SSS,{\bf v},\infty))$$ for any ESGRS code $\C_k(\SSS,{\bf v},\infty)$.
    Then the desired result follows directly from Lemma \ref{lem.mds_nmds_deep_hole}. 
\end{IEEEproof}


{
\begin{theorem}\label{th.deep hole g(x)}
    Suppose that $\mathbf{g}=(g(a_1),g(a_2),\ldots,g(a_{n}),u_{n+1})\in \F_q^{n+1}$, 
    where $g(x)=g'(x)+f(x)\in \F_q[x]$ with $f(x)\in \mathcal{V}_k$. 
    If $(\mathbf{v},v_{n+1})\star \mathbf{g}$ is a deep hole of 
    the ESGRS code $\C_k(\SSS,{\bf v},\infty)$ for $v_{n+1}\in \F_q$,  
    then $g'(x)\in \F_q[x]\setminus \mathcal{V}_k$. 
\end{theorem}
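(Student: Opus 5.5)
We prove the contrapositive: if $g'(x)\in\mathcal{V}_k$, then $(\mathbf{v},v_{n+1})\star\mathbf{g}$ is not a deep hole of $\C_k(\SSS,{\bf v},\infty)$. The idea is that when $g'\in\mathcal{V}_k$, the whole polynomial $g=g'+f$ lies in $\mathcal{V}_k$. The vector $(\mathbf{v},v_{n+1})\star\mathbf{g}$ then agrees with a codeword on the first $n$ coordinates, so its error distance to the code is at most $1$. This is smaller than the covering radius $n-k+1$ given by Lemma~\ref{lem.covering radius}.

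\emph{Step 1.} Assume $g'(x)\in\mathcal{V}_k$. Since $\mathcal{V}_k$ is an $\F_q$-linear space and $f(x)\in\mathcal{V}_k$, we get $h(x):=g(x)=g'(x)+f(x)\in\mathcal{V}_k$. Let $h_k$ denote the coefficient of $x^k$ in $h$. By Definition~\ref{def.codes}, the vector
\[
\mathbf{c}_h=(v_1h(a_1),v_2h(a_2),\ldots,v_nh(a_n),h_k)
\]
is a codeword of $\C_k(\SSS,{\bf v},\infty)$.

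\emph{Step 2.} Compare $\mathbf{c}_h$ with
\[
\mathbf{w}:=(\mathbf{v},v_{n+1})\star\mathbf{g}=(v_1g(a_1),\ldots,v_ng(a_n),v_{n+1}u_{n+1}).
\]
The first $n$ coordinates coincide, so $d(\mathbf{w},\mathbf{c}_h)\le 1$, and hence $d(\mathbf{w},\C_k(\SSS,{\bf v},\infty))\le 1$.

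\emph{Step 3.} The standing assumption $3\le k\le n-2$ in Definition~\ref{def.codes} gives $n-k+1\ge 3>1$. By Lemma~\ref{lem.covering radius}, $\rho(\C_k(\SSS,{\bf v},\infty))=n-k+1$, so $\mathbf{w}$ cannot attain the covering radius and is therefore not a deep hole.

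Alternatively, one can phrase the argument through Theorem~\ref{th.deep hole}: $\mathbf{w}$ lies within distance $1$ of a codeword $\mathbf{c}_h$, so the code $\C_{\mathbf{w}}$ contains $\mathbf{w}-\mathbf{c}_h$, which has weight at most $1$. Hence $\C_{\mathbf{w}}$ is not an $[n+1,k+1,n-k+1]_q$ MDS code. The direct argument above is simpler, and there is no real obstacle. The only points that need care are that $\mathcal{V}_k$ is closed under addition, and that the last coordinate is irrelevant because only one coordinate can differ.
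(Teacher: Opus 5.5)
Your proof is correct. It rests on the same observation as the paper's proof: once $g'\in\mathcal{V}_k$, the polynomial $g$ lies in $\mathcal{V}_k$, so $(\mathbf{v},v_{n+1})\star\mathbf{g}$ differs from the codeword $\mathbf{c}_h$ only in the last coordinate. You finish the argument differently, though.

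The paper row-reduces $\left(\begin{smallmatrix} G_{\C_k(\SSS,{\bf v},\infty)}\\ (\mathbf{v},v_{n+1})\star\mathbf{g}\end{smallmatrix}\right)$ to a matrix whose last row is $(0,\ldots,0,u_{n+1}v_{n+1}-f_k-g_k)$. Using $k+1\le n$, it notes that every $(k+1)\times(k+1)$ minor taken from the first $n$ columns vanishes. It then concludes via Lemma~\ref{lem.MDS} that the augmented code is not MDS, and finally invokes Theorem~\ref{th.deep hole}.

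You instead compare the error distance, which is at most $1$, directly with $\rho=n-k+1\ge 3$ from Lemma~\ref{lem.covering radius}. Your route is shorter and more elementary. It uses only the lower bound on the covering radius. It does not go through Theorem~\ref{th.deep hole}, and hence needs neither Lemma~\ref{lem.mds_nmds_deep_hole} nor the MDS/NMDS dichotomy of Lemma~\ref{lem.must be MDS or NMDS}.

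The paper's route has the advantage of staying within the criterion ``deep hole if and only if the augmented code is MDS.'' That criterion drives the later deep-hole computations in Theorems~\ref{th.specific deep hole1} and~\ref{th.specific deep hole2}. Your closing alternative remark is essentially the paper's proof, with the vanishing minors replaced by the codeword $\mathbf{w}-\mathbf{c}_h$ of weight at most $1$.
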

}
\begin{IEEEproof}
    Assume that $g'(x)\in \mathcal{V}_k$. 
    Then $g(x)=g'(x)+f(x)\in \mathcal{V}_k$ and  
    $$(\mathbf{v}, v_{n+1})\star \mathbf{g}=(v_1(g'(a_1)+f(a_1)),v_2(g'(a_2)+f(a_2)),\ldots,v_n(g'(a_n)+f(a_n)),u_{n+1}v_{n+1}).$$
   It can be checked that 
        \begin{align*}
            \left(\begin{array}{c}
                G_{\C_k(\SSS,{\bf v},\infty)} \\ 
                (\mathbf{v},v_{n+1})\star \mathbf{g}
            \end{array}\right) = \left(\begin{array}{ccccc}
                v_1 & v_2 &  \ldots & v_{n} & 0 \\ 
                v_1a_1  & v_2a_2 &  \ldots & v_{n}a_{n} & 0 \\
                \vdots &  \vdots &   \ddots & \vdots & \vdots \\ 
                v_1a_1^{k-2}  & v_2a_2^{k-2} &   \ldots & v_{n}a_{n}^{k-2} & 0 \\ 
                v_1a_1^k  & v_2a_2^k &  \ldots & v_{n}a_{n}^k & 1 \\ 
                v_1(g'(a_1)+f(a_1)) & v_2(g'(a_2)+f(a_2))&  \ldots & v_{n}(g'(a_n)+f(a_n)) & u_{n+1}v_{n+1}
            \end{array}\right)  
        \end{align*}
        and it can be transformed to the matrix 
        \begin{align}\label{eq.matrix1_deephole}
            \left(\begin{array}{ccccc}
                v_1 & v_2 & \ldots & v_{n} & 0 \\ 
                v_1a_1  & v_2a_2 &  \ldots & v_{n}a_{n} & 0 \\
                \vdots &  \vdots &   \ddots & \vdots & \vdots \\ 
                v_1a_1^{k-2}  & v_2a_2^{k-2} &   \ldots & v_{n}a_{n}^{k-2} & 0 \\ 
                v_1a_1^k &  v_2a_2^k &  \ldots & v_{n}a_{n}^k & 1 \\ 
                0 & 0 & \ldots & 0 & u_{n+1}v_{n+1}-f_k-g_k 
            \end{array}\right) 
        \end{align}
        by a series of elementary row operations, where $f_k$ and $g_k$ are respective coefficients 
        of $x^k$ in $f(x)$ and $g'(x)$. 
        From Definition \ref{def.codes}, we have $k+1\leq n$. 
        It then follows that the determinants of all the $(k+1)\times (k+1)$ matrices consisting of any $k+1$ columns 
        of the first $n$ columns of the matrix given in \eqref{eq.matrix1_deephole} equal zero. 
        With Lemma \ref{lem.MDS}, we immediately conclude that 
        the linear code generated by $\left(\begin{array}{c}
            G_{\C_k(\SSS,{\bf v},\infty)} \\ 
            (\mathbf{v},v_{n+1})\star \mathbf{g}
        \end{array}\right)$ is not MDS. 
        According to Theorem \ref{th.deep hole},  
        we get that $g'(x)\in \F_q[x]\setminus \mathcal{V}_k$ if 
        $(\mathbf{v},v_{n+1})\star \mathbf{g}$ is a deep hole of 
        the ESGRS code $\C_k(\SSS,{\bf v},\infty)$. 
        This completes the proof. 
\end{IEEEproof}

\subsection{Two explicit classes of deep holes of ESGRS codes}\label{subsec.deep-holes}

With Theorem \ref{th.deep hole g(x)}, we can always assume that no term in $g'(x)$ belongs to $\mathcal{V}_k$. 
In the following, we consider two cases of $g'(x)$ in terms of their lowest degrees. 
In other words, we consider $g'(x)=g_{k-1}x^{k-1}$ with $g_{k-1}\in \F_q^*$ and 
$g'(x)=g_{k+1}x^{k+1}+g_{k-1}x^{k-1}$ with $g_{k+1}\in \F_q^*$ and $g_{k-1}\in \F_q$ in the following. 
For the first case, we establish a connection with the well-known Roth-Lempel codes.

{
\begin{theorem}\label{th.specific deep hole1}
    Suppose that $\mathbf{g}=(g(a_1),g(a_2),\ldots,g(a_{n}),u_{n+1})\in \F_q^{n+1}$, where  
    $g(x)=g_{k-1}x^{k-1}+f(x)\in \F_q[x]$ with $g_{k-1}\in \F_q^*$ and $f(x)\in \mathcal{V}_k$. 
    Then for any $v_{n+1}\in \F_q$, $(\mathbf{v},v_{n+1})\star \mathbf{g}$ is a deep hole of 
    the ESGRS code $\C_k(\SSS,{\bf v},\infty)$ 
    if and only if one of the following conditions holds: 
    \begin{enumerate}
        \item [\rm 1)]  $u_{n+1}v_{n+1}=f_k$; 
        \item [\rm 2)]  $u_{n+1}v_{n+1}\neq f_k$ and $\SSS$ is an $(n,k,g_{k-1}(u_{n+1}v_{n+1}-f_k)^{-1})$-set. 
    \end{enumerate}
\end{theorem}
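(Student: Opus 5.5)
By Theorem~\ref{th.deep hole}, $(\mathbf{v},v_{n+1})\star\mathbf{g}$ is a deep hole of $\C_k(\SSS,{\bf v},\infty)$ if and only if the $(k+1)\times(n+1)$ matrix obtained by appending this vector to $G_{\C_k(\SSS,{\bf v},\infty)}$ generates an $[n+1,k+1,n-k+1]_q$ MDS code. By Lemma~\ref{lem.MDS}, this holds if and only if every $k+1$ of its columns are linearly independent. So the plan is to reduce the matrix by row operations and then read off the relevant $(k+1)\times(k+1)$ minors.

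\textbf{Reduction.} As in the proof of Theorem~\ref{th.deep hole g(x)}, subtract from the last row the combination of rows of $G_{\C_k(\SSS,{\bf v},\infty)}$ that encodes $f(x)\in\mathcal{V}_k$. The last row becomes
\[
\bigl(v_1g_{k-1}a_1^{k-1},\ldots,v_ng_{k-1}a_n^{k-1},\,u_{n+1}v_{n+1}-f_k\bigr).
\]
Divide it by $g_{k-1}\neq 0$ and swap it with the row of $k$th powers. Write $\epsilon=(u_{n+1}v_{n+1}-f_k)g_{k-1}^{-1}$. The matrix now has rows $v_ja_j^i$ for $0\le i\le k$ in the first $n$ columns. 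Its last column is $(0,\ldots,0,\epsilon,1)^T$, where the entry $\epsilon$ sits in the $x^{k-1}$ row and the entry $1$ in the $x^k$ row.

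\textbf{Minors.} A choice of $k+1$ columns among the first $n$ gives a Vandermonde determinant times $\prod v_j$, which is nonzero. It remains to treat $k$ columns $a_{i_1},\ldots,a_{i_k}$ together with the last column. Expanding along the last column gives, up to the nonzero factor $\prod v_{i_j}\prod_{s<t}(a_{i_t}-a_{i_s})$ and a sign, the quantity
\[
\det\text{(rows }0,\ldots,k-1) \;-\;\epsilon\cdot\det\text{(rows }0,\ldots,k-2,k).
\]
The first determinant is the plain Vandermonde determinant. The second is the Vandermonde determinant multiplied by the elementary symmetric function $\sigma_1=a_{i_1}+\cdots+a_{i_k}$, by the classical Schur-function identity. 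Hence the minor is nonzero exactly when $1-\epsilon\,\sigma_1\neq 0$.

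\textbf{Case split.} If $\epsilon=0$, i.e.\ $u_{n+1}v_{n+1}=f_k$, every such minor is nonzero, which gives condition 1). If $\epsilon\neq0$, the requirement becomes $\sigma_1\neq\epsilon^{-1}=g_{k-1}(u_{n+1}v_{n+1}-f_k)^{-1}$ for every $k$-subset of $\SSS$. That is precisely the statement that $\SSS$ is an $(n,k,g_{k-1}(u_{n+1}v_{n+1}-f_k)^{-1})$-set, which gives condition 2).

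The main obstacle is getting the cofactor bookkeeping right: the sign and the placement of $\epsilon$ after the row swap. I would double-check these by comparing with the Roth--Lempel matrix \eqref{eq.ESGRS_generator_matrix of Roth-Lempel code}, whose $(n,k-1,\delta)$-set criterion in Lemma~\ref{lem.RL_code_MDS} comes from the same $\sigma_1$ computation.
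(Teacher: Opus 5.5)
Your proof is correct. Your reduction step matches the paper's: the row-reduced matrix you obtain, with last column $(0,\ldots,0,\epsilon,1)^T$, is exactly the matrix in \eqref{eq.matrix2_deephole}. From there you take a different route.

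\textbf{The paper's route.} It never computes minors. When $\epsilon=0$, it recognizes the matrix as a generator matrix of the EGRS code $\GRS_{k+1}(\SSS,\vvv,\infty)$. When $\epsilon\neq0$, it rescales columns and identifies the code with the Roth--Lempel code $\RL_{k+1,\delta}(\SSS)$, $\delta=\epsilon^{-1}$, punctured at its penultimate coordinate. For sufficiency it invokes Lemma~\ref{lem.RL_code_MDS} together with the fact that puncturing preserves the MDS property. For necessity it re-appends the column $(0,\ldots,0,1)^T$ and asserts, without further detail, that the resulting Roth--Lempel code is still MDS.

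\textbf{Your route.} You evaluate every $(k+1)\times(k+1)$ minor directly and get $\prod_j v_{i_j}\cdot\prod_{s<t}(a_{i_t}-a_{i_s})\cdot\bigl(1-\epsilon\,\sigma_1(\SSS_k)\bigr)$ up to sign. This is precisely the $g_{k+1}=0$ specialization of the computation the paper itself carries out for Theorem~\ref{th.specific deep hole2}; compare Remark~\ref{rem.existence specific deep hole2}.2). Your sign bookkeeping is right: the two nonzero entries of the last column sit in adjacent rows, so their cofactors have opposite signs. That gives $1-\epsilon\sigma_1$ rather than $1+\epsilon\sigma_1$, consistent with the $(n,k,\delta)$-set criterion.

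\textbf{What each buys.} Your argument is self-contained and handles both directions at once. It does not depend on the Roth--Lempel criterion or on the paper's asserted step that appending $(0,\ldots,0,1)^T$ preserves MDS. That step is true here, since every minor containing that column reduces after at most two cofactor expansions to an ordinary Vandermonde determinant, but the paper does not spell it out. The paper's route, in exchange, exhibits the structural identification of the extended code with EGRS and punctured Roth--Lempel codes. That identification is what is reused in Theorem~\ref{th.more non-GRS MDS codes111} to obtain the monomial equivalence with $\RL_{k+1,0}(\SSS)$.
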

}
\begin{IEEEproof}
Since $(\mathbf{v}, v_{n+1})\star \mathbf{g}=(v_1g(a_1),v_2g(a_2),\ldots,v_ng(a_n),u_{n+1}v_{n+1})$, 
we have that 
    \begin{align}\label{eq.matrix2_deephole}
        \left(\begin{array}{c}
            G_{\C_k(\SSS,{\bf v},\infty)} \\
            (\mathbf{v},v_{n+1})\star \mathbf{g}
        \end{array}\right) \overset{r}{\backsimeq}  \left(\begin{array}{ccccc}
        v_1 & v_2 &  \ldots & v_{n} & 0 \\ 
        v_1a_1  & v_2a_2 &  \ldots & v_{n}a_{n} & 0 \\
        \vdots &  \vdots &   \ddots & \vdots & \vdots \\
        v_1a_1^{k-2}  & v_2a_2^{k-2} &   \ldots & v_{n}a_{n}^{k-2} & 0 \\
        v_1a_1^{k-1}  & v_2a_2^{k-1} &   \ldots & v_{n}a_{n}^{k-1} & g_{k-1}^{-1}\left(u_{n+1}v_{n+1}-f_{k}\right) \\
        v_1a_1^k & v_2a_2^k&  \ldots & v_{n}a_{n}^k & 1 \\
    \end{array}\right),
\end{align}  
where ``$\overset{r}{\backsimeq}$" denotes that these two matrices are row equivalent. 
We consider the following two cases. 
    
{\textbf{Case 1:}} $u_{n+1}v_{n+1}=f_k$. Then the linear code generated by the matrix given in \eqref{eq.matrix2_deephole} 
        is an $[n+1,k+1,n-k+1]_q$ MDS code, actually the EGRS code $\GRS_{k+1}(\SSS,\mathbf{v},\infty)$. 
        According to Theorem \ref{th.deep hole}, $(\mathbf{v},v_{n+1})\star \mathbf{g}$ is a deep hole of $\C_k(\SSS,{\bf v},\infty)$. 

{\textbf{Case 2:}} $u_{n+1}v_{n+1}\neq f_k$. We note that the linear code generated by the matrix given in \eqref{eq.matrix2_deephole} is  
        equivalent to the linear code $\C'$ generated by the matrix 
        \begin{align*}
            \left(\begin{array}{ccccc}
                1 & 1 &  \ldots & 1 & 0 \\ 
                a_1 &  a_2 &  \ldots & a_{n} & 0 \\
                \vdots &  \vdots &   \ddots & \vdots & \vdots \\
                a_1^{k-2} &  a_2^{k-2} &   \ldots & a_{n}^{k-2} & 0 \\
                a_1^{k-1} &  a_2^{k-1} &   \ldots & a_{n}^{k-1} & 1 \\
                a_1^k &  a_2^k&  \ldots & a_{n}^k & g_{k-1}\left(u_{n+1}v_{n+1}-f_{k}\right)^{-1} \\
            \end{array}\right).
        \end{align*}
        Note also that $\C'$ coincides with a punctured code of the $[n+2,k+1]_q$ Roth-Lempel code  
        $\mathrm{RL}_{k+1,\delta}(\SSS)$ with $\delta=g_{k-1}\left(u_{n+1}v_{n+1}-f_{k}\right)^{-1}$, 
        defined as in Definition \ref{def.Roth-Lempel code}, by deleting its penultimate coordinate.  
        It then follows from Lemma \ref{lem.RL_code_MDS} and the fact that a punctured code of an MDS code is still MDS that 
        $\C'$ is an $[n+1,k+1,n-k+1]_q$ MDS code if $\SSS$ is an $\left(n,k,g_{k-1}(u_{n+1}v_{n+1}-f_k)^{-1}\right)$-set. 
        
        Conversely, if $\C'$ is an $[n+1,k+1,n-k+1]_q$ MDS code with generator matrix $G'$, 
        then the linear code generated by the matrix 
        $$\overline{G'}=\left(G',~~(0,0,\ldots,0,1)^T\right),$$ 
        that is, the matrix obtained by appending a column vector $(0,0,\ldots,0,1)^T$ of length $k+1$ to $G'$,  
        is necessarily MDS and this linear code is just the Roth-Lempel code $\RL_{k+1,\delta}(\SSS)$, 
        where $\delta=g_{k-1}\left(u_{n+1}v_{n+1}-f_{k}\right)^{-1}$. 
        Again, from Lemma \ref{lem.RL_code_MDS}, we conclude that $\SSS$ is 
        an $\left(n,k,g_{k-1}(u_{n+1}v_{n+1}-f_k)^{-1}\right)$-set. 
        Hence, $(\mathbf{v},v_{n+1})\star \mathbf{g}$ is a deep hole of $\C_k(\SSS,{\bf v},\infty)$ 
        if and only if 
        $\SSS$ is an $(n,k,g_{k-1}(u_{n+1}v_{n+1}-f_k)^{-1})$-set.
    
    Combining {\bf Cases 1} and {\bf 2}, we complete the proof. 
\end{IEEEproof}

For the second case, we use elementary symmetric functions to characterize the deep holes of ESGRS codes. 
Let $\sigma_{i}(x_1,x_2,\ldots,x_n)$ be the $i$th {\em elementary symmetric function} 
    in $n$ variables $x_1,x_2,\ldots,x_n$ such that 
    \begin{align*}
        \sigma_{i}(x_1,x_2,\ldots,x_n)
         =\left\{\begin{array}{ll}
            1, & {\rm if}~i=0, \\ 
            \sum_{I \subseteq \{1,2,\ldots,n\},~ |I|=i}\prod_{j\in I}x_j, & {\rm if}~1\leq i \leq n,\\
            0, & {\rm if}~i<0~{\rm or}~i>n.    
        \end{array} \right.
    \end{align*}
Furthermore, for any set $\mathcal{S}_k=\{t_1,t_2,\ldots,t_k\} \subseteq \mathcal{S}\subseteq \F_q$, we let  
$$\sigma_{i}(\mathcal{S}_k)=\sigma_{i}(t_1,t_2,\ldots,t_k).$$ 
Note also that $|\mathcal{S}_k|=k$ in this case. 

With this notation, we have the following lemma. 
\begin{lemma}{\rm (\!\! \cite[Lemma 2.3]{YZ2024})}\label{lem.det of generalized Vandermone matrix}
    Let $m$ be a fixed positive integer and let 
    $I_m=\left\{0,1,\ldots,m-1 \right\}=\left\{t_1,t_2,\ldots,t_s\right\}\cup \left\{r_1,r_2,\ldots,r_{s'}\right\}$ 
    {\color{black}be} any partition of $I_m$ with $m=s+s'$, $0=t_1<t_2<\cdots<t_s=m-1$ and $r_1<r_2<\cdots<r_{s'}$.  
    Then for any $\SSS=\left\{a_1,a_2,\ldots,a_s \right\}\subseteq \F_q$, we have  
    \begin{align*}
        \det \left(\begin{array}{cccc}
            a_1^{t_1} & a_2^{t_1} &  \ldots & a_s^{t_1} \\
            a_1^{t_2} & a_2^{t_2} &  \ldots & a_s^{t_2} \\
            \vdots & \vdots & \ddots & \vdots \\
            a^{t_s}_1 & a^{t_s}_2 &  \ldots & a^{t_s}_s \\
        \end{array}\right)=\prod_{1\leq i<j\leq s}(a_j-a_{i}) \cdot \det\left(\begin{array}{cccc}
        \sigma_{s-r_1}(\SSS)  & \sigma_{s-r_2}(\SSS)  &  \ldots & \sigma_{s-r_{s'}}(\SSS) \\
        \sigma_{s-r_1+1}(\SSS)  & \sigma_{s-r_2+1}(\SSS)  & \ldots & \sigma_{s-r_{s'}+1}(\SSS) \\
        \vdots & \vdots &  \ddots & \vdots \\
        \sigma_{s-r_1+s'-1}(\SSS)  & \sigma_{s-r_2+s'-1}(\SSS)  & \ldots & \sigma_{s-r_{s'}+s'-1}(\SSS) \\
    \end{array}\right).   
    \end{align*}
\end{lemma}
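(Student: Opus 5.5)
The plan is to prove the identity by computing one auxiliary $m\times m$ determinant in two ways. Write $P(x)=\prod_{i=1}^{s}(x-a_i)=\sum_{j=0}^{s}(-1)^{s-j}\sigma_{s-j}(\SSS)x^j$. Define $M$ as the $m\times m$ matrix whose first $s$ columns are $(1,a_i,a_i^2,\ldots,a_i^{m-1})^T$ for $1\le i\le s$, and whose last $s'$ columns are the standard unit vectors $e_{r_1},\ldots,e_{r_{s'}}$. Rows are indexed by exponents $0,1,\ldots,m-1$.

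\emph{First evaluation.} Apply Laplace expansion to $M$ along the unit columns. This deletes rows $r_1,\ldots,r_{s'}$ and leaves the left-hand matrix, whose rows have exponents $t_1,\ldots,t_s$. Hence $\det M=\epsilon_1\det\bigl(a_j^{t_i}\bigr)$, where the sign $\epsilon_1=(-1)^{\sum_{l}(r_l+1)+\sum_{l}(s+l)}$ is explicit.

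\emph{Second evaluation.} Multiply $M$ on the left by a matrix $L$ built as follows.
\begin{itemize}
\item The first $s$ rows of $L$ are the unit rows $e_0^T,\ldots,e_{s-1}^T$, so they extract the powers $a_i^0,\ldots,a_i^{s-1}$.
\item The remaining $s'$ rows, for $j=0,\ldots,s'-1$, are the coefficient vectors of $x^jP(x)$. Since $\deg x^jP=j+s\le m-1$, these are vectors of length $m$.
\end{itemize}
Because $P$ is monic, the rows $x^jP$ have leading entry $1$ in column $s+j$. So $L$ is upper unitriangular and $\det L=1$.

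Now compute $LM$.
\begin{itemize}
\item In the first $s$ columns, the last $s'$ rows vanish, since $a_i^jP(a_i)=0$. The top block is the ordinary Vandermonde matrix $(a_i^{k})_{0\le k\le s-1}$, with determinant $\prod_{i<j}(a_j-a_i)$.
\item In the column $e_{r_l}$, the row for $x^jP$ contains the coefficient of $x^{r_l}$ in $x^jP(x)$, which is $(-1)^{s-r_l+j}\sigma_{s-r_l+j}(\SSS)$.
\end{itemize}
Thus $LM$ is block upper triangular. This gives
$$\det M=\prod_{i<j}(a_j-a_i)\cdot\det\bigl((-1)^{s-r_l+j}\sigma_{s-r_l+j}(\SSS)\bigr)_{j,l}.$$
The convention $\sigma_i=0$ for $i<0$ or $i>s$ makes every entry correct, including the ones where the coefficient falls outside the support of $x^jP$.

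\emph{Signs.} The sign $(-1)^{s-r_l+j}$ factors out column-wise as $(-1)^{s-r_l}$ and row-wise as $(-1)^j$. This leaves exactly the Jacobi–Trudi-type determinant $\det(\sigma_{s-r_l+j})$ in the statement, multiplied by a global sign $\epsilon_2$. The main obstacle is this sign bookkeeping: checking that $\epsilon_1\epsilon_2=1$. That reduces to comparing parities of $\sum_l r_l$, $\sum_j j$ and $\sum_l(s+l)$, which cancel in pairs. If a stray sign survives, I would reorder the unit columns, or move the $x^jP$ rows to the top of $L$, to absorb it. Finally, I note that the hypotheses $t_1=0$ and $t_s=m-1$ are not needed for the identity itself; they only ensure the partition is nondegenerate.
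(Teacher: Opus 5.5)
Your proof is correct. The paper does not prove this lemma; it is quoted from \cite{YZ2024}. So the useful comparison is with the standard route. That route identifies $\det(a_j^{t_i})/\prod_{i<j}(a_j-a_i)$ as the Schur polynomial $s_\lambda(a_1,\ldots,a_s)$ with $\lambda_i=t_{s+1-i}-(s-i)$, via the bialternant formula. It then applies the dual Jacobi--Trudi identity $s_\lambda=\det(e_{\lambda'_i-i+j})$. Matching this to the indices $s-r_l+j$ needs the classical fact that the complementary set $R$ encodes the conjugate partition, namely $\lambda'_l=s-1+l-r_l$. Your argument avoids the symmetric-function machinery. A single unitriangular change of basis built from the annihilating polynomial $P(x)=\prod_i(x-a_i)$ produces the Vandermonde block and the elementary-symmetric block together. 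The index pattern $s-r_l+j$ appears directly as the coefficient of $x^{r_l}$ in $x^jP(x)$. It also shows that this is a polynomial identity. It needs neither distinct $a_i$ nor the hypotheses $t_1=0$ and $t_s=m-1$; your remark on the latter is right.

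Two small repairs are needed.

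First, $L$ is lower unitriangular, not upper. Row $s+j$ has its last nonzero entry, the leading $1$ of $x^jP$, in column $s+j$. The conclusion $\det L=1$ is unaffected.

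Second, finish the sign check rather than hedge, because it closes:
\[
\epsilon_1=(-1)^{\sum_l r_l+s'+ss'+s'(s'+1)/2},\qquad \epsilon_2=(-1)^{ss'-\sum_l r_l+s'(s'-1)/2},
\]
so $\epsilon_1\epsilon_2=(-1)^{2ss'+s'(s'+1)}=1$. Your fallback of reordering the unit columns, or moving the $x^jP$ rows of $L$, would not have been a legitimate fix. Any such rearrangement changes $\det M$ or $\det L$ by a compensating sign, so it cannot absorb a discrepancy. The sign has to come out right by itself, and it does.
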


\begin{theorem}\label{th.specific deep hole2}
    Suppose that $\mathbf{g}=(g(a_1),g(a_2),\ldots,g(a_{n}),u_{n+1})\in \F_q^{n+1}$, where  
    $g(x)=g_{k+1}x^{k+1}+g_{k-1}x^{k-1}+f(x)\in \F_q[x]$ with $g_{k+1}\in \F_q^*$, $g_{k-1}\in \F_q$,  
    and $f(x)\in \mathcal{V}_k$. 
    Then for any $v_{n+1}\in \F_q$, $(\mathbf{v},v_{n+1})\star \mathbf{g}$ is a deep hole of 
    the ESGRS code $\C_k(\SSS,{\bf v},\infty)$ 
    if and only if 
    \begin{align*}
        g_{k-1}\notin 
        \left\{g_{k+1}\left(\sigma_2(\SSS_{k})-\sigma_1(\SSS_{k})^2\right)+(u_{n+1}v_{n+1}-f_k)\sigma_1(\SSS_{k}):~\SSS_{k} \subseteq \SSS   \right\} 
        \cup \left\{g_{k+1}\sigma_2(\SSS_{k+1}):~ \SSS_{k+1} \subseteq \SSS   \right\}.
    \end{align*}
\end{theorem}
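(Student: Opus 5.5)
By Theorem~\ref{th.deep hole}, $(\mathbf{v},v_{n+1})\star\mathbf{g}$ is a deep hole of $\C_k(\SSS,{\bf v},\infty)$ if and only if the code $\C_{\bf u}$ generated by $G_{\C_k(\SSS,{\bf v},\infty)}$ with this vector appended as a last row is an $[n+1,k+1,n-k+1]_q$ MDS code. By Lemma~\ref{lem.MDS}, this holds if and only if every $(k+1)\times(k+1)$ minor of that generator matrix is nonzero. The plan is to reduce the matrix by row operations and then evaluate the two types of minors with Lemma~\ref{lem.det of generalized Vandermone matrix}.

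First we simplify. Subtracting from the last row suitable combinations of the rows $\mathbf{v}\star(a_i^j)$ for $j\in\{0,\ldots,k-2,k\}$ removes $f(x)$ except for its $x^k$ contribution at the last coordinate. The last row becomes
\[
\bigl(v_1(g_{k+1}a_1^{k+1}+g_{k-1}a_1^{k-1}),\ldots,v_n(g_{k+1}a_n^{k+1}+g_{k-1}a_n^{k-1}),\,\lambda\bigr),\qquad \lambda:=u_{n+1}v_{n+1}-f_k .
\]
The factors $v_i$ are nonzero column scalings, so we may drop them. The rows are then indexed by the exponents $0,\ldots,k-2,k$, together with the combined row $g_{k+1}x^{k+1}+g_{k-1}x^{k-1}$. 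The last column is $(0,\ldots,0,1,\lambda)^T$.

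Next we compute the two kinds of minors.

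\textbf{(a) Minors using $k+1$ of the first $n$ columns.} Take a subset $\SSS_{k+1}$. By multilinearity in the last row, the minor equals $g_{k+1}D_1+g_{k-1}D_2$. Here $D_1$ is the generalized Vandermonde determinant with exponents $\{0,\ldots,k-2,k,k+1\}$ (missing $k-1$ from $I_{k+2}$), and $D_2$ has exponents $\{0,\ldots,k-2,k,k-1\}$, which is $-V$ after swapping two rows, where $V$ is the ordinary Vandermonde determinant. Applying Lemma~\ref{lem.det of generalized Vandermone matrix} with $s=k+1$, $s'=1$, $r_1=k-1$ gives $D_1=V\sigma_2(\SSS_{k+1})$. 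The minor is therefore $V\bigl(g_{k+1}\sigma_2(\SSS_{k+1})-g_{k-1}\bigr)$, which is nonzero if and only if $g_{k-1}\neq g_{k+1}\sigma_2(\SSS_{k+1})$.

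\textbf{(b) Minors using $k$ columns from $\SSS_k$ together with the last column.} Expand along the last column. We obtain $\lambda$ times the $k\times k$ minor with rows $\{0,\ldots,k-2,k\}$, minus the $k\times k$ minor with rows $\{0,\ldots,k-2\}$ plus the combined row, with the sign fixed by position. We evaluate each piece with Lemma~\ref{lem.det of generalized Vandermone matrix} for $s=k$:
\begin{itemize}
\item exponents $\{0,\ldots,k-2,k\}$ (missing $k-1$ in $I_{k+1}$) give $V\sigma_1(\SSS_k)$;
\item exponents $\{0,\ldots,k-2,k-1\}$ give $V$;
\item exponents $\{0,\ldots,k-2,k+1\}$ (missing $k-1,k$ in $I_{k+2}$) give $V$ times the $2\times2$ determinant $\det\begin{pmatrix}\sigma_1&\sigma_0\\ \sigma_2&\sigma_1\end{pmatrix}=\sigma_1(\SSS_k)^2-\sigma_2(\SSS_k)$.
\end{itemize}
Up to sign, the minor is then $V\bigl(g_{k+1}(\sigma_2(\SSS_k)-\sigma_1(\SSS_k)^2)+\lambda\sigma_1(\SSS_k)-g_{k-1}\bigr)$. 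It is nonzero exactly when $g_{k-1}$ avoids the first set in the statement.

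Finally, all minors are nonzero if and only if $g_{k-1}$ lies outside the union of the two sets, which gives the claim. The main obstacle is bookkeeping. We must check the signs and orientation conventions when applying Lemma~\ref{lem.det of generalized Vandermone matrix}, in particular the indices $\sigma_{s-r+j}$ and the row ordering, since the lemma lists rows in increasing exponent order. We must also make sure that in (b) the two expansion terms carry opposite signs in the way that yields exactly $+\lambda\sigma_1-g_{k-1}+g_{k+1}(\sigma_2-\sigma_1^2)$. I would verify this by first testing small cases, for example $k=1$ or $k=2$, directly.
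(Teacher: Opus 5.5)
Your proposal is correct and follows essentially the same route as the paper: you use Theorem~\ref{th.deep hole} and Lemma~\ref{lem.MDS} to reduce the question to the nonvanishing of all $(k+1)\times(k+1)$ minors of the row-reduced matrix, split these into minors with and without the last column, and evaluate each with Lemma~\ref{lem.det of generalized Vandermone matrix}. Your sign bookkeeping in (b) also works out: the cofactor of the entry $1$ in the $x^k$-row carries sign $-1$ and that of $\lambda$ carries $+1$, which gives exactly $V\bigl(g_{k+1}(\sigma_2(\SSS_k)-\sigma_1(\SSS_k)^2)+\lambda\sigma_1(\SSS_k)-g_{k-1}\bigr)$, as in the paper.
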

\begin{IEEEproof}
Under the given condition, we know that 
        $\left(\begin{array}{c}
            G_{\C_k(\SSS,{\bf v},\infty)} \\
            (\mathbf{v},v_{n+1})\star \mathbf{g}
        \end{array}\right)$ 
        is row equivalent to the following matrix 
        \begin{align}\label{eq.matrix3_deephole}
            \left(\begin{array}{ccccc}
                v_1 & v_2 &  \ldots & v_{n} & 0 \\ 
                v_1a_1  & v_2a_2 &  \ldots & v_{n}a_{n} & 0 \\
                \vdots &  \vdots & \ddots & \vdots & \vdots \\
                v_1a_1^{k-2}  & v_2a_2^{k-2} &  \ldots & v_{n}a_{n}^{k-2} & 0 \\
                v_1a_1^k & v_2a_2^k &  \ldots & v_{n}a_{n}^k & 1 \\
                v_1(g_{k+1}a_1^{k+1}+g_{k-1}a_1^{k-1}) & v_2(g_{k+1}a_2^{k+1}+g_{k-1}a_2^{k-1}) &  \ldots & v_{n}(g_{k+1}a_n^{k+1}+g_{k-1}a_n^{k-1}) & u_{n+1}v_{n+1}-f_{k} \\
        \end{array}\right)
        \end{align}
    Let $M_{k+1}$ be any $(k+1)\times (k+1)$ submatrix of the matrix given in \eqref{eq.matrix3_deephole}. 
    Since $k+1\leq n$, to determine whether the linear code generated by $\left(\begin{array}{c}
            G_{\C_k(\SSS,{\bf v},\infty)} \\
            (\mathbf{v},v_{n+1})\star \mathbf{g}
        \end{array}\right)$ is MDS, we need to consider the following two cases.

        {\bf Case 1:} {\em $M_{k+1}$ contains the last column 
        of the matrix given in \eqref{eq.matrix3_deephole}.} 
        In this case, $M_{k+1}$ has the following form 
        \begin{align*}
           \left(\begin{array}{ccccc}
               v_{i_1}  & v_{i_2}  & \ldots & v_{i_{k}} & 0\\
               v_{i_1}a_{i_1}  &  v_{i_2}a_{i_2}  & \ldots & v_{i_k}a_{i_{k}} & 0\\
               \vdots  & \vdots  & \ddots & \vdots & \vdots\\
               v_{i_1}a^{k-2}_{i_1}  & v_{i_2}a^{k-2}_{i_2}  & \ldots & v_{i_k}a^{k-2}_{i_{k}} & 0\\
               v_{i_1}a^{k}_{i_1}  & v_{i_2}a^{k}_{i_2}  & \ldots & v_{i_k}a^{k}_{i_{k}} & 1\\
               v_{i_1} (g_{k+1}a^{k+1}_{i_1}+g_{k-1}a^{k-1}_{i_1}) & v_{i_2} (g_{k+1}a^{k+1}_{i_2}+g_{k-1}a^{k-1}_{i_2}) & \ldots & v_{i_k}(g_{k+1}a^{k+1}_{i_k}+g_{k-1}a^{k-1}_{i_k}) & u_{n+1}v_{n+1}-f_k\\
           \end{array}\right), 
        \end{align*}
        where $\SSS_{k}=\{a_{i_1}, \ldots, a_{i_{k}}\}\subseteq \SSS$. 
        It then follows from Lemma \ref{lem.det of generalized Vandermone matrix} that 
        the determinant of $M_{k+1}$ is
        \begin{align*}
             & -g_{k-1} \det \left(\begin{array}{ccccc}
                    v_{i_1}  & v_{i_2}  &  \ldots & v_{i_{k}} & 0 \\
                    v_{i_1}a_{i_1}  & v_{i_2}a_{i_2} &  \ldots & v_{i_{k}}a_{i_{k}} & 0 \\
                    \vdots  & \vdots & \ddots & \vdots & \vdots\\
                    v_{i_1}a^{k-1}_{i_1} & v_{i_2}a^{k-1}_{i_2} &  \ldots & v_{i_{k}}a^{k-1}_{i_{k}} & 0\\
                    v_{i_1}a^{k}_{i_1} & v_{i_2}a^{k}_{i_2} &  \ldots & v_{i_{k}}a^{k}_{i_{k}} & 1\\
                \end{array} \right)+ \det \left(\begin{array}{ccccc}
                    v_{i_1}  & v_{i_2}  &  \ldots & v_{i_{k}} & 0 \\
                    v_{i_1}a_{i_1}  & v_{i_2}a_{i_2} &  \ldots & v_{i_{k}}a_{i_{k}} & 0 \\
                    \vdots  & \vdots & \ddots & \vdots & \vdots\\
                    v_{i_1}a^{k-2}_{i_1} & v_{i_2}a^{k-2}_{i_2} &  \ldots & v_{i_{k}}a^{k-2}_{i_{k}} & 0\\
                    v_{i_1}a^{k}_{i_1} & v_{i_2}a^{k}_{i_2} &  \ldots & v_{i_{k}}a^{k}_{i_{k}} & 1\\
                    v_{i_1}g_{k+1} a^{k+1}_{i_1} & v_{i_2}g_{k+1} a^{k+1}_{i_2} &  \ldots & v_{i_{k}}g_{k+1} a^{k+1}_{i_{k}} & u_{n+1}v_{n+1}-f_k\\
                \end{array} \right) \\ 
                = & -\prod_{j=1}^{k}v_{i_j}\prod_{1\leq s<t\leq k}(a_{i_t}-a_{i_s})g_{k-1}+\\
                  & \quad \quad \prod_{j=1}^{k}v_{i_j}
                \left(
                    -g_{k+1} \det \left(\begin{array}{cccc}
                        1  & 1  & \ldots & 1  \\
                        a_{i_1}  & a_{i_2} & \ldots & a_{i_{k}}  \\
                        \vdots  & \vdots & \ddots  & \vdots\\
                        a^{k-2}_{i_1} & a^{k-2}_{i_2} & \ldots & a^{k-2}_{i_{k}} \\
                        a^{k+1}_{i_1} & a^{k+1}_{i_2} & \ldots & a^{k+1}_{i_{k}} \\
                    \end{array} \right)
                    +
                    (u_{n+1}v_{n+1}-f_k) \det\left(\begin{array}{cccc}
                        1  & 1  & \ldots & 1  \\
                        a_{i_1}  & a_{i_2} & \ldots & a_{i_{k}}  \\
                        \vdots  & \vdots & \ddots  & \vdots\\
                        a^{k-2}_{i_1} & a^{k-2}_{i_2} & \ldots & a^{k-2}_{i_{k}} \\
                        a^{k}_{i_1} & a^{k}_{i_2} & \ldots & a^{k}_{i_{k}} \\
                    \end{array} \right)
                \right)\\                
                = & \prod_{j=1}^{k} v_{i_j} \prod_{1\leq s<t\leq k}(a_{i_t}-a_{i_s})
                \left(
                    -g_{k-1}-
                    g_{k+1}\det\left(
                        \begin{array}{cc}
                            \sigma_1(\SSS_k) & \sigma_0(\SSS_k) \\
                            \sigma_2(\SSS_k) & \sigma_1(\SSS_k)
                        \end{array}
                    \right)+
                    (u_{n+1}v_{n+1}-f_k)\sigma_1(\SSS_k)
                \right)\\
                = & \prod_{j=1}^{k} v_{i_j} \prod_{1\leq s<t\leq k}(a_{i_t}-a_{i_s})
                \left(
                g_{k+1}\left(\sigma_2(\SSS_k)-\sigma_1(\SSS_k)^2\right)+
                (u_{n+1}v_{n+1}-f_k)\sigma_1(\SSS_k)-
                g_{k-1}
                \right).
         \end{align*}
         Since $\prod_{j=1}^{k} v_{i_j} \prod_{1\leq s<t\leq k}(a_{i_t}-a_{i_s})\neq 0$ 
         for any $\SSS_{k}=\{a_{i_1}, a_{i_2}, \ldots, a_{i_{k}}\}\subseteq \SSS$, we deduce that 
         $\det(M_{k+1})\neq 0$ if and only if 
         $$
         g_{k-1}\neq g_{k+1}\left(\sigma_2(\SSS_k)-\sigma_1(\SSS_k)^2\right)+(u_{n+1}v_{n+1}-f_k)\sigma_1(\SSS_k).
         $$

        {\bf Case 2:} {\em $M_{k+1}$ does not contain the last column 
        of the matrix given in \eqref{eq.matrix3_deephole}.} 
        Hence, by an argument similar to {\bf Case 1}, we get that  
            $$\det(M_{k+1})  =  \prod_{j=1}^{k+1} v_{i_j} \prod_{1\leq s<t\leq k+1}(a_{i_t}-a_{i_s})
                    \left( g_{k+1}\sigma_2(\SSS_{k+1})-g_{k-1} \right)$$ 
        and $\det(M_{k+1})\neq 0$ 
        if and only if $$g_{k-1}\neq g_{k+1}\sigma_2(\SSS_{k+1}).$$  
        
    Combining {\bf Cases 1} and {\bf 2}, it follows from Theorem \ref{th.deep hole} and Lemma \ref{lem.MDS} 
    that $(\mathbf{v},v_{n+1})\star \mathbf{g}$ is a deep hole of 
    the ESGRS code $\C_k(\SSS,{\bf v},\infty)$ 
    if and only if 
    \begin{align*}
        g_{k-1}\notin 
        \left\{g_{k+1}\left(\sigma_2(\SSS_{k})-\sigma_1(\SSS_{k})^2\right)+(u_{n+1}v_{n+1}-f_k)\sigma_1(\SSS_{k}):~\SSS_{k} \subseteq \SSS   \right\} 
        \cup \left\{g_{k+1}\sigma_2(\SSS_{k+1}):~ \SSS_{k+1} \subseteq \SSS   \right\}.
    \end{align*}
    This completes the proof. 
\end{IEEEproof}

\begin{remark}\label{rem.existence specific deep hole2}
    Let the notation be the same as that used in Theorem \ref{th.specific deep hole2}. 
    We have the following remarks for Theorem \ref{th.specific deep hole2}. 
    \begin{enumerate}
        \item 
        In Theorem \ref{th.specific deep hole2}, once 
        $g_{k+1}\in\F_q^*$, $u_{n+1},v_{n+1}\in\F_q$, and 
        $f(x)\in\mathcal{V}_k$ are fixed, the condition for 
        $(\mathbf{v},v_{n+1})\star \mathbf{g}$ to be a deep hole is equivalent 
        to choosing $g_{k-1}$ outside the following set 
        \begin{align*}
            \mathcal{L}
            =
            \left\{
            g_{k+1}\left(\sigma_2(\SSS_{k})-\sigma_1(\SSS_{k})^2\right)
            +(u_{n+1}v_{n+1}-f_k)\sigma_1(\SSS_{k}):~\SSS_{k} \subseteq \SSS
            \right\} 
            \cup
            \left\{
            g_{k+1}\sigma_2(\SSS_{k+1}):~\SSS_{k+1} \subseteq \SSS
            \right\}\subseteq \F_q.
        \end{align*}
        Hence, if $|\mathcal{L}|<q$, then there exists 
        $g_{k-1}\in\F_q\setminus\mathcal{L}$, and consequently 
        $(\mathbf{v},v_{n+1})\star \mathbf{g}$ is a deep hole of 
        $\C_k(\SSS,{\bf v},\infty)$. 
    A crude upper bound is $|\mathcal{L}|\leq \binom{n}{k}+\binom{n}{k+1}=\binom{n+1}{k+1}.$
    Hence, the numerical condition
    $
        q>\binom{n+1}{k+1} 
    $
    is sufficient for the existence of an admissible \(g_{k-1}\) in the second 
    class of deep holes. This condition can be quite restrictive when \(n\) is close to \(q\), 
    and should not be viewed as a necessary condition for obtaining explicit deep 
    holes. In fact, Theorem~\ref{th.specific deep hole1} already gives an explicit 
    class of deep holes without this restriction. 

        \item 
If one formally sets $g_{k+1}=0$ in the determinant computation of 
Theorem~\ref{th.specific deep hole2}, then the corresponding condition becomes
\[
g_{k-1}\notin 
\{0\}\cup
\left\{(u_{n+1}v_{n+1}-f_k)\sigma_1(\SSS_k):~\SSS_k\subseteq \SSS\right\}.
\]
Equivalently, this gives exactly the two alternatives in 
Theorem~\ref{th.specific deep hole1}. 
In other words, such a specialization recovers exactly the criterion in Theorem \ref{th.specific deep hole1}. 
We nevertheless state Theorem \ref{th.specific deep hole1} separately, 
since its proof also reveals the connection with EGRS codes and Roth-Lempel codes.    
\end{enumerate}
\end{remark}

\subsection{Three families of non-GRS MDS-NMDS codes with larger lengths and dimensions}

In this subsection, we apply deep holes obtained in subsection \ref{subsec.deep-holes} to derive 
more non-GRS MDS-NMDS codes, and explore their connections with Roth-Lempel codes.

Note that Li \emph{et al.} showed in Theorem~8 of \cite{LSZ2024} that an ESGRS code is not monomially equivalent to any GRS code,  
while the non-GRS property mentioned in this paper requires that the code is also not monomially equivalent to any EGRS code. 
In addition, for any $[n+1,k]_q$ EGRS code $\GRS_k(\SSS',{\bf v}',\infty)$ with $n\le q-1$, 
it is known that there exist $\overline{\SSS'}\subseteq \F_q$ and $\overline{{\bf v}'}\in(\F_q^*)^{n+1}$ such that
\begin{align}\label{eq.GRS-EGRS}
   \GRS_k(\SSS',{\bf v}',\infty)=\GRS_k(\overline{\SSS'},\overline{{\bf v}'}). 
\end{align}
Since GRS codes have lengths at most $q$, to prove that $\C_k(\SSS,{\bf v},\infty)$ is non-GRS in the context of this paper, 
it suffices to show that $\C_k(\F_q,{\bf v},\infty)$ is not monomially equivalent to any $[q+1,k,q-k+2]_q$ EGRS code.
This motivates the following result. 

\begin{lemma}\label{lem.q+1}
    A $[q+1,k]_q$ ESGRS code $\C_k(\F_q,{\bf v},\infty)$ is not monomially equivalent to any EGRS code. 
    Specifically, we have the following results. 
    \begin{enumerate}
        \item $\C_k(\F_q,{\bf v},\infty)$ is a $[q+1,k,q-k+1]_q$ NMDS code for any $(p,k)\neq (2,q-2)$, 
        where $p$ is the characteristic of $\F_q$.   
        \item $\C_{q-2}(\F_q,{\bf v},\infty)$ is a $[q+1,q-2,4]_q$ MDS code that is not monomially equivalent to any EGRS code 
        for even $q\geq 8$.   
    \end{enumerate}
\end{lemma}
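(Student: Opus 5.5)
The plan is to split the statement by the arithmetic of $\F_q$. When $\SSS=\F_q$ we have $n=q$, and Lemma~\ref{lem.must be MDS or NMDS} says that $\C_k(\F_q,{\bf v},\infty)$ is NMDS if $\F_q$ contains a $k$-zero-sum subset, and MDS otherwise. An NMDS code has $d=n-k$, whereas every EGRS code of the same parameters is MDS, so an NMDS code can never be monomially equivalent to an EGRS code. The first part, and the non-EGRS claim in the NMDS case, therefore reduce to a zero-sum question in $\F_q$. Only the MDS case $(p,k)=(2,q-2)$ then needs a separate structural argument.

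\textbf{Step 1 (zero-sum subsets of $\F_q$).} First I use that $\sum_{a\in\F_q}a=0$ for $q>2$. Hence a $k$-subset of $\F_q$ has zero sum if and only if its complement, a $(q-k)$-subset, has zero sum. For $k=q-2$ the complement is a pair $\{a,b\}$ with $a+b=0$.
\begin{itemize}
\item In odd characteristic, $\{a,-a\}$ with $a\neq 0$ works, so the code is NMDS.
\item In characteristic $2$, $a+b=0$ forces $a=b$, so $\F_q$ is $(q-2)$-zero-sum free and the code is MDS. This gives the exceptional case.
\end{itemize}
For the remaining range $3\le k\le q-3$ (together with $q-2$ in odd characteristic), I need to exhibit a $k$-zero-sum subset.
\begin{itemize}
\item The cleanest route is to invoke the Li--Wan counting formula for the number $N(k,0)$ of $k$-subsets of $\F_q$ with sum $0$, and check that $N(k,0)>0$ outside $(p,k)=(2,q-2)$ (and the trivial $k\in\{1,q-1\}$).
\item As a fallback I would give explicit constructions. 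In odd characteristic, take pairs $\{a,-a\}$, adding $\{0\}$ when $k$ is odd. In characteristic $2$, take an additive subgroup of size $4$ (sum $0$), or $\{a,b,a+b\}$, and pad with further disjoint zero-sum blocks. The complement symmetry handles large $k$.
\end{itemize}

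\textbf{Step 2 (the MDS case is not EGRS).} Let $q\ge8$ be even and $k=q-2$. I pass to the dual code, since monomial equivalence to an EGRS code is preserved under duality. The dual is a $[q+1,3,q-1]_q$ MDS code, i.e.\ a $(q+1)$-arc in $\mathrm{PG}(2,q)$, and this arc is a conic exactly when the code is EGRS.

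To identify the dual, I pair the monomial rows $x^0,\dots,x^{q-4},x^{q-2}$ against $x^j$ over $\F_q$, using the orthogonality relation $\sum_{a\in\F_q}a^{m}=0$ unless $(q-1)\mid m$ and $m>0$. Then I incorporate the extra coordinate $f_k$ and the multiplier ${\bf v}$. I expect the dual to be spanned by evaluations of a monomial set such as $\{1,x,x^{e}\}$ for a specific exponent $e$ (I anticipate something like $e=3$ up to reindexing), together with one extra point at infinity.

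The resulting arc is then $\{(1,t,t^{e}):t\in\F_q\}\cup\{P_\infty\}$. It is a conic only if the monomial $t^{e}$ is (projectively) equivalent to $t^2$. By Segre-type results on even $q$, a monomial $o$-polynomial $t^e$ gives a conic only for $e\in\{2,q/2\}$. For $q\ge8$ the exponent I obtain is neither of these, so the code is not EGRS. The case $q=4$ is excluded precisely because there these exponents coincide.

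\textbf{Main obstacle.} I expect the explicit computation of the dual in Step 2 to be the hardest part, specifically getting the exponent right and tracking the column at infinity. A cross-check is to compare with the Roth--Lempel structure via Lemma~\ref{lem.RL_code_MDS}. If the monomial identification fails, I would instead show non-equivalence directly. The idea is that a conic's $(q+1)$-arc has a $\mathrm{PGL}(2,q)$ stabilizer that is transitive on the points. I would check that, under the automorphisms of the arc (the maps $t\mapsto \lambda t$), the point at infinity is distinguished, e.g.\ by examining which points lie on the secants through the nucleus.
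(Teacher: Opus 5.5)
Your Step 1 is fine and follows the paper's reduction. The paper just quotes a known result on $k$-zero-sum subsets of $\F_q$, and your complement argument settles $k=q-2$ cleanly. Your observation that an NMDS code cannot be monomially equivalent to an MDS EGRS code is exactly how the non-EGRS claim is obtained outside $(p,k)=(2,q-2)$.

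\textbf{The gap is in Step 2.} Your predicted shape of the dual is wrong, and the correct shape defeats your planned test. Take ${\bf v}={\bf 1}$, which is harmless after rescaling coordinates. Use $\sum_{a\in\F_q}a^m=0$ for $0\le m\le q-2$, $\sum_a a^{q-1}=-1$ and $\sum_a a^q=0$. Then for $\deg h\le 2$ one gets $\sum_a a^ih(a)=0$ for $i\le q-4$ and $\sum_a a^{q-2}h(a)=-h_1$. Hence
\[
\C_{q-2}(\F_q,{\bf 1},\infty)^{\perp}=\{((h(a))_{a\in\F_q},\,h_1):~\deg h\le 2\},
\]
so the arc is $K=\{(1,t,t^2):t\in\F_q\}\cup\{(0,1,0)\}$.

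\textbf{Consequences for your test.} The exponent is exactly $e=2$, so your claim that it is ``neither $2$ nor $q/2$'' is false, and the monomial criterion gives no obstruction. Your guess $e=3$ is impossible anyway: for even $q$ the points $(1,a,a^3)$, $(1,b,b^3)$, $(1,a+b,(a+b)^3)$ are collinear, contradicting the MDS property from Step 1. The criterion also conflates ``the hyperoval is regular'' with ``the $(q+1)$-arc is a conic.'' Here $(0,1,0)$ is the nucleus $N$ of the conic $C:\,Y^2=XZ$: every tangent $t^2X+Z=0$, and also $X=0$, passes through it. So $K=(C\setminus\{(0,0,1)\})\cup\{N\}$ is a regular hyperoval minus a conic point, and everything hinges on where the extra point sits.

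\textbf{Your fallback does not settle this.} That $t\mapsto\lambda t$ fixes $P_\infty$ says nothing about the full stabilizer of $K$. ``Examining secants through the nucleus'' is not yet an argument.

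\textbf{The missing step is short.} If the code were EGRS, $K$ would be a projective image of a conic, hence a conic $C'$. But $C'$ contains the $q\ge 8$ points of $C\setminus\{(0,0,1)\}$, no three of which are collinear. Five such points determine a unique conic, so $C'=C$, contradicting $N\in C'\setminus C$. This is exactly where $q\ge 8$ enters.

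The paper argues differently, via a square-code invariant. Let $u_0,u_1,u_2$ be the dual codewords for $h=1,x,x^2$. Then $u_1\star u_1-u_0\star u_2=(0,\ldots,0,1)$, so the componentwise-product square of the dual has minimum distance $1$. By contrast, the square of any $[q+1,3]_q$ EGRS code is a $[q+1,5,q-3]_q$ EGRS code, whose minimum distance is at least $2$.
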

\begin{IEEEproof}
Recall that the condition $3\le k\le q-1$ is required for the code $\C_k(\F_q,{\bf v},\infty)$ to be well defined. 
Under this condition, it follows from \cite[Corollary~1]{HF2023} that, except for the special case 
$k=q-2$ with even $q\geq 8$, the finite field $\F_q$ contains a $k$-zero-sum subset. 
Hence, by Lemma~\ref{lem.must be MDS or NMDS}, the parameters of $\C_k(\F_q,{\bf v},\infty)$ presented in 1) and 2) hold. 

Given any $[n_1,k_1]_q$ linear code $\C$, we let 
$\C^2=\{(c_1^2,c_2^2,\ldots,c_{n_1}^2):~(c_1,c_2,\ldots,c_{n_1})\in \C\}$ 
be the square code of $\C$. 
Note that it has been proved in \cite[Theorem 7.2)]{LSZ2024} that $d((\C_{q-2}(\F_q,{\bf 1},\infty)^{\perp})^2)=1$ and 
it is well-known that $d((\EGRS_{q-2}(\F_q,{\bf v},\infty)^{\perp})^2)\geq 2$ for any ${\bf v}\in (\F_q^*)^{q}$ and $q\geq 5$ 
(see also \cite[Proposition 2.1(2)]{ZL2024}), 
where ${\bf 1}$ is the all-one vector of length $q$.  
Since monomially equivalent linear codes have monomially equivalent square codes, 
and the dual of an ESGRS code is monomially equivalent to another ESGRS code, 
it follows that $\C_{q-2}(\F_q,{\bf v},\infty)$ is not monomially equivalent to any EGRS code, as their square codes have different minimum distances.
This completes the proof.
\end{IEEEproof}

\begin{remark}
    Since duals of non-GRS MDS codes are also non-GRS MDS codes, 
    Lemma~\ref{lem.q+1} implies the existence of $[q+1,k,q-k+2]_{q}$ 
    non-GRS MDS codes for $k\in\{3,q-2\}$ and every even $q\geq 8$. 
    In particular, the $[q+1,3,q-1]_q$ non-GRS MDS code is not from the family of ESGRS codes. 
    The length $q+1$ in these codes is larger than the lengths obtained from several known 
    constructions under comparable settings, such as the length $\frac{q}{2}$ from twisted GRS codes 
    in~\cite{BPR2022}, the length $O(q^{1/k})$ from algebraic geometry codes with higher genus curves 
    under certain conditions in~\cite{C2024} and from polynomial evaluation codes with polynomial spaces 
    spanned by monomials in~\cite{JMXZ2025}, and the length $\frac{q}{2}+\sqrt{q}$ from elliptic curve codes 
    in~\cite{HR2023}. 

\end{remark}

\begin{theorem}{\rm (An improvement of \cite[Theorem \Rmnum{3}.7]{LSZ2024})}\label{th.non-GRS}
    Any $[n+1,k]_q$ ESGRS code $\C_k(\SSS,{\bf v},\infty)$ is non-GRS.  
\end{theorem}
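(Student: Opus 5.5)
Suppose, to the contrary, that $\C_k(\SSS,{\bf v},\infty)$ is monomially equivalent to a GRS code or an EGRS code. The plan is to embed it into the length-$(q+1)$ ESGRS code $\C_k(\F_q,{\bf v}',\infty)$ by adjoining the missing evaluation points, and then to contradict Lemma~\ref{lem.q+1}. If the code is MDS, then by \eqref{eq.GRS-EGRS} (when $n\le q-1$) it is enough to rule out equivalence to a GRS code. If the code is NMDS, it cannot be equivalent to any GRS or EGRS code at all, since those are MDS; so only the MDS case (i.e.\ $\SSS$ $k$-zero-sum free, by Lemma~\ref{lem.must be MDS or NMDS}) needs work. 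By \cite[Theorem~8]{LSZ2024}, ESGRS codes are never monomially equivalent to GRS codes. Hence the only remaining case is $n=q$, that is $\SSS=\F_q$, where an EGRS code of length $q+1$ has no GRS representation.

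Step 1 (reduction to $n<q$). For $n\le q-1$, any EGRS code of length $n+1\le q$ equals a GRS code by \eqref{eq.GRS-EGRS}. An equivalence with an EGRS code would therefore give an equivalence with a GRS code, which contradicts \cite[Theorem~8]{LSZ2024}. No further argument is needed in this range.

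Step 2 (the case $\SSS=\F_q$). Here $\C_k(\F_q,{\bf v},\infty)$ has length $q+1$, and GRS codes cannot have this length. Lemma~\ref{lem.q+1} states exactly that this code is not monomially equivalent to any EGRS code. Its proof covers both situations. For $(p,k)\ne(2,q-2)$ the code is NMDS, hence not MDS, hence not EGRS. For $k=q-2$ with $q$ even, the argument uses the square-code invariant: $d((\C^\perp)^2)=1$ for the ESGRS code versus $\ge 2$ for EGRS codes.

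Step 3 (reconciling the scaling vector in $q=2^m$, $k=q-2$). The expected obstacle is that Lemma~\ref{lem.q+1}.2) is proved via \cite[Theorem~7]{LSZ2024} only for ${\bf v}={\bf 1}$. So the plan must check that the square-code argument is insensitive to ${\bf v}$. The dual of $\C_{q-2}(\F_q,{\bf v},\infty)$ equals a diagonal scaling of the dual of $\C_{q-2}(\F_q,{\bf 1},\infty)$, because multiplying coordinates by nonzero scalars commutes with taking duals up to inverse scaling. Squaring a scaled code gives the scaled square code, and scaling preserves minimum distance. Hence $d((\C_{q-2}(\F_q,{\bf v},\infty)^\perp)^2)=1$ for every ${\bf v}$. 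Since monomial equivalence is compatible with duals and squares, this code cannot be equivalent to an EGRS code. Combining Steps 1–3 shows that every ESGRS code is non-GRS.
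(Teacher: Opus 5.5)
Your proposal is correct and follows the paper's proof. The paper also splits into $n\le q-1$, where \cite[Theorem~8]{LSZ2024} together with \eqref{eq.GRS-EGRS} (EGRS codes of length at most $q$ are GRS) excludes both GRS and EGRS, and $n=q$, where length $q+1$ rules out GRS and Lemma~\ref{lem.q+1} rules out EGRS. Your Step~3, which observes that $\C_{q-2}(\F_q,{\bf v},\infty)$ is a diagonal rescaling of $\C_{q-2}(\F_q,{\bf 1},\infty)$ so that the square-code invariant does not depend on ${\bf v}$, is a valid sharpening of a detail the paper leaves implicit in the proof of Lemma~\ref{lem.q+1}.
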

\begin{IEEEproof}
We consider the following two cases. 
\begin{itemize}
    \item If $n\le q-1$, then the length of 
$\C_k(\SSS,\vvv,\infty)$ is $n+1\le q$. 
As we discussed above, $\C_k(\SSS,\vvv,\infty)$ is not monomially equivalent to any GRS code by \cite[Theorem~8]{LSZ2024}. 
With \eqref{eq.GRS-EGRS}, we also know that every EGRS code of length at most $q$ is monomially equivalent to a GRS code. 
It then implies that $\C_k(\SSS,\vvv,\infty)$ is not monomially equivalent to any GRS or EGRS code.

\item If $n=q$, then $\C_k(\SSS,\vvv,\infty)=\C_k(\mathbb F_q,\vvv,\infty)$. 
Since every $q$-ary GRS code has length at most $q$, while 
$\C_k(\F_q,\vvv,\infty)$ has length $q+1$, it cannot be monomially 
equivalent to any GRS code.
Then combining with Lemma \ref{lem.q+1} directly yields the desired result.
\end{itemize}
This completes the proof. 
\end{IEEEproof}

Based on Theorems \ref{th.connection deep holes and extened codes} and \ref{th.non-GRS}, 
we now can apply Theorems \ref{th.specific deep hole1} and \ref{th.specific deep hole2} to derive 
more families of non-GRS MDS-NMDS codes. 
First, under the condition described in Theorem \ref{th.specific deep hole1}.1), 
we obtain a class of $[n+2,k+1]_q$ non-GRS MDS-NMDS codes,  
which are monomially equivalent to certain Roth-Lempel codes. 
In other words, some non-GRS MDS-NMDS Roth-Lempel codes can be reconstructed via deep holes of ESGRS codes as follows.

\begin{theorem}\label{th.more non-GRS MDS codes111}
    Suppose that $\mathbf{g}=(g(a_1),g(a_2),\ldots,g(a_{n}),u_{n+1})\in \F_q^{n+1}$, where  
    $g(x)=g_{k-1}x^{k-1}+f(x)\in \F_q[x]$ with $g_{k-1}\in \F_q^*$ and $f(x)\in \mathcal{V}_k$. 
    Let $\C'$ be the linear code generated by $\left(\begin{array}{cc}
        G_{\C_k(\SSS,{\bf v},\infty)} & {\bf 0}\\
        (\mathbf{v},v_{n+1})\star \mathbf{g} & 1
    \end{array}
     \right)$, 
     where $G_{\C_k(\SSS,{\bf v},\infty)}$ is a generator matrix of the ESGRS code $\C_k(\SSS,{\bf v},\infty)$ 
     given in \eqref{eq.ESGRS_generator_matrix}. 
     If $u_{n+1}v_{n+1}=f_k$ with $v_{n+1}\in \F_q$, then $\C'$ is monomially equivalent to 
     the Roth-Lempel code $\RL_{k+1,0}(\SSS)$.  Moreover, we have the following results. 
     \begin{enumerate}
        \item If $\SSS$ is $k$-zero-sum free, then $\C'$ is an $[n+2,k+1,n-k+2]_q$ non-GRS MDS code. 
        \item If $\SSS$ contains a $k$-zero-sum subset, then $\C'$ is an $[n+2,k+1,n-k+1]_q$ NMDS code.
     \end{enumerate}
     In summary, we obtain another family of non-GRS MDS-NMDS codes.
\end{theorem}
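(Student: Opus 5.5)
We first establish the monomial equivalence with $\RL_{k+1,0}(\SSS)$ by direct row reduction, and then derive items 1) and 2) from Theorem~\ref{th.connection deep holes and extened codes} applied with $\C=\C_k(\SSS,{\bf v},\infty)$.

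\textbf{Step 1: the generator matrix of $\C'$.} Since $f(x)\in\mathcal{V}_k$, the rows of $G_{\C_k(\SSS,{\bf v},\infty)}$ let us subtract from the last row the contributions $v_if(a_i)$ coming from the terms $f_0,\ldots,f_{k-2}$. The term $f_kx^k$ is cancelled by subtracting $f_k$ times the row $(v_1a_1^k,\ldots,v_na_n^k,1,0)$. This row operation turns the entry in position $n+1$ into $u_{n+1}v_{n+1}-f_k$, which equals $0$ by hypothesis. The entry in position $n+2$ stays equal to $1$. Finally we divide that row by $g_{k-1}\neq 0$. The result is that $\C'$ is generated by the rows
\[
(v_1a_1^j,\ldots,v_na_n^j,0,0)\ \ (0\le j\le k-2),\qquad (v_1a_1^k,\ldots,v_na_n^k,1,0),\qquad (v_1a_1^{k-1},\ldots,v_na_n^{k-1},0,g_{k-1}^{-1}).
\]
We reorder the last two rows and scale the first $n$ columns by $v_i^{-1}$ and column $n+2$ by $g_{k-1}$. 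This is a monomial transformation, and it produces exactly the matrix \eqref{eq.ESGRS_generator_matrix of Roth-Lempel code} with $k$ replaced by $k+1$ and $\delta=0$. Its last two columns are $(0,\ldots,0,1)^T$ and $(0,\ldots,1,0)^T$. Hence $\C'$ is monomially equivalent to $\RL_{k+1,0}(\SSS)$.

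\textbf{Step 2: the deep-hole hypothesis.} By Theorem~\ref{th.specific deep hole1}.1), the condition $u_{n+1}v_{n+1}=f_k$ makes $(\mathbf{v},v_{n+1})\star\mathbf{g}$ a deep hole of $\C_k(\SSS,{\bf v},\infty)$. Lemma~\ref{lem.covering radius} gives covering radius $n-k+1=(n+1)-k$, so the covering-radius hypothesis of Theorem~\ref{th.connection deep holes and extened codes} holds for this $[n+1,k]_q$ code.

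\textbf{Step 3: conclusions 1) and 2).} Lemma~\ref{lem.must be MDS or NMDS} says the ESGRS code is $[n+1,k,n-k+2]_q$ MDS when $\SSS$ is $k$-zero-sum free, and $[n+1,k,n-k+1]_q$ NMDS otherwise. Theorem~\ref{th.connection deep holes and extened codes} then gives that $\C'$ is $[n+2,k+1,n-k+2]_q$ MDS, respectively $[n+2,k+1,n-k+1]_q$ NMDS. By Theorem~\ref{th.non-GRS}, $\C_k(\SSS,{\bf v},\infty)$ is non-GRS, so the last clause of Theorem~\ref{th.connection deep holes and extened codes} makes $\C'$ non-GRS in the MDS case. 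This is also consistent with Lemma~\ref{lem.RL_code_MDS} with $\delta=0$, since $\SSS$ being an $(n,k,0)$-set means exactly that it is $k$-zero-sum free. Both cases occur for suitable $\SSS$, for example $\SSS=\F_q$ versus a $k$-zero-sum free set, as in the ESGRS family. So the family is MDS-NMDS.

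The only delicate point is the bookkeeping in Step 1: we must check that column $n+1$ of the appended row becomes exactly $u_{n+1}v_{n+1}-f_k$, and that the surviving nonzero entry sits in the last coordinate. This is what matches the Roth-Lempel columns, with $(0,\ldots,0,1)$ in the $x^{k}$ row and $\delta=0$.
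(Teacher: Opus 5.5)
Your proposal is correct and follows essentially the same route as the paper. The same row operations and column scalings reduce the generator matrix of $\C'$ to that of $\RL_{k+1,0}(\SSS)$, and items 1) and 2) come from combining Lemma~\ref{lem.covering radius}, Theorem~\ref{th.specific deep hole1}.1), Lemma~\ref{lem.must be MDS or NMDS}, Theorem~\ref{th.connection deep holes and extened codes} and Theorem~\ref{th.non-GRS}; the only difference is that you prove the monomial equivalence first, whereas the paper derives the parameters first.
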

\begin{IEEEproof}
    From Lemma \ref{lem.covering radius}, the ESGRS code $\C_k(\SSS,{\bf v},\infty)$ 
    has covering radius $\rho(\C_k(\SSS,{\bf v},\infty))=n-k+1$. 
    It then follows from Lemma \ref{lem.must be MDS or NMDS}, 
    and Theorems \ref{th.connection deep holes and extened codes},  
    \ref{th.specific deep hole1}.1) and \ref{th.non-GRS} that 
    $\C'$ is an $[n+2,k+1,n-k+2]_q$ non-GRS MDS code or an $[n+2,k+1,n-k+1]_q$ NMDS code 
    under the given conditions in result 1) or 2), respectively. 
    Since $u_{n+1}v_{n+1}=f_k$, we get the following matrix relationships 
\begin{align*}
    \left(\begin{array}{cc}
        G_{\C_k(\SSS,{\bf v},\infty)} & {\bf 0}\\
        (\mathbf{v},v_{n+1})\star \mathbf{g} & 1
    \end{array}
     \right) & \overset{r}{\backsimeq} 
             \left(\begin{array}{cccccc}
            v_1 & v_2 &  \ldots & v_{n} & 0 & 0\\ 
            v_1a_1  & v_2a_2  & \ldots & v_{n}a_{n} & 0 & 0\\
            \vdots & \vdots &  \ddots & \vdots & \vdots & \vdots\\ 
            v_1a_1^{k-2}  & v_2a_2^{k-2}  & \ldots & v_{n}a_{n}^{k-2} & 0 & 0\\ 
            v_1a_1^{k-1}  & v_2a_2^{k-1}  & \ldots & v_{n}a_{n}^{k-1} & g_{k-1}^{-1}(u_{n+1}v_{n+1}-f_k) & g_{k-1}^{-1} \\ 
            v_1a_1^{k}  & v_2a_2^{k}  & \ldots & v_{n}a_{n}^{k} & 1 & 0
        \end{array}\right) \\ & \overset{c}{\backsimeq} 
        \left(\begin{array}{cccccc}
            1 & 1 &  \ldots & 1 & 0 & 0\\ 
            a_1  & a_2  & \ldots & a_{n} & 0 & 0\\
            \vdots & \vdots &  \ddots & \vdots & \vdots & \vdots\\ 
            a_1^{k-2}  & a_2^{k-2}  & \ldots & a_{n}^{k-2} & 0 & 0\\ 
            a_1^{k-1}  & a_2^{k-1}  & \ldots & a_{n}^{k-1} & 0 & 1 \\ 
            a_1^{k}  & a_2^{k}  & \ldots & a_{n}^{k} & 1 & 0
        \end{array}\right),
    \end{align*}
    where ``$\overset{r}{\backsimeq}$" and ``$\overset{c}{\backsimeq}$" denote the row and column equivalences, respectively. 
    Note that the last matrix is exactly the generator matrix of the Roth-Lempel code $\RL_{k+1,0}(\SSS)$ given in 
    \eqref{eq.ESGRS_generator_matrix of Roth-Lempel code},
    which implies that $\C'$ is monomially equivalent to $\RL_{k+1,0}(\SSS)$. 
    This completes the proof. 
\end{IEEEproof}

The following theorem yields another two classes of $[n+2,k+1]_q$ 
non-GRS MDS-NMDS codes with respect to the conditions described in Theorems \ref{th.specific deep hole1}.2) 
and \ref{th.specific deep hole2}.

\begin{theorem}\label{th.more non-GRS MDS codes222}
    Suppose that $\mathbf{g}=(g(a_1),g(a_2),\ldots,g(a_{n}),u_{n+1})\in \F_q^{n+1}$, where  
    $g(x)=g_{k+1}x^{k+1}+g_{k-1}x^{k-1}+f(x)\in \F_q[x]$ with not both $g_{k+1}$ and $g_{k-1}$ equal to $0$, 
    and $f(x)\in \mathcal{V}_k$. 
    Let $\C'$ be the linear code generated by $\left(\begin{array}{cc}
        G_{\C_k(\SSS,{\bf v},\infty)} & {\bf 0}\\
        (\mathbf{v},v_{n+1})\star \mathbf{g} & 1
    \end{array}
     \right)$, 
     where $G_{\C_k(\SSS,{\bf v},\infty)}$ is a generator matrix of the ESGRS code $\C_k(\SSS,{\bf v},\infty)$ 
     given in \eqref{eq.ESGRS_generator_matrix}. 
     If one of the following conditions holds: 
     \begin{itemize}
        \item [\rmnum{1})] $g_{k+1}=0$, $g_{k-1}\neq 0$, 
        $\SSS$ is an  $(n,k,g_{k-1}(u_{n+1}v_{n+1}-f_k)^{-1})$-set, and $u_{n+1}v_{n+1}\neq f_k$ with $v_{n+1}\in \F_q$;
        \item [\rmnum{2})] $g_{k+1}\neq 0$, and 
        $$g_{k-1}\notin 
        \left\{g_{k+1}\left(\sigma_2(\SSS_{k})-\sigma_1(\SSS_{k})^2\right)+(u_{n+1}v_{n+1}-f_k)\sigma_1(\SSS_{k}):~\SSS_{k} \subseteq \SSS   \right\} 
        \cup \left\{g_{k+1}\sigma_2(\SSS_{k+1}):~ \SSS_{k+1} \subseteq \SSS   \right\}$$ 
        with $v_{n+1}\in \F_q$,  
     \end{itemize}
     then the following statements hold.
    \begin{enumerate}
            \item [1)] If $\SSS$ is $k$-zero-sum free, then $\C'$ is an $[n+2,k+1,n-k+2]_q$ non-GRS MDS code. 
            \item [2)] If $\SSS$ contains a $k$-zero-sum subset, then $\C'$ is an $[n+2,k+1,n-k+1]_q$ NMDS code.
    \end{enumerate}
    In summary, we obtain another two families of non-GRS MDS-NMDS codes.
\end{theorem}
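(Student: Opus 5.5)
The proof follows the template of Theorem~\ref{th.more non-GRS MDS codes111}: we show that the extending vector $(\mathbf{v},v_{n+1})\star\mathbf{g}$ is a deep hole of $\C_k(\SSS,{\bf v},\infty)$, and then invoke the general framework of Theorem~\ref{th.connection deep holes and extened codes}. By Lemma~\ref{lem.covering radius}, the ESGRS code $\C_k(\SSS,{\bf v},\infty)$ is an $[n+1,k]_q$ code with covering radius $\rho=n-k+1=(n+1)-k$. This is exactly the hypothesis $\rho(\C)=N-K$ required in Theorem~\ref{th.connection deep holes and extened codes}, with $N=n+1$ and $K=k$.

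We first verify the deep hole property under each condition.

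\emph{Condition} i). Here $g_{k+1}=0$, so $g(x)=g_{k-1}x^{k-1}+f(x)$ with $g_{k-1}\in\F_q^*$ and $f(x)\in\mathcal{V}_k$. This is precisely the setting of Theorem~\ref{th.specific deep hole1}. Since $u_{n+1}v_{n+1}\neq f_k$ and $\SSS$ is an $(n,k,g_{k-1}(u_{n+1}v_{n+1}-f_k)^{-1})$-set, alternative 2) of that theorem applies, so $(\mathbf{v},v_{n+1})\star\mathbf{g}$ is a deep hole.

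\emph{Condition} ii). Here $g_{k+1}\neq 0$ and $g_{k-1}\in\F_q$ is arbitrary. The excluded set for $g_{k-1}$ is exactly the one in Theorem~\ref{th.specific deep hole2}, so that theorem gives the deep hole property directly.

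There is one bookkeeping point. The hypothesis ``not both $g_{k+1}$ and $g_{k-1}$ equal to $0$'' together with i)/ii) exhausts the admissible cases. We never need the case $g_{k+1}=g_{k-1}=0$, which Theorem~\ref{th.deep hole g(x)} already rules out as a deep hole.

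Next we apply the framework. With $\C=\C_k(\SSS,{\bf v},\infty)$ and $\mathbf{u}=(\mathbf{v},v_{n+1})\star\mathbf{g}$, Theorem~\ref{th.connection deep holes and extened codes} says the following. $\C'$ is an $[n+2,k+1,n-k+2]_q$ MDS code if and only if $\C$ is an $[n+1,k,n-k+2]_q$ MDS code. Likewise, $\C'$ is an $[n+2,k+1,n-k+1]_q$ NMDS code if and only if $\C$ is an $[n+1,k,n-k+1]_q$ NMDS code. By Lemma~\ref{lem.must be MDS or NMDS}, $\C$ is MDS exactly when $\SSS$ is $k$-zero-sum free, and NMDS exactly when $\SSS$ contains a $k$-zero-sum subset. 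This yields statements 1) and 2).

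For the non-GRS claim in 1), Theorem~\ref{th.non-GRS} shows that every ESGRS code is non-GRS. The ``moreover'' part of Theorem~\ref{th.connection deep holes and extened codes}, whose proof shortens at the last coordinate, then transfers the non-GRS property to $\C'$.

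Since each of i) and ii) can produce both MDS and NMDS members depending on $\SSS$, each yields a family of non-GRS MDS-NMDS codes in the sense of Definition~\ref{def.MDS-NMDS}.

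The only potential obstacle is the careful matching of hypotheses. One must check that $v_{n+1}$ is allowed to be any element of $\F_q$, including $0$, in Theorems~\ref{th.specific deep hole1} and~\ref{th.specific deep hole2}; both theorems state it for any $v_{n+1}\in\F_q$, so this holds. One must also check that the parameter constraints $3\le k\le n-2\le q-2$ for ESGRS codes are inherited. No new determinant computations are needed.
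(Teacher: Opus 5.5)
Your proposal is correct and takes the same route as the paper. The paper likewise gets the deep-hole property from Theorem~\ref{th.specific deep hole1}.2) in case i) and from Theorem~\ref{th.specific deep hole2} in case ii), and then, as in the proof of Theorem~\ref{th.more non-GRS MDS codes111}, combines Lemma~\ref{lem.covering radius}, Lemma~\ref{lem.must be MDS or NMDS}, Theorem~\ref{th.connection deep holes and extened codes} and Theorem~\ref{th.non-GRS}. Your closing claim that each of i) and ii) actually produces both MDS and NMDS members is asserted without proof, but the paper's ``in summary'' sentence is equally informal, and this does not affect statements 1) and 2).
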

\begin{IEEEproof}
    By an argument similar to that used in the proof of Theorem \ref{th.more non-GRS MDS codes111}, 
    we immediately obtain that $\C'$ is either an $[n+2,k+1,n-k+2]_q$ non-GRS MDS code 
    or an $[n+2,k+1,n-k+1]_q$ NMDS code by combining Theorems \ref{th.specific deep hole1}.2) 
    and \ref{th.specific deep hole2}. 
    This completes the proof.  
\end{IEEEproof}

\begin{remark}\label{rem.answer_to_Problem1}
    We have two remarks for Theorems \ref{th.more non-GRS MDS codes111} 
and \ref{th.more non-GRS MDS codes222} as follows. 
\begin{enumerate}
    \item Note that Theorems~\ref{th.more non-GRS MDS codes111} and~\ref{th.more non-GRS MDS codes222} 
yield three families of $[n+2,k+1]_q$ non-GRS MDS-NMDS codes.
In particular, we prove that the code family constructed in Theorem~\ref{th.more non-GRS MDS codes111} 
is monomially equivalent to the Roth-Lempel codes $\RL_{k+1,0}(\SSS)$.
Consequently, it follows immediately that Roth-Lempel codes contain a distinguished subclass of non-GRS MDS-NMDS codes.

\item  Unlike for Theorem \ref{th.more non-GRS MDS codes111}, 
        it is not straightforward to determine the monomial equivalence of 
        non-GRS MDS codes $\C'$ derived from Theorem \ref{th.more non-GRS MDS codes222} 
        and Roth-Lempel codes. 
        Taking Theorem \ref{th.more non-GRS MDS codes222}.1) as an example, 
        it is easily seen that the $[n+2,k+1,n-k+2]_q$ non-GRS MDS code $\C'$ obtained  
        is further monomially equivalent to the linear code generated by the matrix 
            \begin{align*}
                G_{\C'}=\left(\begin{array}{cccccc}
                    1 & 1 &  \ldots & 1 & 0 & 0\\ 
                    a_1  & a_2 & \ldots & a_{n} & 0 & 0\\
                    \vdots & \vdots &  \ddots & \vdots & \vdots & \vdots\\ 
                    a_1^{k-2}  & a_2^{k-2} & \ldots & a_{n}^{k-2} & 0 & 0\\ 
                    a_1^{k-1}  & a_2^{k-1} & \ldots & a_{n}^{k-1} & 1 & 1 \\ 
                    a_1^{k}  & a_2^{k} & \ldots & a_{n}^{k} & 0 & g_{k-1}(u_{n+1}v_{n+1}-f_k)^{-1}
                \end{array}\right).
            \end{align*}
            Consider $\SSS=\{1, 2, 5, 6, 9\}\subseteq \F_{11}$, 
            $\vvv=(1,1,1,1,1)\in (\F_{11}^*)^5$, and 
            $g_{k-1}(u_{n+1}v_{n+1}-f_k)^{-1}=3$. 
            Verified by Magma \cite{Magma}, we have the following facts: 
            \begin{itemize}
                \item $\C'$ is a $[7,4,4]_{11}$ non-GRS MDS code that 
                is monomially equivalent to the Roth-Lempel code $\RL_{4,\delta}(\SSS)$ for $\delta\in \{3,7\}$; 
    
                \item $\C'$ is a $[7,4,4]_{11}$ non-GRS MDS code that 
                is not monomially equivalent to any Roth-Lempel code 
                $\RL_{4,\delta}(\SSS)$ for $\delta\in \{0,1,2,4,5,6,8,9,10\}$.
            \end{itemize}     
\end{enumerate}
\end{remark}

In the following, we provide an algorithm for obtaining more non-GRS MDS-NMDS codes based on 
Theorems \ref{th.more non-GRS MDS codes111} and \ref{th.more non-GRS MDS codes222}. 
The procedures for producing non-GRS MDS codes and NMDS codes are identical, except that
the input distance parameter is $d=n-k+2$ or $d=n-k+1$, respectively.
For conciseness, we present a unified algorithm in Algorithm~\ref{alg.1}.

\begin{theorem}\label{th.alg2}
    More non-GRS MDS $[n+2,k+1,n-k+2]_q$ codes and $[n+2,k+1,n-k+1]_q$ NMDS codes can be obtained 
    by Algorithm \ref{alg.1} in a unified way, respectively.
\end{theorem}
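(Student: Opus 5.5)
Algorithm~\ref{alg.1} is not displayed before this point, so I will take it to do the following. On input $q$, $n$, $k$ and a target distance $d\in\{n-k+2,\,n-k+1\}$, it chooses $\SSS\subseteq\F_q$ with $|\SSS|=n$ and $\vvv\in(\F_q^*)^n$, testing $k$-zero-sum freeness to match $d$. It then picks $f(x)\in\mathcal{V}_k$, $u_{n+1}$, $v_{n+1}$, and coefficients $g_{k-1},g_{k+1}$ according to one of three branches: condition 1) of Theorem~\ref{th.specific deep hole1}, condition 2) of that theorem, or the condition of Theorem~\ref{th.specific deep hole2}. Finally it outputs the matrix
$\begin{pmatrix} G_{\C_k(\SSS,{\bf v},\infty)} & {\bf 0}\\ (\mathbf{v},v_{n+1})\star \mathbf{g} & 1\end{pmatrix}$.

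The proof is a correctness argument. I will show that every output of the algorithm is a code with the stated parameters, and that each branch yields such a code whenever its input conditions can be met.

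\textbf{Step 1 (the input code is a valid ESGRS code with the right parameters).} Whatever $\SSS$ and $\vvv$ the algorithm selects, $\C_k(\SSS,{\bf v},\infty)$ is an $[n+1,k]_q$ ESGRS code. By Lemma~\ref{lem.must be MDS or NMDS} it is MDS when $\SSS$ is $k$-zero-sum free and NMDS otherwise, so the zero-sum test on $\SSS$ matches the requested $d$. By Theorem~\ref{th.non-GRS} the code is non-GRS, and by Lemma~\ref{lem.covering radius} its covering radius is $n-k+1$, which is (length) minus (dimension). So all hypotheses of Theorem~\ref{th.connection deep holes and extened codes} on the base code hold.

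\textbf{Step 2 (the appended vector is a deep hole).} In each branch, the chosen data satisfy exactly one of the following:
\begin{itemize}
\item Theorem~\ref{th.specific deep hole1}.1): $u_{n+1}v_{n+1}=f_k$;
\item Theorem~\ref{th.specific deep hole1}.2): $u_{n+1}v_{n+1}\neq f_k$ and $\SSS$ is an $(n,k,g_{k-1}(u_{n+1}v_{n+1}-f_k)^{-1})$-set;
\item Theorem~\ref{th.specific deep hole2}: $g_{k-1}$ avoids the set $\mathcal{L}$ of Remark~\ref{rem.existence specific deep hole2}.
\end{itemize}
In every case $(\mathbf{v},v_{n+1})\star\mathbf{g}$ is a deep hole. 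Theorems~\ref{th.more non-GRS MDS codes111} and~\ref{th.more non-GRS MDS codes222} then give that the output is an $[n+2,k+1,n-k+2]_q$ non-GRS MDS code or an $[n+2,k+1,n-k+1]_q$ NMDS code. Which one occurs depends only on the zero-sum property checked in Step~1, which is why the MDS and NMDS cases are handled uniformly with $d$ as an input parameter.

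\textbf{Step 3 (each branch can actually be executed).} This is the main obstacle. For the first branch I will take $f$ arbitrary and set $u_{n+1}=f_kv_{n+1}^{-1}$ with $v_{n+1}\neq0$; this is always possible. For the third branch, Remark~\ref{rem.existence specific deep hole2} shows that $q>\binom{n+1}{k+1}$ suffices for $\mathcal{L}\neq\F_q$, and otherwise the algorithm enumerates $\F_q$ and checks membership in $\mathcal{L}$. That enumeration is finite, costing $O(q+\binom{n+1}{k+1})$ evaluations of $\sigma_1$ and $\sigma_2$, in contrast with the $q^n$ search of Theorem~\ref{th.complexity_MKZ}. The second branch also reduces to finite bookkeeping. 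Its requirement is that $\SSS$ be an $(n,k,\delta)$-set, i.e.\ $\delta$ avoids the at most $\binom{n}{k}$ $k$-subset sums of $\SSS$; the algorithm picks such a $\delta\neq0$ if one exists. Given $\delta$, it chooses any nonzero $\lambda=u_{n+1}v_{n+1}-f_k$ and sets $g_{k-1}=\delta\lambda$.

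Finally I will note that the branches genuinely produce codes: since every admissible output is certified in Step~2, "more" codes means that the three branches give the families of Theorems~\ref{th.more non-GRS MDS codes111} and~\ref{th.more non-GRS MDS codes222}. This completes the plan.
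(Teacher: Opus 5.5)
Your proposal is correct and follows the paper's route. The paper's proof is just that each branch of Algorithm~\ref{alg.1} meets the hypotheses of Theorem~\ref{th.more non-GRS MDS codes111} or Theorem~\ref{th.more non-GRS MDS codes222}, which already package Lemma~\ref{lem.covering radius}, Theorem~\ref{th.non-GRS}, the deep-hole criteria and Theorem~\ref{th.connection deep holes and extened codes}, exactly as in your Steps~1--2. Your Step~3 goes beyond the paper, and its heading overstates slightly: only the first branch is guaranteed to produce output, since for particular data the $(n,k,\delta)$-set condition or $g_{k-1}\notin\mathcal{L}$ can be unsatisfiable (e.g.\ $\mathcal{L}=\F_{11}$ when $u_6v_6=4$ in Example~\ref{exam.5}), but that already suffices for the claim that such codes can be obtained.
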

\begin{IEEEproof}
    The desired result follows immediately from Theorems 
    \ref{th.more non-GRS MDS codes111} and \ref{th.more non-GRS MDS codes222}. 
    More details can be found in Algorithm \ref{alg.1}.  
\end{IEEEproof}

\begin{algorithm*}[t]
\caption{A unified algorithm for obtaining more non-GRS MDS-NMDS codes from Theorems~\ref{th.more non-GRS MDS codes111} and~\ref{th.more non-GRS MDS codes222}}
\label{alg.1}
\KwIn{An $[n+1,k,d]_q$ ESGRS code $\C_k(\SSS,\vvv,\infty)$, where $d\in\{n-k+2,\;n-k+1\}$}
\KwOut{An $[n+2,k+1,d]_q$ linear code $\C'$, which is either a non-GRS MDS code (if $d=n-k+2$) or an NMDS code (if $d=n-k+1$)}

\BlankLine
\For{$(g_{k+1}, g_{k-1}, f(x), u_{n+1}, v_{n+1})
      \in \F_q \times \F_q \times \mathcal{V}_k \times \F_q \times \F_q$}
{
$g(x)\gets g_{k+1}x^{k+1}+g_{k-1}x^{k-1}+f(x)$\;
${\bf g}\gets \bigl(g(a_1),g(a_2),\ldots,g(a_n),u_{n+1}\bigr)$\;
$f_k\gets$ the coefficient of $x^k$ in $f(x)$\;

$\C' \gets$ the linear code generated by
$\left(\begin{array}{cc}
        G_{\C_k(\SSS,{\bf v},\infty)} & {\bf 0}\\
        (\mathbf{v},v_{n+1})\star \mathbf{g} & 1
    \end{array}
     \right)$\;

\uIf{$g_{k+1}=0$, $g_{k-1}\neq 0$ and $u_{n+1}v_{n+1}=f_k$}{
\Return $\C'$ is an $[n+2,k+1,d]_q$ code that is monomially equivalent to the Roth-Lempel code $\RL_{k+1,0}(\SSS)$\;
}
\uElseIf{$g_{k+1}=0$, $g_{k-1}\neq 0$, $u_{n+1}v_{n+1}\neq f_k$, and $\SSS$ is an $(n,k,g_{k-1}(u_{n+1}v_{n+1}-f_k)^{-1})$-set}{
\Return $\C'$ is an $[n+2,k+1,d]_q$ code\;
}
\ElseIf{$g_{k+1}\neq 0$}{
$L_1\gets \emptyset$,\quad $L_2\gets \emptyset$\;

\ForEach{$\SSS_k\subseteq \SSS$ with $|\SSS_k|=k$}{
$L_1 \gets L_1 \cup \Bigl\{\, g_{k+1}\bigl(\sigma_2(\SSS_k)-\sigma_1(\SSS_k)^2\bigr)
+(u_{n+1}v_{n+1}-f_k)\sigma_1(\SSS_k)\,\Bigr\}$\;
}
\ForEach{$\SSS_{k+1}\subseteq \SSS$ with $|\SSS_{k+1}|=k+1$}{
$L_2 \gets L_2 \cup \Bigl\{\, g_{k+1}\sigma_2(\SSS_{k+1})\,\Bigr\}$\;
}
\If{$g_{k-1}\notin L_1\cup L_2$}{
\Return $\C'$ is an $[n+2,k+1,d]_q$ code\;
}
}

}
\end{algorithm*}

We end this section with the following two examples.

\begin{example}\label{exam.4}
    Let $q=11$, $\SSS=\{3, 4, 5, 6, 7\}\subseteq \F_{11}$, 
    and $\vvv=(1,1,1,1,1)\in (\F_{11}^*)^{5}$. 
    Note that $\SSS$ is $3$-zero-sum free. 
    It then follows from Lemma \ref{lem.must be MDS or NMDS} and Theorem \ref{th.non-GRS} 
    that the ESGRS code $\C_{3}(\SSS,\vvv,\infty)$ is 
    a $[6,3,4]_{11}$ non-GRS MDS code with generator matrix 
    $$
    G_3=\left(
        \begin{array}{ccccccc}
            1 & 1 & 1 & 1 & 1 & 0\\
            3 & 4 & 5 & 6 & 7 & 0\\
            5 & 9 & 4 & 7 & 2 & 1
        \end{array}
    \right).
    $$
    From Theorem \ref{lem.covering radius}, we have $$\rho(\C_{3}(\SSS,\vvv,\infty))=3.$$
     Assume that $g_4=0$, $g_2=3$, and $f(x)=4x^3+10x+7\in \mathcal{V}_3$. 
    Then $g(x)=4x^3+3x^2+10x+7$. 
    Furthermore, we have the following results. 
    \begin{enumerate}
        \item  According to Theorem \ref{th.specific deep hole1}.1), we deduce that 
        \begin{align*}
             (g(a_1),g(a_2),g(a_3),g(a_4),g(a_5),4) 
          =  (7,10,5,5,1,4)
        \end{align*}
        is a deep hole of $\C_{3}(\SSS,\vvv,\infty)$. 

        \item Note also that $\SSS$ is a $(5,3,\delta)$-set for $\delta\in \{0,8,9,10\}$. 
        It then follows from Theorem \ref{th.specific deep hole1}.2) that 
         \begin{align*}
             (g(a_1),g(a_2),g(a_3),g(a_4),g(a_5),u_{6}v_{6}) 
            = \left\{
                \begin{array}{ll}
                    (7,10,5,5,1,3), & {\rm if}~ 3(u_{6}v_{6}-4)^{-1}=8, \\ 
                    (7,10,5,5,1,8), & {\rm if}~ 3(u_{6}v_{6}-4)^{-1}=9, \\ 
                    (7,10,5,5,1,1), & {\rm if}~ 3(u_{6}v_{6}-4)^{-1}=10
                \end{array}
            \right.
        \end{align*}
       are three deep holes of $\C_{3}(\SSS,\vvv,\infty)$. 
       In addition, $\C_{3}(\SSS,\vvv,\infty)$ does not have 
       other deep holes related to the polynomial $g(x)$. 
        
        \item From Theorem \ref{th.more non-GRS MDS codes111}, 
        we directly obtain a $[7,4,4]_{11}$ non-GRS MDS code generated by the matrix 
        \begin{align*}
            \left(
                \begin{array}{cccccccc}
                    1 & 1 & 1 & 1 & 1 & 0 & 0\\
                    3 & 4 & 5 & 6 & 7 & 0 & 0\\
                    5 & 9 & 4 & 7 & 2 & 1 & 0\\
                    7 & 10 & 5 & 5 & 1 & 4 & 1
                \end{array}
            \right),
        \end{align*}
        which is monomially equivalent to the Roth-Lempel code $\RL_{4,0}(\SSS)$.

        \item From Theorem \ref{th.more non-GRS MDS codes222}.1), 
        we directly obtain three $[7,4,4]_{11}$ non-GRS MDS codes $\C'_1$, $\C'_2$, and $\C'_3$ 
        generated by the matrices 
        \begin{align*}
            \left(
                \begin{array}{cccccccc}
                    1 & 1 & 1 & 1 & 1 & 0 & 0\\
                    3 & 4 & 5 & 6 & 7 & 0 & 0\\
                    5 & 9 & 4 & 7 & 2 & 1 & 0\\
                    7 & 10 & 5 & 5 & 1 & 1 & 1
                \end{array}
            \right),~
            \left(
                \begin{array}{cccccccc}
                    1 & 1 & 1 & 1 & 1 & 0 & 0\\
                    3 & 4 & 5 & 6 & 7 & 0 & 0\\
                    5 & 9 & 4 & 7 & 2 & 1 & 0\\
                    7 & 10 & 5 & 5 & 1 & 3 & 1
                \end{array}
            \right),~{\rm and}~
            \left(
            \begin{array}{cccccccc}
                1 & 1 & 1 & 1 & 1 & 0 & 0\\
                3 & 4 & 5 & 6 & 7 & 0 & 0\\
                5 & 9 & 4 & 7 & 2 & 1 & 0\\
                7 & 10 & 5 & 5 & 1 & 8 & 1
            \end{array}
            \right),
        \end{align*}
        respectively. 
        Moreover, verified by Magma \cite{Magma}, we have the following facts:
        \begin{itemize}
            \item $\C'_1$ and $\C'_2$ are not monomially equivalent to 
            any Roth-Lempel code $\RL_{4,\delta}(\SSS)$ for any $\delta\in \F_{11}$;

            \item $\C'_3$ is monomially equivalent to the Roth-Lempel code 
            $\RL_{4,\delta}(\SSS)$ for $\delta\in \{9, 10\}$. 
        \end{itemize}
    \end{enumerate}
\end{example}

\begin{example}\label{exam.5}
    Let $\C_{3}(\SSS,\vvv,\infty)$ be the same $[6,3,4]_{11}$ non-GRS MDS ESGRS code as in Example \ref{exam.4}. 
    Assume that $g_{4}=2$ and $f(x)=3x^3+5x+2$. 
    Furthermore, we have the following results. 
    \begin{enumerate}
        \item     It can be checked that 
        \begin{align*}
            \begin{split}
                \left\{g_{4}\left(\sigma_2(\SSS_{3})-\sigma_1(\SSS_{3})^2\right)+(u_{6}v_{6}-f_3)\sigma_1(\SSS_{3}):~\SSS_{3} \subseteq \SSS   \right\} 
                \cup \left\{g_{4}\sigma_2(\SSS_{4}):~ \SSS_{4} \subseteq \SSS   \right\} 
            =  \left\{
                \begin{array}{ll}
                    \F_{11}\setminus \{8\}, & {\rm if}~ u_{6}v_{6}=0, \\
                    \F_{11}, & {\rm if}~ u_{6}v_{6}=4. 
                \end{array}
            \right.
            \end{split}
        \end{align*}
        The remaining cases $u_6v_6\in\F_{11}\setminus\{0,4\}$ can be computed in the same way. 
        For this example, we only use the two cases $u_6v_6=0$ and $u_6v_6=4$.           
        Hence, according to Theorem \ref{th.specific deep hole2}, for $u_{6}v_{6}=0$ and $g_2=8$, 
        $i.e.,$ $g(x)=2x^4+3x^3+8x^2+5x+2$, 
        \begin{align*}
            (g(a_1),g(a_2),g(a_3),g(a_4),g(a_5),u_{6}v_{6}) 
         =  (2,7,4,7,1,0)
       \end{align*}
       is a deep hole of $\C_{3}(\SSS,\vvv,\infty)$ 
       and $\C_{3}(\SSS,\vvv,\infty)$ does not have other deep holes associated with 
       $g(x)=2x^4+3x^3+g_2x^2+f_1x+f_0$ for any $g_2,f_0,f_1\in \F_{11}$ when $u_6v_6=4$.

    \item From Theorem \ref{th.more non-GRS MDS codes222}.2), 
    we directly obtain a $[7,4,4]_{11}$ non-GRS MDS code generated by the matrix 
    \begin{align*}
        \left(
            \begin{array}{cccccccc}
                1 & 1 & 1 & 1 & 1 & 0 & 0\\
                3 & 4 & 5 & 6 & 7 & 0 & 0\\
                5 & 9 & 4 & 7 & 2 & 1 & 0\\
                2 & 7 & 4 & 7 & 1 & 0 & 1
            \end{array}
        \right).
    \end{align*}
    Moreover, verified by Magma \cite{Magma}, 
    this non-GRS MDS code is not monomially equivalent to 
    any Roth-Lempel code $\RL_{4,\delta}(\SSS)$ for any $\delta\in \F_{11}$. 
    \end{enumerate}
\end{example}

\section{Conclusion}\label{sec.concluding remarks}

This paper addresses Question~\ref{ques.deep hole} by developing a unified framework 
that constructs new families of non-GRS MDS-NMDS codes from known ones via deep holes. 
We showed that more $[n+1,k+1]_q$ non-GRS MDS-NMDS codes 
can be directly obtained from known $[n,k]_q$ non-GRS MDS-NMDS codes with covering radius $n-k$ 
in Theorem \ref{th.connection deep holes and extened codes} and Corollary \ref{coro.family}, 
thereby enabling a simultaneous increase in both lengths and dimensions.  
We further reformulate this framework in terms of the second kind of extended codes 
\cite{SDC2024DM,SDC2024FFA} in Theorem~\ref{th.equivalent_framework}. 
{This formulation recovers a main result in~\cite{WDC2025} as a special case 
and provides a provable reduction in computational complexity compared with~\cite{MKZ2026}. 
It also reveals additional structural properties of the resulting codes.}  
As an application of this framework, we respectively determined the covering radius and 
two explicit classes of deep holes of ESGRS codes in Lemma \ref{lem.covering radius}, 
and Theorems \ref{th.specific deep hole1} and \ref{th.specific deep hole2}. 
With an improvement of the non-GRS property of ESGRS codes in Theorem \ref{th.non-GRS}, 
we obtained three more families of non-GRS MDS-NMDS codes in Theorems \ref{th.more non-GRS MDS codes111} 
and \ref{th.more non-GRS MDS codes222}. For easy use, we also provided an algorithm for generating these 
non-GRS MDS-NMDS codes in Theorem \ref{th.alg2} and Algorithm \ref{alg.1}. 
Finally, we discussed the monomial equivalence of these new non-GRS MDS-NMDS codes with Roth-Lempel codes and 
gave several examples to support our findings. 
It would also be interesting to determine other deep holes of ESGRS codes or explore 
new families of non-GRS MDS-NMDS codes with a described covering radius and explicit deep holes 
to derive more non-GRS MDS-NMDS codes via our framework.

\vfill

\end{document}